\newtheorem{prop}{Proposition}
\DeclareMathOperator*{\argmin}{argmin}
\DeclareMathAlphabet{\mathpzc}{OT1}{pzc}{m}{it}
\theoremstyle{definition}
 \newtheorem{step}{Step}
\title{Beyond unidimensional poverty analysis using distributional copula models for mixed ordered-continuous outcomes}
\author[1]{Maike Hohberg\thanks{Corresponding author: mhohber@uni-goettingen.de. The author received funding from the Ministry for Science and Culture of Lower Saxony as a part of the project ``Reducing Poverty Risks in Developing Countries'' and the German Science Foundation within the research project KN 922/9-1.}}
\author[2]{Francesco Donat\thanks{This paper should not be reported as representing the views of the Single Resolution Board. The views expressed are those of the authors and do not necessarily reflect those of the Board.}} 
\author[3]{Giampiero Marra}
\author[1]{Thomas Kneib} 
\affil[1]{Chair of Statistics, University of Goettingen, Germany}
\affil[2]{Single Resolution Board, Brussels, Belgium}
\affil[3]{Department of Statistical Science, University College London, UK}
\begin{document}
\maketitle

\begin{abstract}
\noindent Poverty is a multidimensional concept often comprising a monetary outcome and other welfare dimensions such as education, subjective well-being or health, that are measured on an ordinal scale. In applied research, multidimensional poverty is ubiquitously assessed by studying each poverty dimension independently in univariate regression models or by combining several poverty dimensions into a scalar index. This inhibits a thorough analysis of the potentially varying interdependence between the poverty dimensions. We propose a multivariate copula generalized additive model for location, scale and shape (copula GAMLSS or distributional copula model) to tackle this challenge. {By relating the copula parameter to covariates, we specifically examine if certain factors determine the dependence between poverty dimensions. Furthermore, specifying the full conditional bivariate distribution, allows us to derive several features such as poverty risks and dependence measures coherently from one model for different individuals.}
We demonstrate the approach by studying two important poverty dimensions: income and education. Since the level of education is measured on an ordinal scale while income is continuous, we extend the bivariate copula GAMLSS to the case of mixed ordered-continuous outcomes. The new model is integrated into the \texttt{GJRM} package in \texttt{R} and applied to data from Indonesia. Particular emphasis is given to the spatial variation of the income-education dependence and groups of individuals at risk of being simultaneously poor in both education and income dimensions.      
\end{abstract}

\section{Introduction}

Although poverty is widely regarded a multidimensional phenomenon  and poverty measures moving beyond a single monetary dimension -- such as the Multidimensional Poverty Index \citep[MPI,][]{Alkire.2012} -- have emerged, little progress has been made on \textit{analysing} poverty as a multidimensional concept. To study poverty at the micro level, univariate linear regression is the standard tool of the empirical economist. Despite their widespread use, however, univariate models for poverty analyses require either studying each poverty dimension separately in different equations, or using as response variable an index that subsumes all dimensions in a single number \citep[e.g.][]{Alkire.2018}. Both approaches neglect the interdependence between poverty dimensions and ignore that the dependence itself should be part of the analysis. In fact the level of poverty and well-being depends on the strength of the dependence \citep{Duclos.2006}: for example, lower tail dependence can explain persisting poverty where performing low in one dimension is strongly associated with a low outcome in the other dimensions.

To overcome such limitations, multivariate regression can be used to tackle multidimensionality in poverty analyses. The relationship between two or more outcomes can also modeled using copulas which have been proven to be useful and flexible tools in this regard \citep[see][for an introduction to copula theory]{Nelson.2006}. 
A second issue in poverty analysis concerns distributional aspects. Especially for program targeting and risk factor analysis, it is important that poverty studies move beyond the simple mean effects. In fact, concepts like vulnerability to poverty -- a forward-looking measure of individuals' exposure to poverty -- look at both the location and scale of the target distribution. Previous studies on vulnerability to poverty used a step-wise procedure to explicitly make the scale parameter dependent on covariates \citep[see][for recent works]{Gunther.2009, Calvo.2013, Nguyen.2015, Zereyesus.2016}. Another example is inequality, which has become growingly relevant for both the political agenda and for projects implemented in developing countries. The World Bank, for example, centers its shared prosperity initiative around the goal to reduce inequality \citep{Worldbank.2018}. Hence, it is necessary to analyse not only effects on the mean but also on the other parameters characterising the distribution of the outcomes of interest. Generalized additive models for location, scale, and shape \citep[GAMLSS,][]{Rigby.2005} are able to capture the effects of covariates on the whole conditional distribution of a single poverty dimension.

Both issues of multidimensionality and distributional aspects can be addressed with a combination of GAMLSS and multivariate copula models, also referred to as copula GAMLSS. These models are implemented in the \texttt{R} package \texttt{GJRM} \citep{GJRM} and comprise a wide range of potential marginal distributions (continuous, binary, discrete) and copulas. A Bayesian version of this model class is implemented in the software \texttt{BayesX} \citep{Belitz.2015} while \citet{Klein2016} provide the related literature. The advantage of embedding copula regression into GAMLSS is that each parameter of the marginals and the copula association parameter can be modeled to depend flexibly on covariates. This allows us to not only measure the strength of the dependence, which has been the focus of previous literature on interrelated poverty dimensions, but also to analyse which factors related to household location and composition drive this dependence. This latter aspect has not been previously considered in poverty studies.

When studying poverty, it often occurs that one dimension is reported in ordered categories whereas the other is continuous. For example, two possible dimensions of interest could be income (measured on the continuous scale) and the highest level of education, which is often assessed in ordered categories such as ``no schooling'', ``elementary school'', ``high school'', and ``higher education''. This is a very relevant case, especially in economics and poverty research where several outcomes are measured on the ordinal scale (health, education, subjective well-being, etc.). 

The aim of this paper is twofold. First, to theoretically extend copula GAMLSS to a mixed ordered-continuous case. Second, to practically demonstrate how multidimensional poverty analysis can benefit from flexible models that allow for covariate effects on the interdependence between the poverty dimensions. 

For the theoretical part, we rely on the latent variable approach relating the ordered categories to an underlying continuous variable as in \citet{Donat2018}. In this way we can follow the approach developed in  \citet{Marra.2017}, which estimates the copula dependence and marginal distribution parameters simultaneously within a penalized likelihood framework using a trust region algorithm. 
The new model is incorporated into the \texttt{R} package \texttt{GJRM} \citep{GJRM}.  

Regarding the application to multidimensional poverty, there is an extensive literature dealing with the measurement of multidimensional poverty. Yet, the methods proposed for analyzing multidimensional poverty, including its determinants and poverty profiles, are rather limited. For example, \citet{Alkire.2015} suggest employing Generalized Linear Models using a single number index as the response variable. To demonstrate how a more comprehensive poverty analysis can be conducted by researchers, the empirical study in this paper applies copula GAMLSS in this context. Our application deals with two important poverty dimensions that are interrelated: income and education. 
In many developing countries, there is potentially a vicious cycle of poor education and low income. This cycle is also called poverty trap and is a long-established concept in economics: capable children stay under-educated due to their parents' restricted resources and hence remain poor when grown-up \citep{BARHAM.1995}. Understanding what determines the interdependence between poverty dimensions helps designing strategies to interrupt this cycle. 
To this end, we model the income-education dependency in Indonesia and draw an in-depth picture of monetary and education poverty across the population. We address the following questions: 1) Which factors determine the distributions of household income per capita and individual education and their interdependence? 2) How does this dependence differ spatially across Indonesia? 3) What are the probabilities of being poorly educated \textit{and} income poor for different population's sub-groups? We will answer these questions using a rich dataset from Indonesia which is made publicly available by the RAND corporation \citep{RAND.2017}. 

The dependencies between different poverty dimensions have been widely addressed in the economics literature during the last two decades. However the literature on using copulas to model multidimensional poverty is scarce and, to the best of our knowledge, restricted to the \textit{measurement} of the strength of such dependence. Existing approaches do not place the model into a regression framework and hence neither relate the copula association parameter nor the other parameters characterising the marginal distributions to covariates. For example, \cite{Quinn.2007} quantifies the dependence between income and an ordinal health measure in four industrial countries. \citet{Decancq.2014} uses copula models to measure dependence over time between income, health, and schooling (all of them  assumed to be continuous) in Russia. A similar approach was used by \citet{Perez.2015} to study the dependence between income, material needs and work intensity in Spain. \citet{Kobus.2018} analyse the distributions of health and education. In contrast to \cite{Quinn.2007} and this paper, that make use of a latent variable approach to represent the ordered categories of education, \citet{Kobus.2018} overcome the unidentifiability issue when using copulas with discrete marginals by concordance ordering. In a Bayesian context, \cite{Tan.2018} re-construct the MPI using a one-factor copula model and data from East-Timur. These examples emphasize once more the importance of extending copula GAMLSS also to the case of mixed ordered-continuous outcomes when these models are applied to poverty analyses.


The remainder of the paper is organised as follows: Section \ref{sec:copulaGAMLSS} introduces a bivariate copula GAMLSS for mixed ordered and continuous outcomes. Section \ref{sec:infer} presents the estimation procedure. Finally, Section \ref{sec:app} studies poverty dimensions with copula GAMLSS using data from Indonesia  and discusses practical approaches to model selection. Section \ref{sec:concl} concludes the paper.  



\section{Model definition}\label{sec:copulaGAMLSS}

\subsection{Bivariate mixed ordered-continuous model}\label{sec:mixed_ordered_model} 

The model considered in this paper deals with a pair of random variables, $(Y_1,Y_2)^\prime$, with support $\mathcal R\times\mathds R$, where $(\mathcal R,\preceq)$ is a totally ordered set under the ordering relation $\preceq$. The elements of $\mathcal R$ are denoted by $r$ and represent the levels of the categorical variable $Y_1$, namely $\mathcal R:=\{1,\ldots,r,\ldots,R+1\}$ with $R+1<\infty$. The variable $Y_2$ is assumed to be continuous. In the case study of Section \ref{sec:app}, response $Y_2$ will represent the income and $Y_1$ the highest level of education attained by each individual surveyed.  

We are interested in building up a statistical model for the joint distribution of the response variables $(Y_1, Y_2)^\prime$ where their dependence structure is represented by means of a copula specification. The bivariate cumulative distribution function can then be written as
\begin{equation}
F_{12}(r,y_2)=\mathcal C(F_1(r),F_2(y_2))\in[0,1],\label{eq:BivCopula}
\end{equation}
where the copula function is $\mathcal C:[0,1]^2\longrightarrow[0,1]$, with $F_1(r):=\mathbb P(Y_1\preceq r)$ and $F_2(y_2):=\mathbb P(Y_2\le y_2)$ being the marginal distributions. A significant advantage of the copula representation is that it decomposes the joint distribution into two marginals distributions, that may come from different families, and a copula function $\mathcal C$ that binds them together. The dependence structure of the two marginals is captured by an association parameter $\gamma$ that is specific to the copula employed as described below. 

If both $F_1$ and $F_2$ are continuous, Sklar's theorem ensures that the copula function is uniquely determined \citep{sklar.1959}. However, since $Y_1$ is categorical in our case, the uniqueness of the copula does not apply directly. We address this limitation by representing the ordinal variable as a coarse version of a latent continuous variable.


Let $Y_1^*\in\mathds R$  denote the unobserved (or latent) continuous variable that drives the decision for the observed categories in $\mathcal R$. This continuous latent variable can be modeled as
\begin{align} \label{eq:lat_reg}
Y_{1}^*=  \mathbf x_{1}^\prime\boldsymbol\beta_1+\epsilon_{1},
\qquad  \epsilon_1 \stackrel{iid}{\sim} N(0,1),  
\end{align}
where $\boldsymbol \beta_1$ is a vector of regression coefficients, $\mathbf{x}_{1}$ a vector of covariates, and $\epsilon_{1}$ the error term with density $f_1^*$ and cumulative distribution function (CDF) $F_1^*$.
Later on, the latent variable in (\ref{eq:lat_reg}) will be placed into the more sophisticated GAMLSS framework, but this model formulation with only linear effects shall serve as a starting point. In line with \citet{McKelvey.1975}, the following observation rule linking the latent to the observed variable is applied:
\begin{align}\label{eq:thresholdMech} 
%
\{Y_1 = r\} \Longleftrightarrow \{\theta_{r-1}< Y_1^* \leq \theta_r\}, 
\qquad r = 1, \dots, R+1,
\end{align}
where $\theta_r$ is a cut point on the latent continuum related to the level $r$ of $Y_1$. We observe category $r$ if the latent variable is between the cutoffs $\theta_{r-1}$ and $\theta_r$. 
There is a total of $R+2$ cut points: $-\infty = \theta_0 <\theta_1 < \ldots <\theta_{R+1}= \infty$. However, only $R$ of them are estimable, namely $\{\theta_1, \ldots, \theta_R\}$.

%
%

To guarantee the monotonicity of the cut points, we apply the transformation $\theta_1^*:=\theta_1$ and $\theta_r^*:=\sqrt{\theta_r-\theta_{r-1}}$ for any $r>1$ \citep{Donat2017}. This implies that $\theta_r \ge \theta_{r-1}$ for any $r \in \mathcal R$ and $\theta_r \in \mathbb R$. However, the equality $\theta_r=\theta_{r-1}$ can be problematic in practice because it results in estimated parameters at the boundary of the parameter space. This happens, for example, wherever a given level of $Y_1$ has no observations in the sample \citep{Haberman.1980}. 

From equation (\ref{eq:lat_reg}) and (\ref{eq:thresholdMech}), we derive the cumulative link model 
\begin{align}\label{eq:cum_model}
\mathbb P(Y_{1} \preceq r) &= \mathbb P(Y_{1}^*\leq\theta_r) 
= \mathbb P(\epsilon_{1}\leq{\theta_r-\mathbf x_{1}^\prime\boldsymbol\beta_1}) 
:= F_1^*(\underbrace{\theta_r-\mathbf x_{1}^{\prime}\boldsymbol\beta_1}_{:=\eta_{1r}}),
\end{align}
where $\eta_{1r}$ is the predictor associated with the ordinal categorical response in the model. It depends on the observed level $r$ of $y_{1}$ through cut point $\theta_r$. In Section \ref{sec:cop_gamlss} the predictor $\eta_{1r}$ will be replaced with a generalized additive form.  With this information in hand, equation (\ref{eq:BivCopula}) can equivalently be written as 
%
%
\begin{align}\label{eq:biv_model}
F_{12}(r,y_2)=F_{12}^*(\eta_{1r},y_2)=
\mathcal C(F_1(r),F_2(y_2))
= \mathcal C(F_1^*(\eta_{1r}),F_2(y_2)). 
\end{align}
Since both marginals are now continuous, the applicability of Sklar's theorem in ensured.
%

Finally, deriving the analytical form of the density function $f_{12}^*$ yields
%
%
%
%
%
%
\begin{equation} 
f_{12}^*(\eta_{1r},y_2)=
\left\{
\begin{aligned}
& \frac{\partial \mathcal C(F_1^*(\eta_{1r}), F_2(y_2)) f_2(y_2)}{\partial F_2(y_2)} 
\qquad \textrm{for} \quad r = 1\\
& \left(\frac{\partial\mathcal C(F_1^*(\eta_{1r}),F_2(y_2))}{\partial F_2(y_2)}
-\frac{\partial\mathcal C(F_1^*(\eta_{1r-1}),F_2(y_2))}
{\partial F_2(y_2)}\right)
f_2(y_2) \qquad \textrm{for} \quad 1<r\leq R+1. \nonumber
\end{aligned}
\right.
\end{equation}

This will form the basis for the derivation of the penalized log-likelihood function in Section \ref{sec:estim}.

\subsection{Copula GAMLSS}\label{sec:cop_gamlss}

The bivariate copula model is embedded into the distributional regression framework to model flexibly both the dependence parameter and the marginal distributions. To this end, the response vector $\boldsymbol y_i = (y_{1i}^*, y_{2i})^{\prime}$, $i=1,\ldots,n,$ is assumed to follow a parametric distribution where potentially all parameters, except of the cut-points, are related to a regression predictor and consequently to covariates. 
We write the joint conditional density as $f_{12}^*(\vartheta_{1i}, \dots , \vartheta_{Ki}|\boldsymbol \nu_i)$, where the vector $\boldsymbol \nu_i$ collects any covariates associated to the parameters $\vartheta_{ki}, k= 1, \dots, K$ of density $f_{12}^*$.  
Accordingly, the distributional parameter vector  $\boldsymbol\vartheta_i=(\theta_1^*, \dots, \theta_R^*, \vartheta_{1i}, \dots, \vartheta_{Ki})^{\prime}$ includes the transformed cut-points $\{\theta_r^*\}$, the location parameter of the first marginal distribution,  all other distributional parameters related to the second marginal distribution, and the copula parameter $\gamma_i$. Subscript $i$ attached to parameters is made explicit to stress their potential dependence on individual-level covariates. For the ordinal response, logit and probit link functions can be applied and the scale parameter for density $f_1$ is set to one in order to achieve identification as for a probit/logit model. The second marginal distribution can be selected from a wide range of options that are available in \texttt{GJRM} and listed in \citet{Marra.2017}. At the current stage, some of them are not implemented for the mixed-ordinal case, but will be made available in the near future. 
In this paper we only consider one-parameter copulas; some available options are summarized in Table \ref{tab:mixed.copulae} (Appendix~\ref{sec:copulas}) although rotated versions are also implemented in \texttt{GJRM}. 
Since the copula parameter $\gamma_i$ is not directly comparable over different models, we relate it to the Kendall's $\tau$ which can be used for interpreting the dependence. For optimisation and modelling purposes, an appropriate transformation of the copula parameter, $\gamma_i^*$, is used in the estimation algorithm as highlighted in the last column of Table \ref{tab:mixed.copulae} (Appendix).


In the spirit of the GAMLSS approach, each distributional element in the parameter vector is related to an additive predictor via
\begin{align} \label{eq:predictor}
\vartheta_{ki} = h_k(\eta^{\vartheta_k}_i) \quad \textrm{and} \quad
 \eta^{\vartheta_k}_i = g_k(\vartheta_{ki}),  
 \end{align}
where $\eta^{\vartheta_k}_i \in \mathds R$ is the predictor belonging to distributional parameter $\vartheta_{ki}$, and $h_k=g_k^{-1}$ is a response function mapping the real line into the domain of $\vartheta_{ki}$. 

For the ordinal response, $\eta_{1ri}$ in equation (\ref{eq:cum_model}) can now be represented as $\eta^{\mu_1}_{ri} = \theta_r - \eta^{\mu_1}_i,$ where $\eta^{\mu_1}_i$ is a predictor as in (\ref{eq:predictor}). 
The predictor $\eta^{\vartheta_k}_i$ takes the additive form 
\begin{equation*}
\eta^{\vartheta_k}_i =  \sum^{J_k}_{j=1} s_{j}^{\vartheta_k}(\boldsymbol \nu_i),
\end{equation*}
%
%
where 
functions $s_j^{\vartheta_k}(\boldsymbol \nu_i), j = 1, \dots, J_k,$ can be chosen to model a range of different effects of (a subset) of explanatory variables $\boldsymbol \nu_i$. In particular,   
\begin{itemize}
\item Linear effects are represented by setting $s_j^{\vartheta_k}(\boldsymbol \nu_i) = \nu_{ji}^{\vartheta_k} {\beta}_j^{\vartheta_k}$, where $\nu_{ji}^{\vartheta_k}$ is a singleton element of $\boldsymbol \nu_i$ and $\beta_j^{\vartheta_k}$ a regression coefficient to be estimated. For the second marginal, this also includes an intercept with $s_j^{\vartheta_k}(\boldsymbol \nu_i) = \beta_0^{\vartheta_k}$ to denote the overall level of the predictor while for the ordinal equation the intercept is already accounted for by the cut-point $\theta_r$.
%
\item For continuous covariates, nonlinear effects are achieved by including smooth functions $s_j^{\vartheta_k }(\boldsymbol \nu_i)$
represented by penalized regression splines. \citet{ruppert.2003} and \citet{wood2017generalized} provide various definitions and options for computing basis functions and related penalties.  
\item An underlying spatial pattern can be accounted for by specifying 
spatial information such as geographical coordinates or administrative units in $\boldsymbol \nu_i$. Smoothing penalties can account for the neighbourhood structure and ensure that effects are similar for adjacent regions. \citet{rue.2005} interpret this penalty as the assumption that the vector of spatial effects for all regions follows a Gaussian Markov random field. 
\item If the data are clustered, random effects $s_j^{\vartheta_k }(\boldsymbol \nu_i) = \beta_{jc_i}^{\vartheta_k }$ can be included with $c_i$ denoting the cluster the observations are grouped into.
\end{itemize}
%
%
%


\section{Estimation} \label{sec:infer}

\subsection{Maximum penalized likelihood}

From the analytical expression of the bivariate density $f_{12}$ given in Section \ref{sec:mixed_ordered_model}, the model's log-likelihood function is derived as
\begin{equation}
\ell(\boldsymbol\beta)=\sum_{i=1}^n\left(\sum_{r\in\mathcal R}\mathds 1_{\{y_{1i}=r\}}\left(\log\{F_{12.2}(\eta_{1ri}, y_{2i})-F_{12.2}(\eta_{1r-1i}, y_{2i})\}\right)+\log\{f_2(y_{2i})\}\right),
\end{equation}
%
%
where $\mathds 1_{\{\cdot\}}$ is a Boolean operator that takes on value $1$ if condition $\{\cdot\}$ is verified, and $0$ otherwise.
We define 
\begin{align*}
F_{12.2}(\eta_{1ri}, y_{2i}):=\frac{\partial\mathcal C(F_1^*(\eta_{1ri}),F_2( y_{2i}))}{\partial F_2(y_{2i})} \qquad \textrm{with} \qquad
F_{12.2}(\eta_{1,1-1,i}, y_{2i}) = 0.
\end{align*}
The log-likelihood function is maximized with respect to the complete vector of regression coefficients $\boldsymbol\beta = (\theta^*_1, \dots, \theta^*_R, \boldsymbol \beta^{\vartheta_1}, \ldots, \boldsymbol \beta^{\vartheta_K})^{\prime}$. Each vector of regression coefficients $\boldsymbol \beta^{\vartheta_k}$ includes the coefficients for one parameter $\vartheta_k$.   

Embedding the model into the distributional regression framework with highly flexible predictors, including regression spline components, typically requires penalization to avoid overfitting. 
The penalized log-likelihood $\ell_p(\boldsymbol\beta)$ with ridge-type penalty can be written as 
\begin{align} \label{lik_pen}
\ell_p(\boldsymbol \beta) = \ell (\boldsymbol \beta) - \frac{1}{2} \boldsymbol\beta ^{\prime} \boldsymbol S_{\boldsymbol\lambda} \boldsymbol\beta,
\end{align}
where $\boldsymbol S_{\boldsymbol\lambda}$ 
is a block diagonal matrix consisting of the penalties associated to each model parameter. For un-penalized parameters (like the cut points or categorical covariates) the corresponding block of $\boldsymbol S_{\boldsymbol\lambda}$ is set to $\boldsymbol0$. Penalty matrices are associated with smoothing parameters $\boldsymbol \lambda = (\lambda_1^{\vartheta_1}, \ldots, \lambda_{J_K}^{\vartheta_K})^\prime $. 

\subsection{Parameter estimation using the trust region algorithm}  \label{sec:estim}

\cite{Marra.2017} proposed maximizing the penalized likelihood in equation  (\ref{lik_pen}) using a trust region algorithm with integrated automatic selection of the smoothing parameters. As in \citet{Radice2016}, \citet{Marra.2017} and \cite{klein.marra.2019}, the estimation proceeds in two steps:

\vspace{6pt}
\begin{step}
At iteration $a$, equation (\ref{lik_pen}) is maximized for a given parameter vector $\boldsymbol \beta^{[a]}$ holding $\boldsymbol \lambda^{[a]}$ fixed at a vector of values. A trust region algorithm is applied as follows
\begin{align}
\boldsymbol \beta^{[a+1]} = \boldsymbol \beta^{[a]} + \underbrace{\argmin_{\boldsymbol p: \| \boldsymbol p\| \leq \Delta^{[a]}} \breve{\ell}_p (\boldsymbol \beta^{[a]})}_{:=\boldsymbol p^{[a+1]} }  , \label{eq:lp_min} \\
\breve{\ell}_p(\boldsymbol \beta^{[a]}) :=
-\{\ell_p(\boldsymbol\beta^{[a]} 
+ \boldsymbol p^{\prime} \boldsymbol g_p(\boldsymbol \beta^{[a]}) 
+ \frac{1}{2} \boldsymbol p^{\prime} \boldsymbol {H}_p^{[a]} \boldsymbol{p} \},
\nonumber
\end{align}
where the Euclidean norm is denoted by $\|\cdot\|$ and $\Delta^{[a]}$ is the radius of the trust region. The radius is adjusted in each iteration \citep[see][for details]{Geyer.2015}. The gradient vector at iteration $a$ is given by $\boldsymbol g_p^{[a]} = \boldsymbol g^{[a]} - \boldsymbol S_{\boldsymbol\lambda} \boldsymbol\beta^{[a]}$ is  and $\boldsymbol H_p^{[a]}= \boldsymbol H^{[a]} - \boldsymbol S_{\boldsymbol\lambda}$ is the Hessian matrix - both penalized  by matrix $\boldsymbol S_{\boldsymbol\lambda}$. 

The vector $\boldsymbol g(\boldsymbol \beta^{[a]})$ consists of
\begin{align*}
\boldsymbol g^{[a]} (\boldsymbol \beta^{[a]}) 
=\left(\left.\frac{\partial \ell(\boldsymbol \beta)}{\partial \theta_1^*}\right |_{ \theta_1^* = \theta_1^{*[a]}},\ldots,
\left.\frac{\partial \ell(\boldsymbol \beta)}{\partial \theta_R^*}\right |_{ \theta_R^* =  \theta_R^{*[a]}},
\left.\frac{\partial \ell(\boldsymbol \beta)}{\partial \boldsymbol \beta^ {\vartheta_1}}\right |_{\boldsymbol \beta^{\vartheta_1} = \boldsymbol \beta^{\vartheta_1^{[a]}}}, 
\ldots,
\left.\frac{\partial \ell(\boldsymbol \beta)}{\partial \boldsymbol \beta^ {\vartheta_K}}\right |_{\boldsymbol \beta^{\vartheta_K} = \boldsymbol \beta^{\vartheta_K^{[a]}}}\right)^\prime
\end{align*}
and the elements of the Hessian matrix are 
\begin{align*}
\boldsymbol H(\boldsymbol \beta^{[a]})^{l,m} = \left.\frac{\partial^2 \ell(\boldsymbol \beta)}{\partial \boldsymbol \beta^l \partial \boldsymbol\beta^{m\prime}} \right| _{\boldsymbol \beta^l = \boldsymbol \beta^{l[a]}, \boldsymbol \beta^m = \boldsymbol \beta ^{m[a]}}, \qquad l,m = \vartheta_1, \ldots, \vartheta_K. 
\end{align*}

The second-order partial derivatives of the log-likelihood with respect to cut points $\theta_1^*,\ldots,\theta_R^*$ are derived similarly. At each iteration step, the minimization of equation (\ref{eq:lp_min}) uses a quadratic approximation of $\ell_p(\boldsymbol \beta^{[a]})$, and the solution $\boldsymbol p^{[a+1]}$ is chosen such that it falls within a trust region with centre $\boldsymbol \beta ^{[a]}$ and radius $\Delta^{[a]}$. 
%
\end{step}

\vspace{6pt}
\begin{step}
Holding the parameter vector value fixed at $\boldsymbol \beta^{[a+1]}$, the following problem is solved 
\begin{align} \label{eq:step2}
\boldsymbol \lambda^{[a+1]} = \argmin_{\boldsymbol \lambda}
\| \boldsymbol M^{[a+1]} - \boldsymbol A^{[a+1]} \boldsymbol M^{[a+1]} \|^2 
- Kn + 2\mathrm{tr}(\boldsymbol A^{[a+1]}),
\end{align}
where, after defining $\boldsymbol{\cal I}^{[a+1]} = - \boldsymbol H ^{[a+1]}$, the key quantities are
\begin{align*}
\boldsymbol M^{[a+1]} = \sqrt[]{\boldsymbol{\cal I} (\boldsymbol\beta^{[a+1]})} \boldsymbol \beta^{[a+1]}
+ \sqrt[]{\boldsymbol{\cal I} (\boldsymbol\beta^{[a+1]})}^{-1} \boldsymbol g (\boldsymbol \beta^{[a+1]}), \\
\boldsymbol A^{[a+1]} = \sqrt[]{\boldsymbol{\cal I} (\boldsymbol\beta^{[a+1]})}
(\boldsymbol{\cal I}(\boldsymbol \beta^{[a+1]}) + \boldsymbol S_{\boldsymbol\lambda})^{-1} \,\,
\sqrt[]{\boldsymbol{\cal I} (\boldsymbol\beta^{[a+1]})}, 
\end{align*}
$\mathrm{tr}(\boldsymbol A^{[a+1]})$ is the number of effective degrees of freedom (edf) of the penalized model while $K$ is the number of penalized parameters in vector $\boldsymbol \vartheta$.   
The expression in (\ref{eq:step2}) is solved using the method proposed by \citet{wood.2004}. The gradient vector $\boldsymbol g$ and the Hessian $\boldsymbol H$ are obtained as a side product in step 1. Both are analytically derived in a modular fashion for each parameter, see Appendix \ref{sec:apx_gradient} for details.     
\end{step}
%



Step 1 and 2 are iterated until they no longer improve the objective function, that is until the following criterion is met: 
\begin{align*}
\frac{|\ell(\boldsymbol \beta^{[a+1]}) - \ell(\boldsymbol \beta^{[a]})|}{0.1 + |\ell(\boldsymbol \beta^{[a+1]})|} < 1e^{-0.7}.
\end{align*}

To obtain the starting values for the marginals' parameters and the cut-off value, a generalized additive model is fitted using \texttt{gam()} \citep{wood2017generalized} or a GAMLSS using the \texttt{gamlss()} function within the \texttt{GJRM} package. A transformed Kendall's $\tau$ between the responses is used as a starting value for the copula parameter. Further details on the trust region algorithm and smoothing parameter selection can be found in Appendix \ref{sec:apx_trust} while asymptotic considerations on the proposed maximum penalized likelihood estimator are reported in Appendix \ref{sec:apx_asymp}. 


\subsection{Confidence intervals}
At convergence, reliable point-wise confidence intervals are constructed based on Bayesian large sample approximation as in \cite{wood2017generalized}, for generalized additive models (GAM), i.e.  
\begin{align*}
\hat{\boldsymbol \beta} \overset{a}{\sim} N(\boldsymbol \beta, - \boldsymbol H_p(\hat{\boldsymbol \beta})^{-1}).
\end{align*}
The result for the Bayesian covariance matrix $\boldsymbol V_{\boldsymbol\beta} = - \boldsymbol H_p^{-1}$ is an alternative to the frequentist covariance matrix $\boldsymbol V_{\hat{\boldsymbol\beta}} = - \boldsymbol H_p^{-1} \boldsymbol{H} \boldsymbol H_p^{-1}$. For unpenalized models, the two matrices are equal. Applying the Bayesian framework to the GAM or copula GAMLSS context, follows the notion that penalisation in the estimation implicitly assumes certain prior beliefs about the model's features \citep{wahba.1978}. In this view, a normal prior for the parameter vector $\boldsymbol\beta$, i.e. $f_{\boldsymbol\beta} \propto \exp(-1/2 \boldsymbol\beta^{\prime} \boldsymbol S_{\boldsymbol \lambda} \boldsymbol\beta)$ means that wiggly models are less likely than smoother ones \citep{wood2006}. 
\citet{Marra.Wood.2012} give a full justification for using the above approximation and show that $\boldsymbol{V}_{\boldsymbol{\beta}}$ gives close to across-the-function frequentist coverage probabilities since it includes bias and variance components in a frequentist sense, which is not the case for $\boldsymbol{V}_{\boldsymbol{\hat{\beta}}}$.  




To obtain intervals for non-linear functions of the model parameters (e.g.\ Kendall's $\tau$), \citet{Radice2016} simulate from the posterior distribution of $\boldsymbol \beta$ and give examples of interval construction. They propose the following procedure: 
\begin{enumerate}[leftmargin=1.3cm]
    \item[\textbf{Step 1}] Draw $n_{sim}$ random vectors $\tilde{\boldsymbol\beta}_m, m = 1 \dots, n_{sim}$,  from $\mathcal{N}(\hat{\boldsymbol\beta}, \hat{\boldsymbol V_{\boldsymbol \beta}})$.
\item [\textbf{Step 2}] Calculate $n_{sim}$ realizations of the function under consideration, say $R(\tilde{\boldsymbol\beta_m})$.
\item [\textbf{Step 3}] Calculate the $(\zeta/2)$-th and $(1-\zeta/2)$-th quantile of the realizations where $\zeta$ is typically set to 0.05. The confidence interval is then constructed as $CI_{1-\zeta} = [R(\tilde{\boldsymbol\beta_m)}_{\zeta/2}, R(\tilde{\boldsymbol\beta_m)})_{1-\zeta/2}]$
\end{enumerate}
A value of $n_{sim}$ equal to 100 typically produces reliable results although it can be increased if more precision is required.

\subsection{Simulation study}

To evaluate the effectiveness and implementation of the proposed methodology, we conducted a simulation study with four scenarios that differ in terms of the continuous marginal distribution and the copula specification. All four scenarios are assessed using sample sizes of $n = 1,000$, $3,000$ and $10,000$. The data generating process and detailed results can be found in Appendix~\ref{apx:simulation}. Our approach is able to capture the effect of both linear and nonlinear covariates fairly well and performance improves significantly with increasing sample size. {In addition to recovering the coefficients, we calculated the AIC in every simulation run for the bivariate model and the corresponding independence model. The share of a runs in which the bivariate model had a smaller AIC was 1, providing evidence for the ability of our model to identify dependence between the responses if this is indeed required by the data generating process. We refrain from detailled simulations concerning the selection of marginal distributions and/or copulas since these have been considered before in the literature on copula GAMLSS, albeit for the case of two continuous marginal distributions \citep[e.g.,][]{Marra.2017,Radice2016}.}




\section{Multidimensional poverty in Indonesia} \label{sec:app}

\subsection{The IFLS dataset}

To analyse poverty dimensions in a bivariate copula model and to identify 1) the determinants of the income-education relation, 2) its spatial distribution and  3) groups at risk of being both consumption and education poor, we rely on the most recent wave (IFLS 5) of the Indonesian Family Life Survey (IFLS). The IFLS is a publicly available, longitudinal survey on individual, household, and community level that is designed to study the health and socioeconomic situation of Indonesia’s population. The first wave was implemented in 1993 and covered individuals from 7,224 households representing 83 percent of the population from 13 out of 27 Indonesian provinces \citep{Strauss.2016}. The sample was drawn by stratifying the population on provinces and urban/rural areas before randomly selecting enumeration areas and households within the strata. Due to a large number of split-off households the sample grew up to 16,204 households interviewed in IFLS 5. 

In the IFLS, individuals of an IFLS-household older than 15 years were asked to fill in an “adult individual book” containing questionnaires on subjects such as income, education, employment, and subjective health. We use the level of education as the ordinal response variable, and income as the continuous response. Education is proxied by the highest educational institution attended and can take on five different levels: 1 "no schooling", 2 "primary school", 3 "middle school", 4 "high school", 5 "tertiary education". In analyses for developing countries, income is often proxied by expenditures for consumption. Expenditures are calculated at the household level and then divided by the number of household members. Our income variable is thus precisely expenditures per capita. Due to different price levels in the provinces, we used the province specific minimum wage to adjust expenditures across provinces. Data on the individuals from the “adult individual book” are extracted and merged with relevant information on the household head, such as gender and education, and complemented with information on the household's location, such as province or whether the household lives in an urban area. 
We only included complete cases and individuals from the age of 18 as most of them already attained or are studying towards their highest education level. The final dataset contains 32,884 individuals.      


\subsection{Model building}\label{sec:model_building}

Applying flexible bivariate copula GAMLSS requires the researcher to decide on the specification of multiple parameters, on the form of the continuous marginal distribution and of the copula.   

\subsubsection*{Continuous marginal distribution}

In line with, e.g. \citet{Klein.2015.multi} and \cite{Marra.2017}, we propose to use normalized quantile residuals for selecting the continuous marginal first. This allows us to assess graphically the appropriateness of the chosen distribution which could firstly be done using separate univariate models. A normalized quantile residual $\hat{q}_{mi}$ for the second, i.e. the continuous marginal, is defined as: 
\begin{align*}
\hat{q}_{2i} = \Phi^{-1}\{ \hat{F}_2 (y_{2i})\} \qquad \textrm{for } \, i= 1, \ldots, n,
\end{align*}
where $\hat{F}_2(\cdot)$ is the estimated marginal CDF for the continuous response component, and $\Phi^{-1}(\cdot)$ is the quantile function of a standard normal distribution. If $\hat{F}_2$ is close to the true distribution, $\hat{q}_{2i}$ approximately follows the standard normal distribution. Quantile residuals are fairly robust to the specification of the distribution parameters' specification \citep{Klein.2015.multi, Marra.2017}. 

We fit univariate models and select the model by inspecting the corresponding QQ-plots. Good distribution candidates  for income and expenditure are generally the lognormal distribution, the Singh-Maddala distribution and the Dagum distribution \citep[e.g.][]{Kleiber.2003}. Figure \ref{fig:qq} shows the QQ-plots for the univariate income model and all potential covariates using a lognormal and the Dagum distribution. Fitting the model with the Singh-Maddala distribution leads to converge failure, which may signal an inappropriate choice of the marginal distribution. The QQ-plots suggest an appropriate fit for the lognormal distribution and it is hence used. Note that once the final bivariate model is built, the QQ-plot for the continuous margin are re-examined. However, we find that the plot (shown in Appendix \ref{apx_application}) looks almost identical to Figure \ref{fig:qq} (left panel) indicating that a good fit for the continuous margin of the proposed copula model has been obtained.   
\begin{figure}[ht]
    \centering
    \begin{subfigure}{0.5\textwidth}
        \centering
        \includegraphics[scale=0.8]{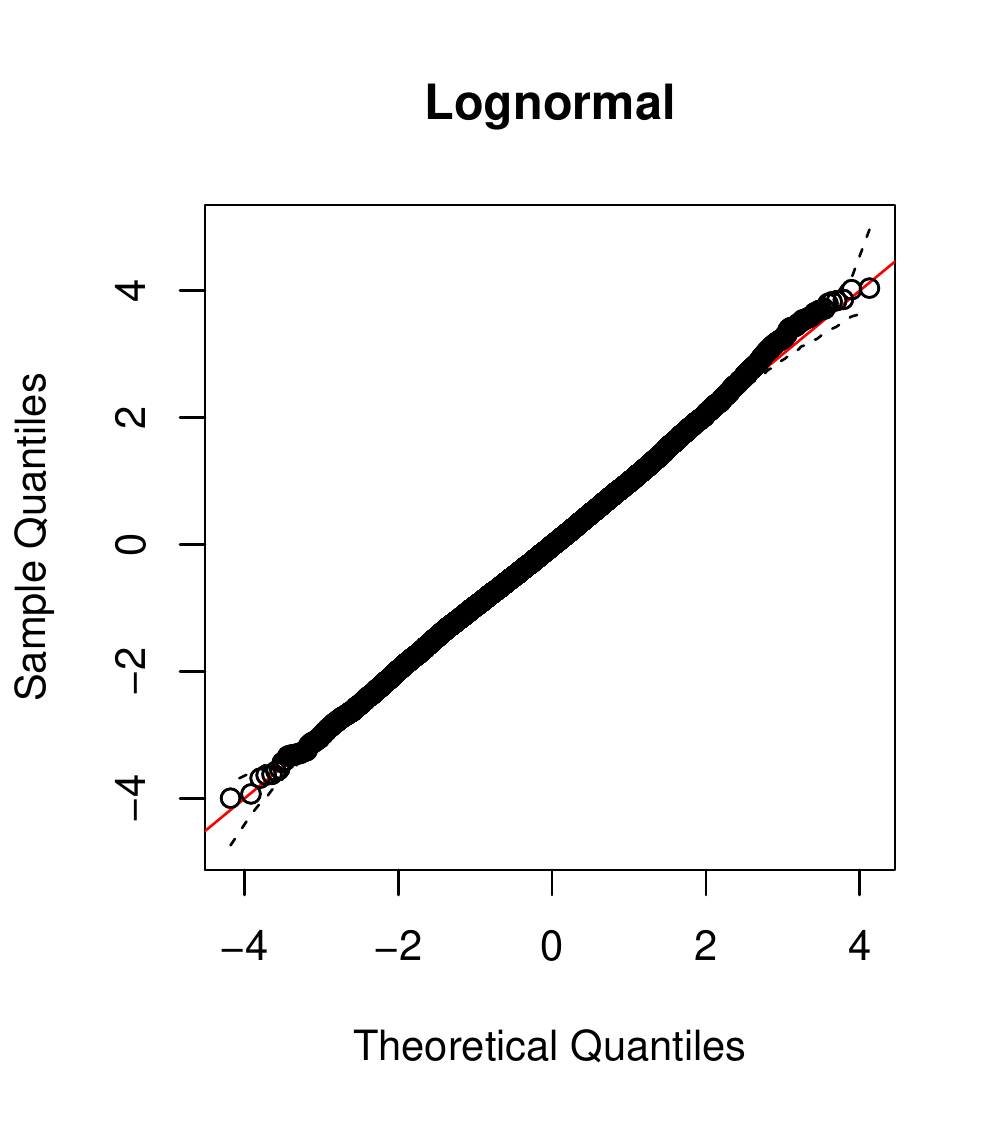}
    \end{subfigure}%
    \begin{subfigure}{0.5\textwidth}
        \centering
        \includegraphics[scale=0.8]{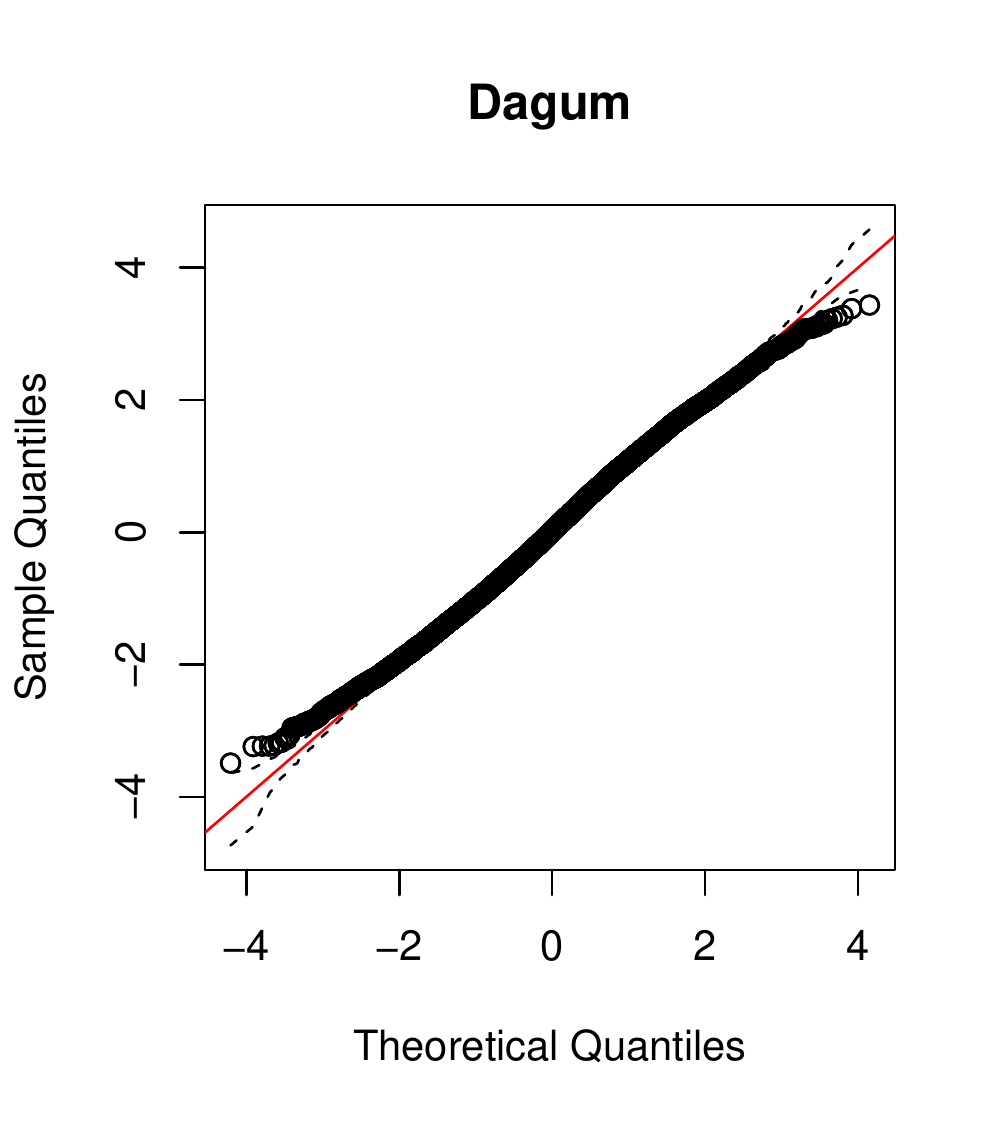}
    \end{subfigure}
    \caption{Normal QQ-plots for the univariate income model and different distributions with 95\% reference bands.}\label{fig:qq}
\end{figure}

\subsubsection*{Variable selection}

For the specification of the link function of the ordinal response, as well as variable selection for the bivariate model, and the choice of the copula function, the Akaike's Information Criterion (AIC) and the Bayesian Information Criterion (BIC) can be used. These are defined as
\begin{align*}
\textrm{AIC} &:= -2 \ell(\hat{\boldsymbol \beta}) + 2\mathrm{edf}, \\
\textrm{BIC} &:= -2 \ell(\hat{\boldsymbol \beta}) + \log(n)\mathrm{edf},
\end{align*}
where $\ell(\hat{\boldsymbol \beta})$ is the log likelihood of the bivariate model evaluated at the penalized parameter estimate and $\mathrm{edf} = \mathrm{tr}(\hat{\boldsymbol{A}})$ as defined in Section \ref{sec:estim}. Theoretical knowledge about the problem at hand  facilitates the variable selection procedure by pre-selecting candidate predictors. \cite{Radice2016} also suggest to start with a model specification where all distributional and the association parameter depend on all covariates. In case the algorithm does not converge, an instance that often indicates that the sample size is too small for the model's complexity, they recommend trying out a series of more parsimonious specifications. To test smooth components for equality to zero, we have adapted the results of \citet{wood2017generalized} to the current context.

To fit the bivariate model, we specify an equation for each distributional and the copula parameter 
as follows: We start with a set of variables selected according to economic reasoning. Note that often in income or expenditure equations, household size is used as a covariate in addition to number of children and elderly. However, we do not wish to separate the child effect in a ``pure'' child effect and children as additional household member effect. Moreover, the outcome variable is already adjusted for household size. Religion is included because it defines minority groups.
{While education on the individual level is part of the response vector, the level of education of the household head is included as a control in the predictor for  capita income.
For the bivariate model, we fit a full specification for the location parameter of each marginal distribution, i.e. ${\mu_1}_{educ}$ and ${\mu_2}_{inc}$ and perform variable selection using the AIC for the scale parameter  of the second marginal, ${\sigma_2}_{inc}$, and for the copula parameter, ${\gamma}$. More specifically, a backwards selection procedure is applied for ${\sigma_2}_{inc}$ given a full specification for ${\gamma}$ and then a second backwards selection is performed on ${\gamma}$ given the reduced model for ${\sigma_2}_{inc}$.}      
This excludes only four variables for the scale predictor and two variables for the copula parameter specification, i.e. we arrive at: 
\begin{alignat*}{2}
\eta^{\mu_1}_{educ} &=  \theta_r - \{ s(age) + 
                     \beta^{\mu_1}_{1} \cdot hhmarstat + 
                     \beta^{\mu_1}_{2} \cdot hhmale + 
                     \beta^{\mu_1}_{3} \cdot urban   + \\
                     & \quad \, \beta^{\mu_1}_{4} \cdot num\_child + 
                     \beta^{\mu_1}_{5} \cdot elderly + 
                     \beta^{\mu_1}_{6} \cdot relig \}  \\
\eta^{\mu_2}_{inc} &=  \beta^{\mu_2}_0 + s(age) + 
                    \beta^{\mu_2}_{1} \cdot hhmarstat + 
                    \beta^{\mu_2}_{2} \cdot hhmale + 
                    \beta^{\mu_2}_{3} \cdot urban  + \\
                    & \quad \, \beta^{\mu_2}_{4} \cdot num\_child + 
                    \beta^{\mu_2}_{5} \cdot elderly + 
                    \beta^{\mu_2}_{6} \cdot relig  + 
                    \beta^{\mu_2}_{7} \cdot hheduc    + 
                    s(prov) \\
\eta^{\sigma_2}_{inc}  &=  \beta^{\sigma_2}_0 + s(age) + 
                    \beta^{\sigma_2}_{1} \cdot hhmarstat + 
                    \beta^{\sigma_2}_{2} \cdot num\_child +                    
                    \beta^{\sigma_2}_{3} \cdot relig + 
                   s(prov) \\
\eta^{\gamma} &=  \beta^{\gamma}_0 + s(age) + 
                    \beta^{\gamma}_{1} \cdot hhmarstat + 
                    \beta^{\gamma}_{2} \cdot urban  + \\
                    & \quad \, \beta^{\gamma}_{3} \cdot num\_child + 
                    \beta^{\gamma}_{4} \cdot elderly + 
                    \beta^{\gamma}_{5} \cdot hheduc + 
                    s(prov).
\end{alignat*}

Continuous variables enter the equations with smooth non-parameteric effects \textit{s()} represented via thin plate regression splines with ten bases and second order derivative penalties. Spatial effects of the provinces and their neighbourhood structure are modeled using Markov random fields. We choose to model the spatial effect at the province level since minimum wages are set at the province level affecting individual wages and thus expenditure measure as well \citep{Hohberg.2015}.  

\subsubsection*{Ordinal model}

The ordinal outcome education is fitted using an ordered model. Table \ref{tab:probit} compares the AIC and BIC between a probit and a logit link of the bivariate model using the lognormal as the continuous marginal and a Gaussian copula. Both AIC and BIC favor the logit model for the first marginal.  

\begin{table}[ht]
    \centering
    \caption{AIC and BIC of bivariate ordered-continuous model using the logit and probit links.}
    \begin{tabular}{lcc}
    \hline
         & AIC & BIC  \\ 
         \hline
    logit & 1,075,191 & 1,076,241\\
    probit & 1,075,588 & 1,076,637 \\
    \hline 
    \end{tabular}
    \label{tab:probit}
\end{table}

\subsubsection*{Choice of the copula}

For the copula selection, a good starting point would be the use of a Gaussian copula and then consider all consistent alternatives depending on the direction of the dependence \citep{Radice2016, klein.marra.2019}. Again, AIC and BIC can help choosing among several candidate copulas.    

Starting off with the Gaussian yields an average value for the copula parameter (with 95\% confidence interval in brackets) of $\gamma = 0.163~(0.104,0.221)$. Building on this finding, we test a range of suitable possible candidates. After checking convergence, we can only eliminate the un-rotated Joe and Clayton copula.  
The remaining candidates  are  compared using the AIC and BIC; see Table \ref{tab:cop_AIC} for the results. The AIC and BIC indicate that a Gaussian copula should be used for our model, and all copula models should be favoured over the independence model. Using the Gaussian copula suggests that the dependence between per capita expenditures and education is symmetric with asymptotically independent extremes. 

\begin{table}[ht]
\centering
\caption{AIC and BIC for different copula specifications.} \label{tab:cop_AIC}
\begin{tabular}{rrr}
  \toprule
 & AIC & BIC \\ 
  \midrule
 Gaussian & 1,075,191 & 1,076,241 \\ 
  F & 1,075,233 & 1,076,295 \\ 
  FGM & 1,075,280 & 1,076,335 \\ 
  PL & 1,075,226 & 1,076,286 \\ 
  AMH & 1,075,298 & 1,076,359 \\ 
  C0 & 1,075,448 & 1,076,508 \\ 
  C180 & 1,075,379 & 1,076,380 \\ 
  J0 & 1,075,514 & 1,076,462 \\ 
  J180 & 1,075,516 & 1,076,573 \\ 
  G0 & 1,075,351 & 1,076,360 \\ 
  G180 & 1,075,306 & 1,076,341 \\ 
  Independence & 1,075,892 & 1,076,678 \\
    \bottomrule
\multicolumn{3}{p{8cm}}{ Note:  Abbreviations correspond to Frank, Farlie-Gumbel-Morgenstern, Plackett, Ali-Mikhail-Haq, Clayton, rotated Clayton (180 degrees),  Joe, rotated Joe (180 degrees), Gumbel, rotated Gumbel (180 degrees), respectively.
}  
\end{tabular}
\end{table}



\subsection{Model evaluation}
After deciding on the marginal distributions, the copula and covariates by comparing different candidates, we check the final bivariate model. To this end, we use a multivariate generalization of the quantile residuals introduced in Section \ref{sec:model_building} that was proposed by \citep{Kalliovirta.2008}. 
Multivariate quantile residuals for two continuous responses are defined as
\begin{align*}
\hat{\boldsymbol{q}}_i = 
\begin{pmatrix}
\hat{q}_{1i} \\ \hat{q}_{2i}
\end{pmatrix} 
= 
\begin{pmatrix}
\Phi^{-1}(\hat{F}_1(y_{1i})) \\ \Phi^{-1}(\hat{F}_{2|1}(y_{2i}|y_{1i})), 
\end{pmatrix} 
\end{align*}
where $\hat{F}_{2|1}$ is the (estimated) conditional CDF of $Y_2$ given $Y_1$. In our case, the first marginal is discrete such that we resort to randomized quantile residuals where uniformly distributed random variables on the interval corresponding to cumulated probabilities are plugged into $\Phi^{-1}(\cdot)$.
If the model is correctly specified, then $\hat{\boldsymbol{q}}$ approximately follows a bivariate standard normal distribution.

The contour plot for the bivariate model in Figure \ref{fig:mod_eval} shows the density of the quantile residuals $\hat{\boldsymbol{q}}$ by means of a multivariate kernel density estimator. This estimated density is compared to the density of the standard normal distribution. The contour lines of both densities are close to each other indicating a good fit of the bivariate copula model.      
\begin{figure}[h]
  \begin{subfigure}[t]{0.5\textwidth}
   \centering
           \includegraphics[scale=0.75]{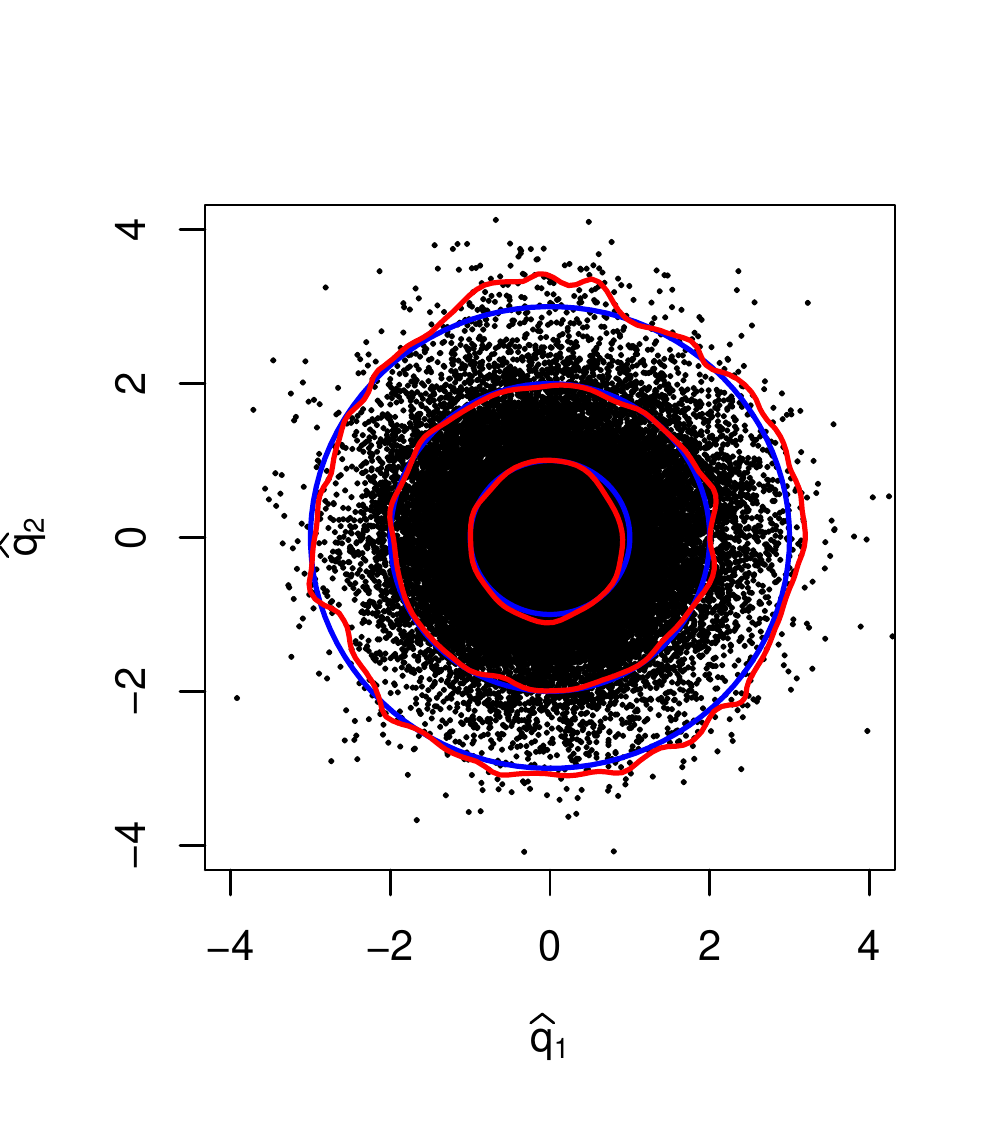}
        \caption{Contour plot of multivariate quantile residuals. The red lines indicate the density of the quantile residuals estimated by a multivariate kernel density estimator. The blue circles are the contour lines of the density of the standard normal distribution with radius $1,2$ and $3$.  }\label{fig:mod_eval}
\end{subfigure}\hspace{3mm}
\begin{subfigure}[t]{0.5\textwidth}
 \centering
           \includegraphics[scale=0.75]{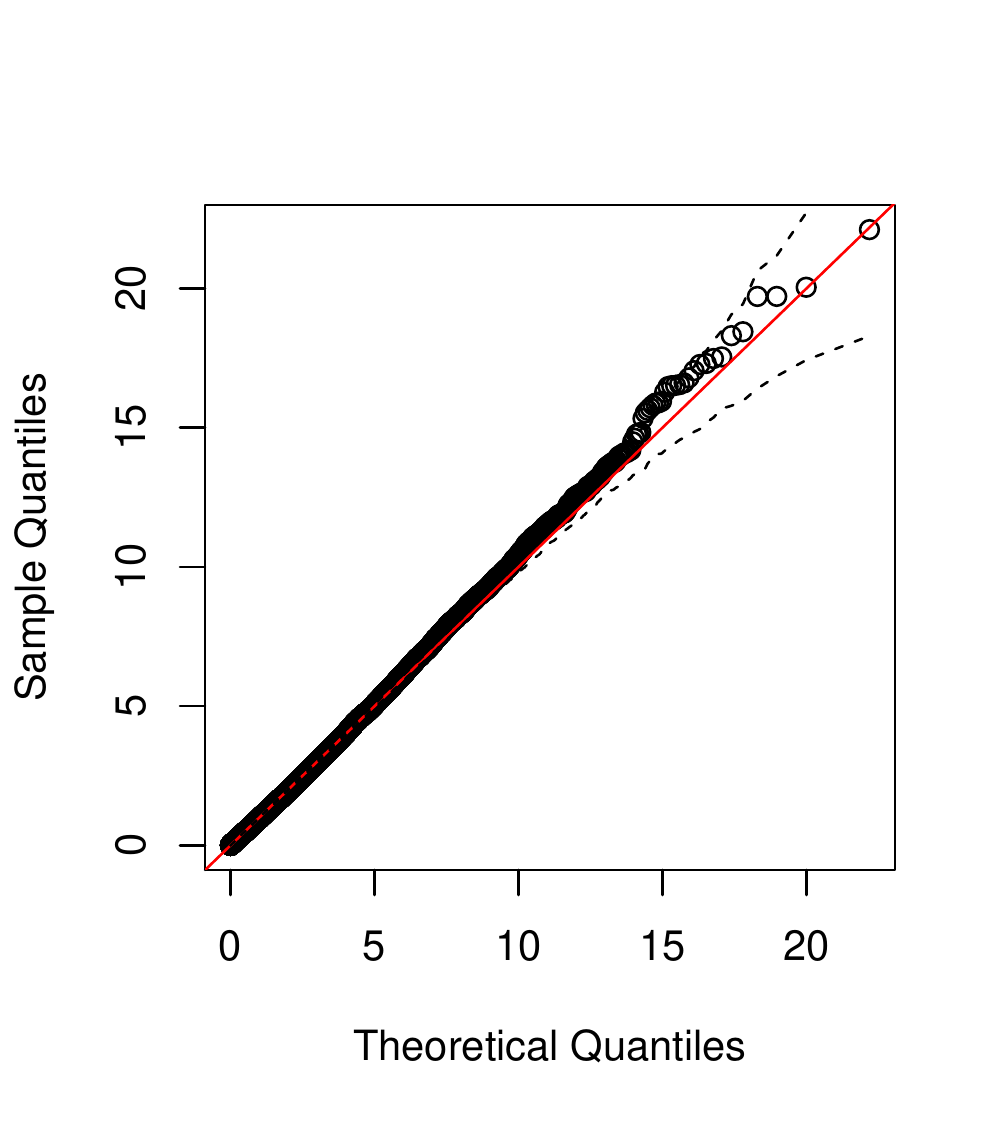}
       \caption{QQ-plot depicting the sum of the squared elements of the multivariate quantile residuals with 95\% reference bands.}\label{fig:mod_eval2}
\end{subfigure}
\caption{Multivariate quantile residuals of the bivariate model.}
\end{figure}

In Figure \ref{fig:mod_eval2}, the sum of the squared elements of the multivariate quantile residuals are considered. That is, $\hat{\boldsymbol{q}}_i^{\prime} \hat{\boldsymbol{q}}_i = \hat{q}_{1i}^2 + \hat{q}_{2i}^2,$ where $\hat{\boldsymbol{q}}_i$ is the multivariate quantile residual for the $i$-th individual. Since $\hat{q}_{1i} \stackrel{a}{\sim} \mathcal{N}(0,1)$ and $\hat{q}_{2i} \stackrel{a}{\sim} \mathcal{N}(0,1)$, it follows that $\hat{\boldsymbol{q}}_i^{\prime} \hat{\boldsymbol{q}}_i \stackrel{a}{\sim} \chi^2(2)$ which is assessed in the  QQ-plot in Figure \ref{fig:mod_eval2}.
The reference bands are obtained by repeatedly  simulating from a $\chi^2(2)$ distribution. We draw $n_{rep} = 100$ samples and compute the $2.5\%$ and $97.5\%$ quantiles for each observations across the sorted $n_{rep}$ samples. The plot supports our model choice.

\newpage

\subsection{Results}

This section demonstrates the results that poverty researchers can derive from fitting a copula GAMLSS model for the joint analysis of two inter-related poverty dimensions. We summarise briefly the covariate effects on the marginals before taking a closer look at the dependence structure and risk groups. This way, we are able to answer questions on how dependence and risk are affected by a household's location or composition.     

\subsubsection*{Effects of covariates on the marginal distributions}

The full list of effects on each parameter of the marginals and on the copula parameter is included in the Appendix \ref{apx_application}. Note that the effects are subject to a \textit{ceteris paribus} interpretation. For example, more schooling is associated with more income with tertiary education having the highest effect. Households with more children or elderly people living in the household  have on average a lower income per capita and less education (except for the effect of one elderly on education which is negative and not statistically significant). Urban households are associated with both better income and better education. A little surprising is that living in a household where the head is married is correlated with a reduction in income and education compared to not (yet) married households (Table~\ref{tab:income} and \ref{tab:educ}). One possible explanation is that non-married households compared to married households include a larger share of young, single persons that do not need to share their income and have a comparatively higher level of education. Non-Muslim households are associated with higher income per capita and more education although they represent a minority in Indonesia. For a male household head, the effects are not as expected since it is negative for income but positive for education. For the second parameter of the continuous margin, i.e. the scale parameter for the income equation, the number of children and a marital status other than not married have negative effects while the effects of elderly and other religions compared to Islam are positive. All of the covariates have a positive effect on the copula parameter though not all effects are significant.  

Age is modeled in a non-linear way and Figure \ref{fig:smooth_age} displays the smooth effects of age on each parameter. Education attainments are lower for higher ages which can be explained by the education expansion that Indonesia has undergone since the 1970's and younger individuals benefited from. For example, between 1974 and 1978 over 61,000 primary schools were built. In 1984, compulsory education was set to six years which was extended to nine years in 1994 \citep{Akita.2017, duflo.2004}.   
The effect of income on the location parameter is inverted u-shaped until the age of 60 with a peak around the age of 40. After 80, the confidence intervals become very wide due to a lower number of observations in this age span and the effect is thus less clear. The effects on the scale parameter are around zero. For the copula parameter, the age effect indicates that the dependence is decreasing for individuals up to their mid 30ies and stays around zero afterwards until it decreases again after the age of 60.

\begin{figure}[h!]
            \centering 
          \includegraphics[scale=0.8]{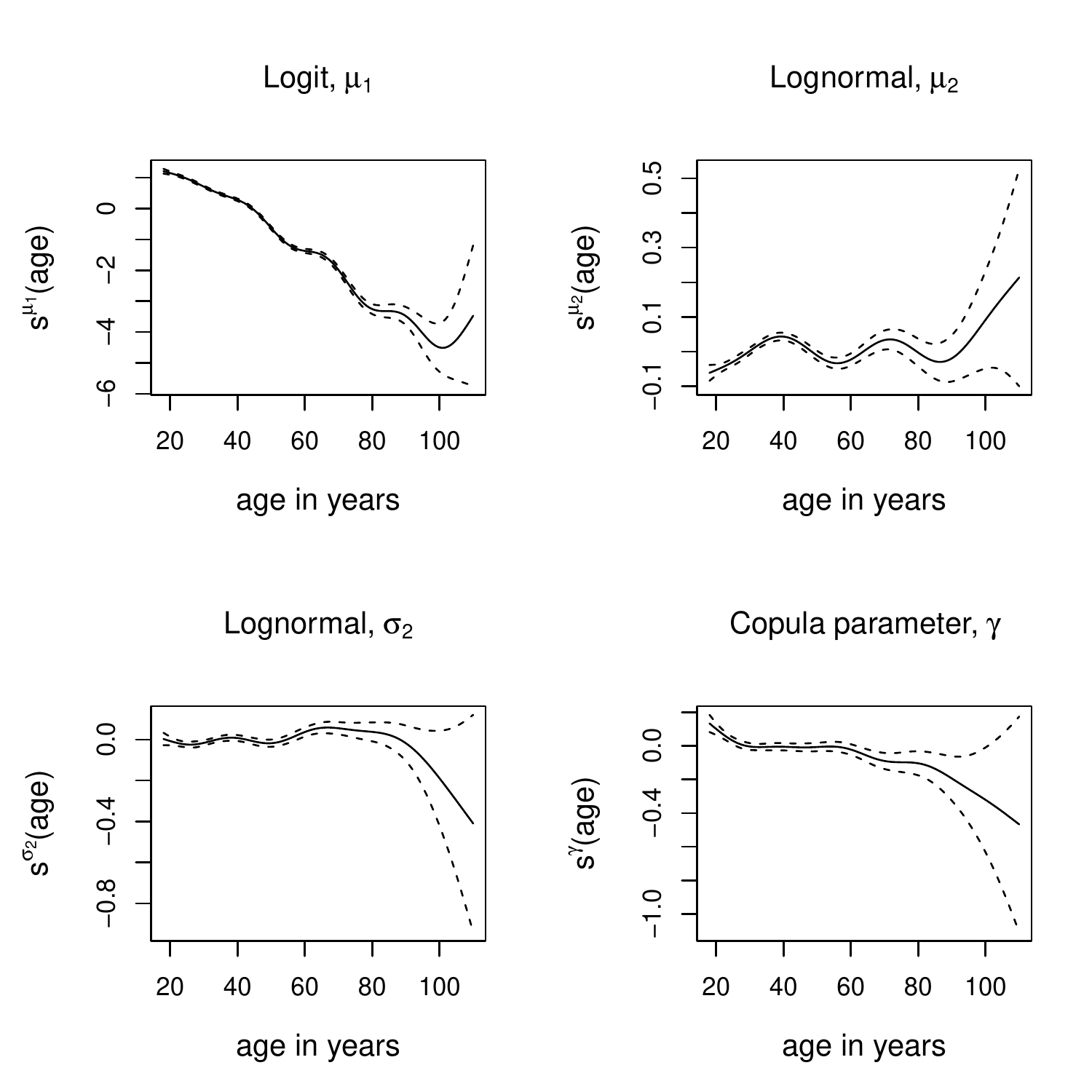}
            \caption
        {Estimated smooth functions of age and respective point-wise 95\% confidence intervals.}\label{fig:smooth_age}
    \end{figure}

Figure \ref{fig:spat_para } shows the effect of the underlying spatial pattern on the parameters $\mu_2$ and $\sigma_2$ of the second response, and on the copula parameter $\gamma$.  
Households located in provinces of Java seem to have higher income per capita compared to the observed provinces in Sumatra, Borneo and Sulawesi. Provinces with a negative effect on the location parameter have higher effects on the scale parameter except for Borneo whose scale effect is negative.  

\begin{figure}
            \centering 
            \includegraphics[scale=1.2]{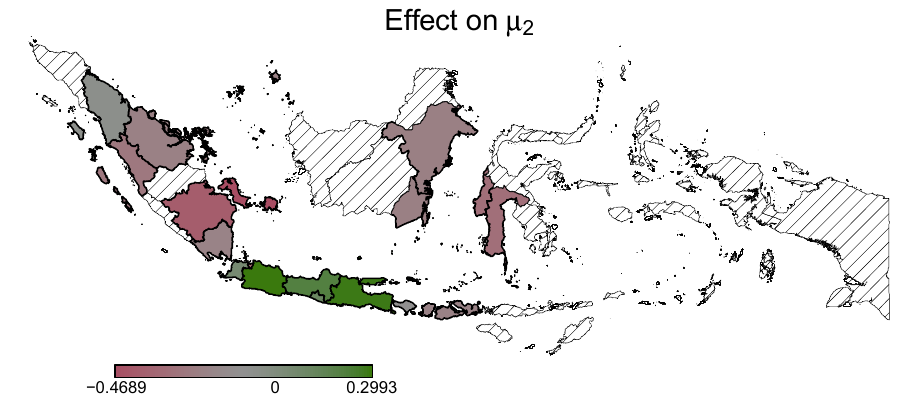}
            \includegraphics[scale=1.2]{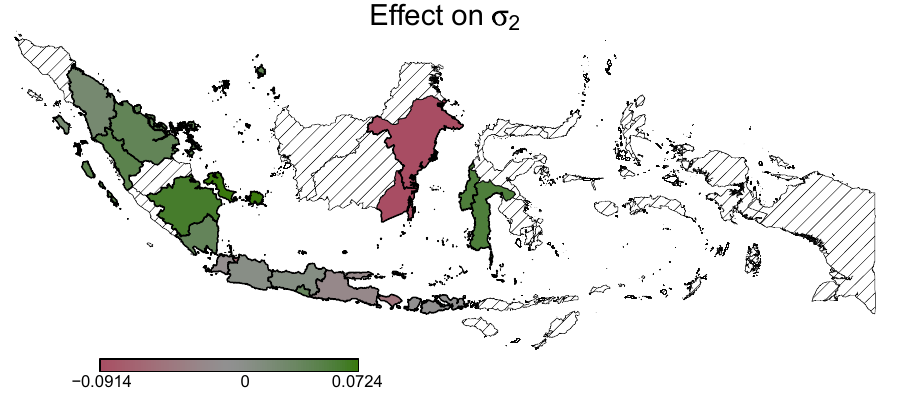}
            \includegraphics[scale=1.2]{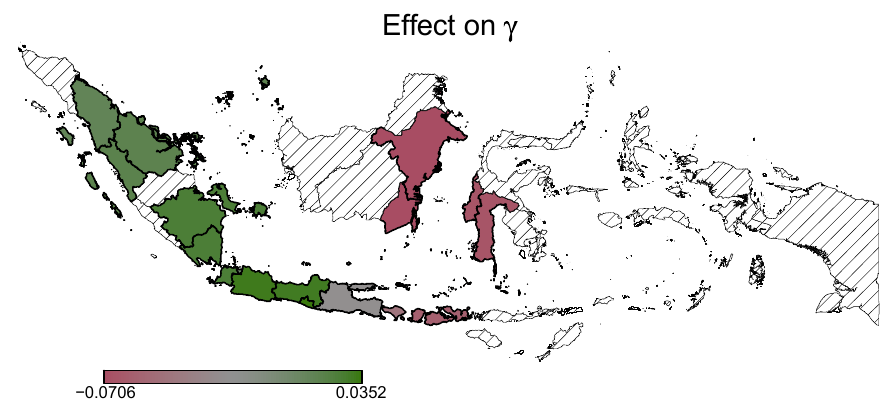}
            \caption {Spatial effects of the provinces on the distribution parameters of income and on the copula parameter. 
            } \label{fig:spat_para }
    \end{figure}

\subsubsection*{Dependence structure}

When focusing on the dependence, its structure can be represented via contour plots for specific covariate combinations. For example, we focus here on the location of the household (urban/rural and province) as they are the significant drivers of the copula parameter, while the remaining plots can be found in Appendix \ref{apx_application}. For comparison purposes, we include provinces with the highest frequencies in the dataset but select only one of Java's provinces. To compare the dependence structure across different locations, we create an example of typical individual whose characteristics, other than the one under consideration, are set to their mean value or to their most frequent observation. The only exception is the education of the household head which is set to the second most frequent observation. For a household head with "high school" degree the dependence is a bit more pronounced we hence selected this level for demonstration purposes. This is the covariates' combination that we call an ``example individual'' henceforward. 

Figure \ref{fig:contour_urban} shows that the dependence is stronger for individuals in urban households compared to rural households. One reason might be that average education levels are lower in rural areas (x-axis) while at the same time high paid job opportunities are restricted in a rural environment, resulting in more equal incomes compared to an urban environment.  
Figure \ref{fig:contour_prov} compares the dependence structure across selected provinces. The dependence seems weakest in the province of Nusa Tenggara Barat, which is one of the poorest provinces in Indonesia. It is surprising that the average per capita consumption (straight line) of Jakarta is about the same level than that one of Nusa Tenggara Barat. Most likely this is due to the high price level in Jakarta and the deflation measure applied which scaled down our expenditure measure maybe too drastically. Though the copula coefficient for Jakarta is similar to Jawa Timur, the latter has more variation in incomes and the contour levels lie further apart. Considering all -- and not just the selected provinces -- the value of the copula parameter for the example individual ranges from 0.16 for Kalimantan Timur and Kalimantan Selatan up to 0.27 for Yogyakarta.  

\begin{figure}
\centering
\begin{subfigure}{0.8\textwidth}
\hspace{-4mm}\includegraphics[scale=0.71]{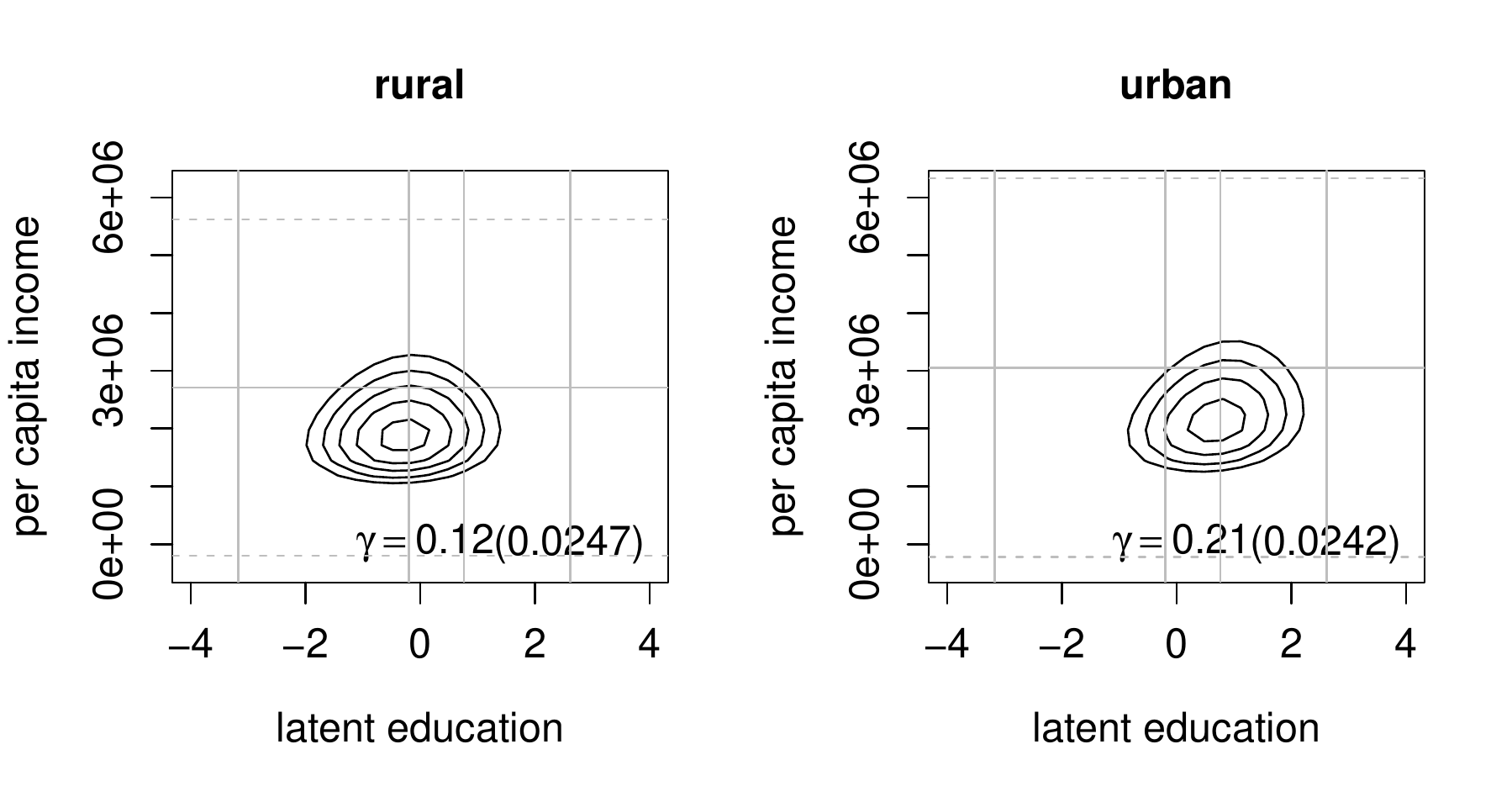}   
\caption{Contour plots for an example individual in an urban or rural household in the province of Jawa Timur. \label{fig:contour_urban} }
\end{subfigure} 
\begin{subfigure}{0.8\textwidth}
\includegraphics[scale=0.7]{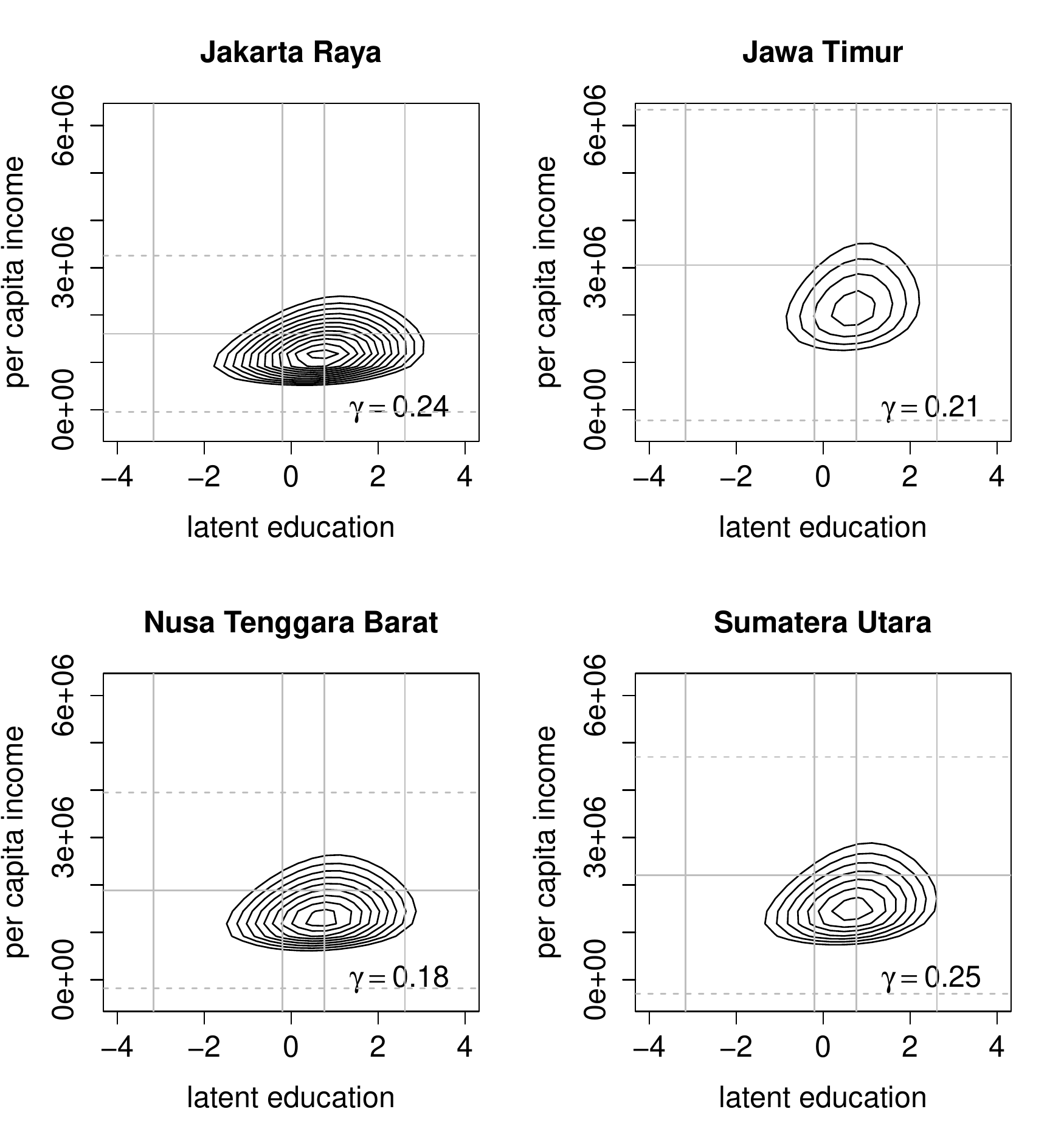}
\caption{Contour plots for an example individual in an urban household in different provinces. \label{fig:contour_prov}}
\end{subfigure}
\caption{Contour plots for (education, income)' and a Gaussian copula by households' location. Contour lines of densities are at levels from 0.00000005 to 0.00000025 in 0.00000001 steps. The vertical straight lines represents the cut off values for the education categories, horizontal straight lines are the consumption average, and dashed horizontal line are at two standard deviations around this average. }
\end{figure}

\begin{figure}[ht]
\includegraphics[scale=1.6]{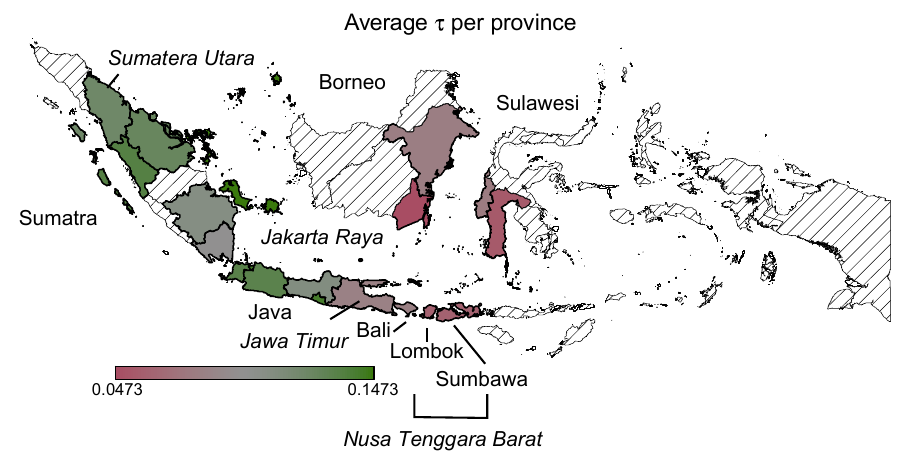}
\caption{Kendall's $\tau$ for each individual averaged within provinces. 
\label{fig:kendall}}
\end{figure}

A policy maker might be interested to know in which locations each individual in the dataset, and not the example individual, have higher dependencies in order to efficiently design policy strategies. Although the lower panel of Figure \ref{fig:spat_para } shows the effect of the provinces on the copula parameter, it might be more helpful for interpretation to transform it into the Kendall's $\tau$, an association measure that takes on values on $[-1,1]$. Each individual with his/her specific covariates' combination is related to an individual-specific $\tau$. One way to present the differences across provinces is to average the $\tau$ over all individuals in a particular province. This is shown in Figure \ref{fig:kendall}. The Kalimantan Selatan (South Borneo) is the province with the lowest average of Kendall's $\tau$ with a value of 0.0468 and Kepulauan Riau (Riau Islands, northwest of Borneo) has a value of 0.1467 which is the highest average value that also indicates spatial heterogeneity in the strength of the dependence. The provinces of Sumatra seem to have higher dependence between income and education than provinces in Borneo or Sulawesi. 
Interestingly, for Java and its neighbouring smaller islands on the east, the dependence seem to decrease from west to east.

\subsubsection*{Joint probabilities}

Other results we can derive from a copula GAMLSS models are joint probabilities for different sub-groups. That is, we calculate the probability for the example individual of being poor in both the education and income dimensions. As an example, we again focus here on household location and additionally consider  household composition. To define poverty, we classify individuals that have only primary or even less education as education poor and set a relative poverty line of 60\% of the median of unique values of per capita expenditures. Note that Indonesia also has a national absolute poverty line that is, however, based on a different expenditure measure than the one we constructed from the IFLS data. We thus decided to use a relative poverty line. An individual that is poor in both dimensions has expected values below each of these thresholds. One of the sub-groups is set to one and the probability of the other sub-group is compared to this base category. Figure \ref{fig:joint_probs} shows that the probability for being poor in both dimensions is 2 times higher for the example individual in a rural household compared to the same individual in an urban household. Compared to Jawa Timur the joint probabilities of Jakarta Raya, Nusa Tenggara Barat, and Sumatera Utara are about 8 times, 6 times, and 4 times higher, respectively. Not surprisingly, the risk of being poor in both dimensions increases with the number of children and elderly in the household.

\begin{figure}
\centering
\includegraphics[scale=0.5]{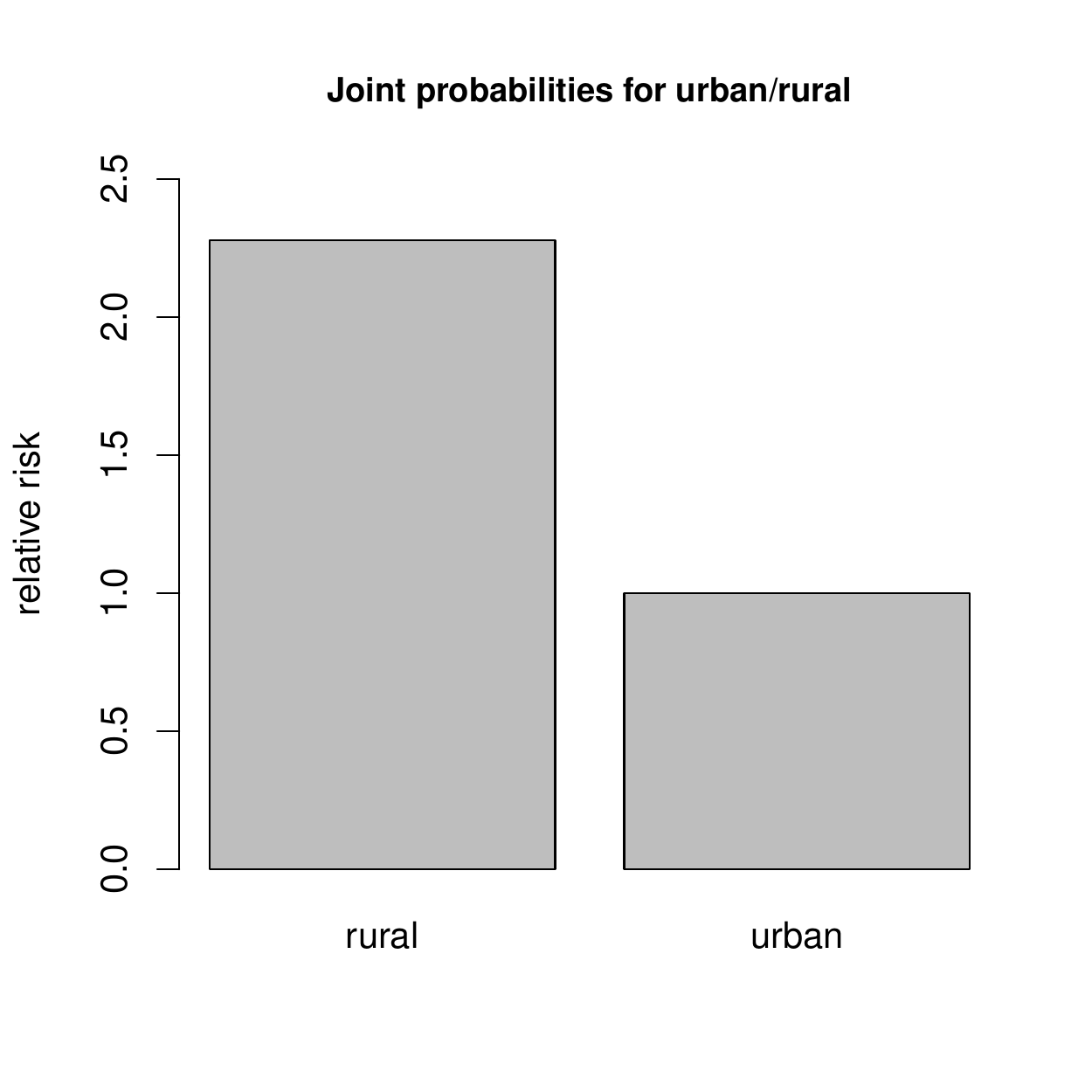} 
 \includegraphics[scale=0.5]{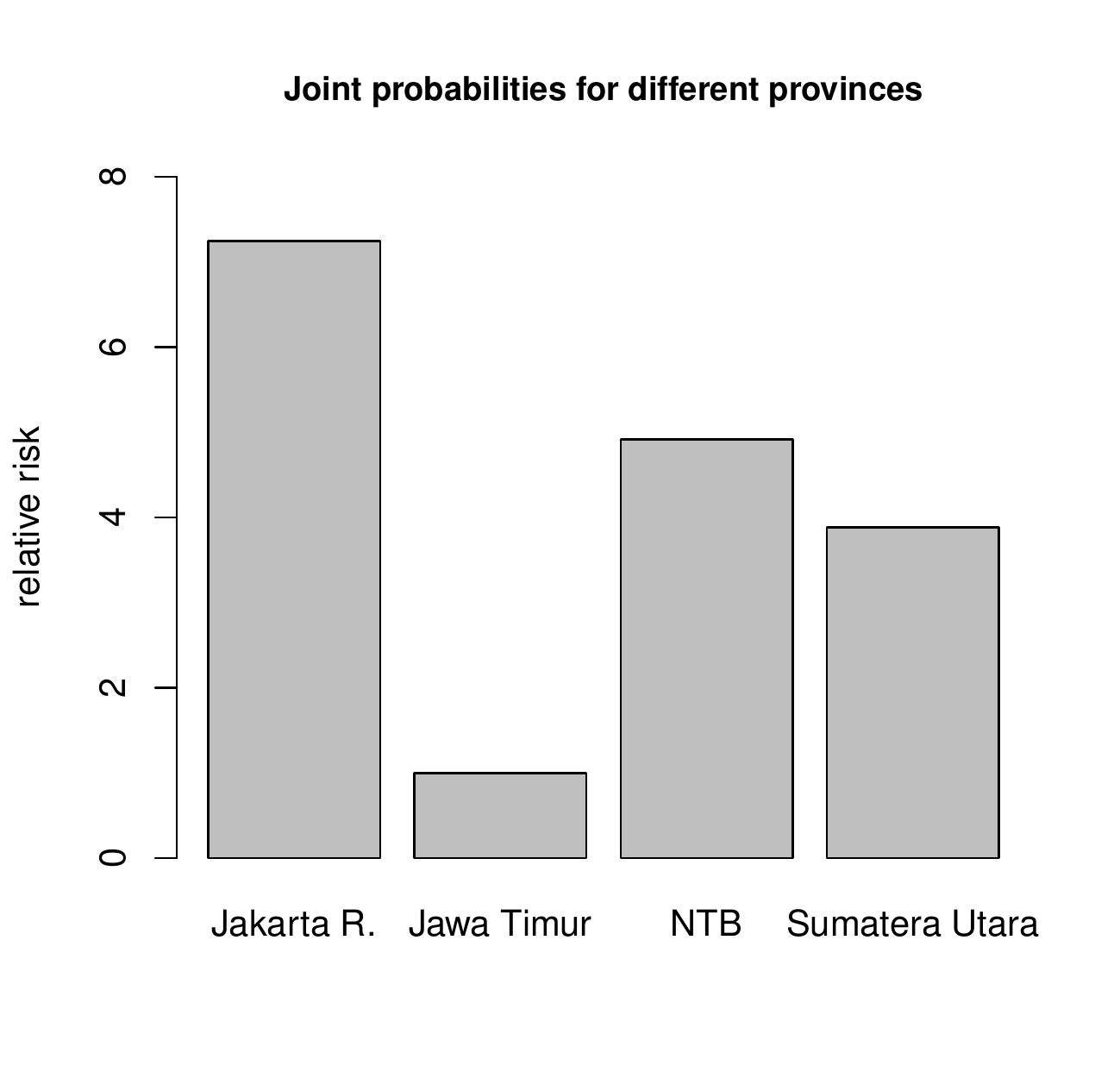} 
\\ 
\includegraphics[scale=0.5]{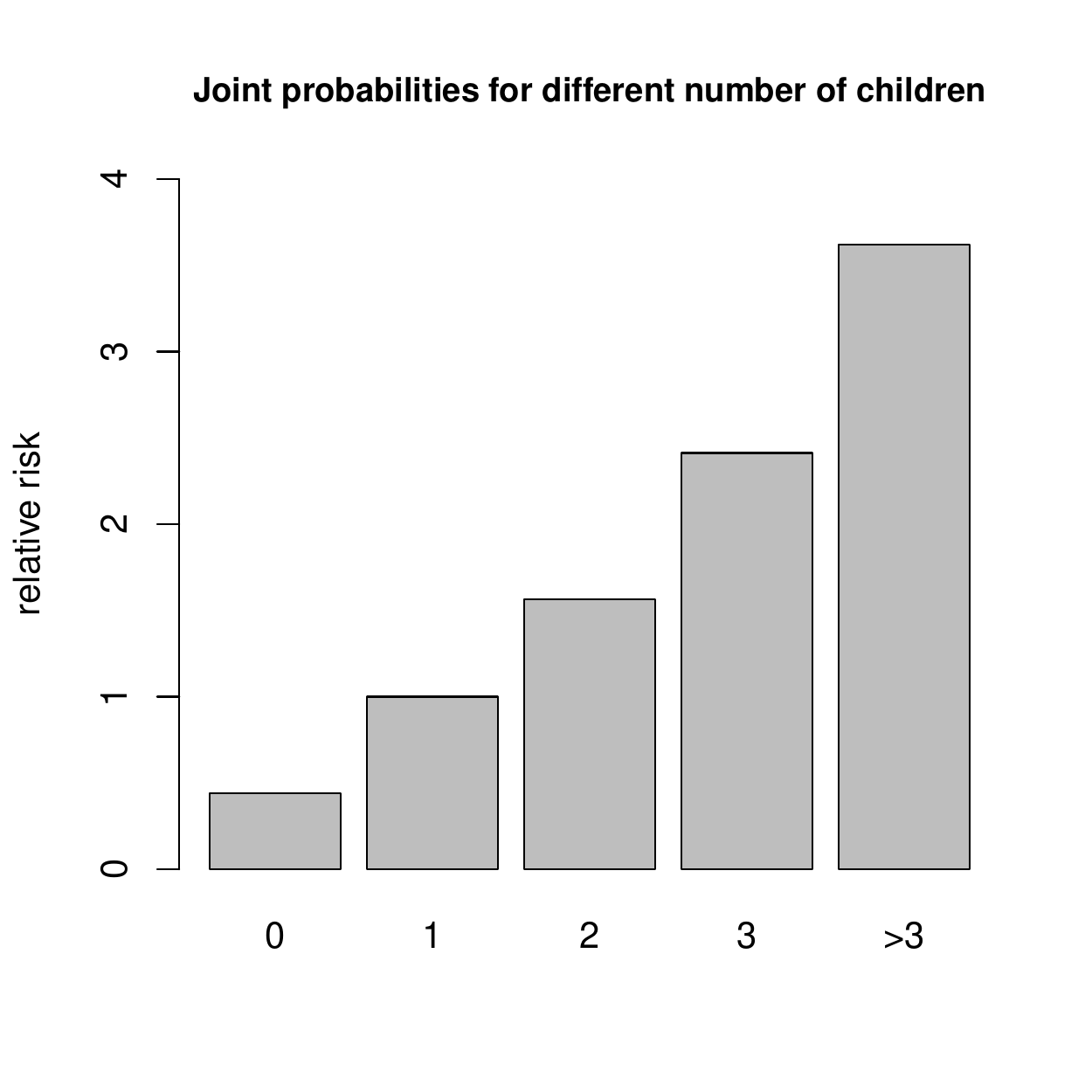}
\includegraphics[scale=0.5]{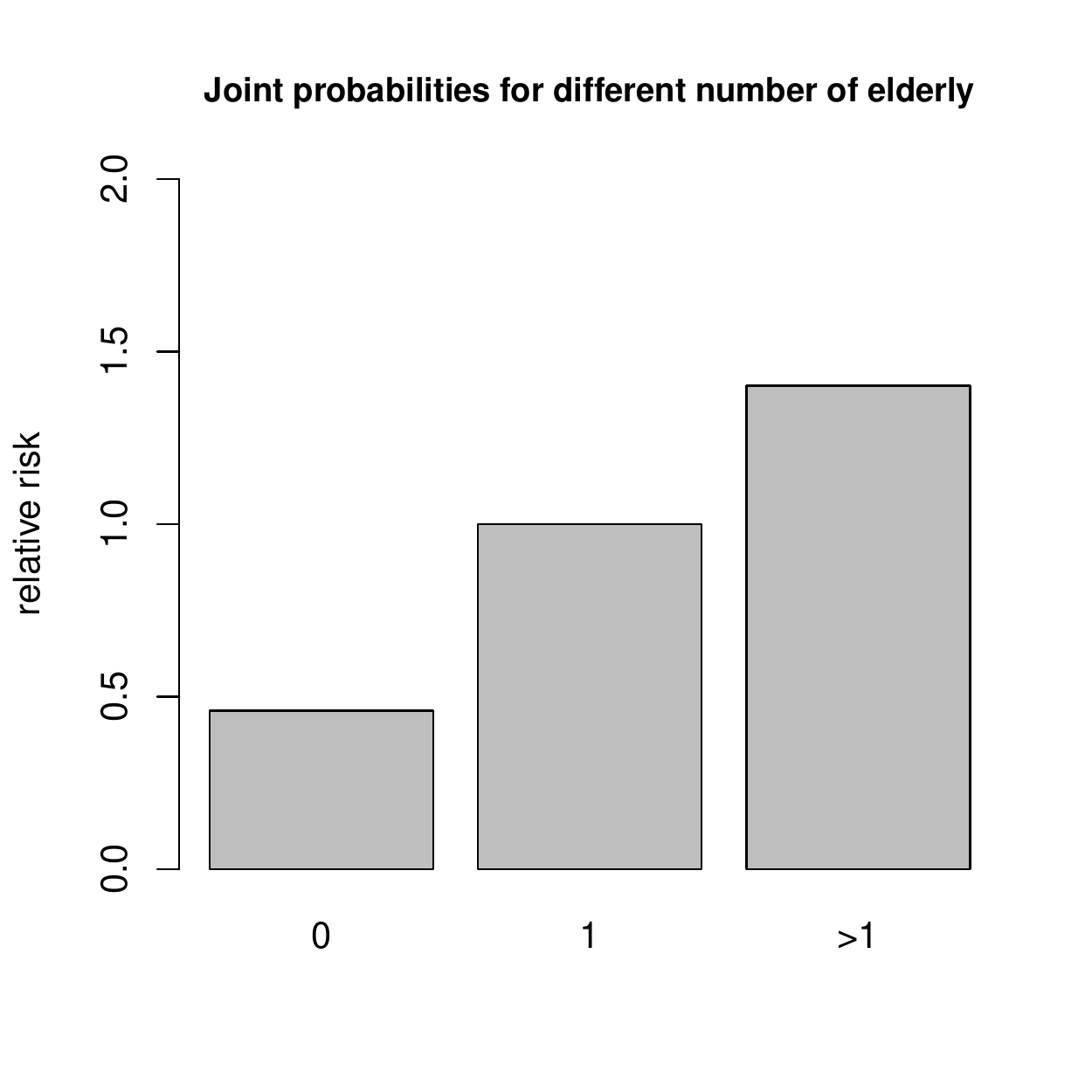}
\caption{Relative joint poverty risks of an example individual differentiated by household location and household composition. Baseline categories are urban, Jawa Timur, 1 child, and 1 elderly, respectively, and are set to one. The abbreviation NTB denotes the province of Nusa Tenggara Barat. \label{fig:joint_probs}}
\end{figure}

\subsubsection*{Vulnerability to poverty}

The higher the dependence between education and income, the higher the chances that we miss some individuals at risk by only looking at the marginal distributions of each poverty dimension. To identify the individuals at risk, we calculate the probabilities of being poor in two dimensions for each individual in the dataset, first for the independence model and then for the copula model. A vulnerability threshold is arbitrarily defined at a probability of 0.1 for simplicity. All individuals with a probability of being poor above this threshold are declared as vulnerable. We then compare the individuals that are identified as vulnerable by the copula and the independence models to their actual poverty status and calculate the specificity or true positive rate. Results are displayed in Table \ref{tab:specificity}. We find that the copula model has better specificity for two-dimensional poverty than the independence model although the difference is fairly small. 
 
\begin{table}[ht]
    \centering
   \caption{Specificity of the copula and independence models. Individuals that are declared vulnerable are compared to their actual poverty status.}
   \begin{tabular}{rrr}
  \hline
 & copula & independence \\ 
  \hline
income dimension & 0.88 & 0.88 \\ 
  both dimensions & 0.66 & 0.63 \\ 
   \hline
\end{tabular}
    \label{tab:specificity}
\end{table}

\subsubsection*{Discussion of Results}

{The main advantage of applying copula GAMLSS in the precedent poverty analysis is that we are able to analyse the dependence between poverty dimensions in more detail than studies that are limited to measuring the dependence. Even though the dependence is not very strong, once we control for covariates in the marginals, we consider this an interesting outcome.} We could further identify heterogeneities in the strength of the dependence between income and education, and poverty risk with respect to a household's location. For example, an example individual in a rural household exhibit lower dependencies compared to the same individual in an urban household. One explanation might be that opportunities in rural areas are restricted and thus the individual's education level has a smaller influence on per capita expenditures. 

 There is also a strong spatial heterogeneity between provinces. High dependencies can be found in the northwest while the values decrease for the more central provinces. The explanation for low dependence due to low opportunities could also apply to the province of Nusa Tenggara Barat which showed little average dependence and lower dependence for a specific example individual compared to other provinces. Nusa Tenggara Bararat is one of the poorest provinces, with a low GDP per capita and with agriculture and fishery being the most important industries. On the other hand, provinces such as Jakarta and Kalimantan Timur, that have the highest per capita GDP out of all provinces in Indonesia, have higher probabilities of being poor in both dimensions compared to most other provinces. These probabilities are calculated for an example individual that have average characteristics and a high school degree. Possible reasons for the discrepancy between rich provinces and high relative poverty risks might be that in Jakarta income and consumption are very unequally distributed and in East Kalimantan the high GDP is a result of high natural resource exploitation which yields little benefit for the population \citep[see for example][who found that growth in the mining sector has had no effect on poverty and inequality in Indonesia]{BHATTACHARYYA2015}.


\section{Conclusion} \label{sec:concl}

Though poverty is conventionally regarded a multidimensional phenomenon, regression analyses in the poverty context either examine dimensions separately or use a scalar index such as the Multidimensional Poverty Index as an outcome. Both approaches neglect the dependence between poverty dimensions. This paper presents an alternative model for an in-depth poverty study that explicitly analyses this dependence and its determinants. It is important to understand what drives the dependence between poverty dimensions since high dependencies can explain persisting poverty.       

For this type of poverty analysis, we propose the use of bivariate copula GAMLSS which relate each distributional parameter of the marginals and of the copula parameter to flexible covariate effects. Since poverty analyses often include one monetary measure such as income or consumption and some ordinal measure such as education level or health status, we extended the class of copula GAMLSS to incorporate ordinal data based on a latent variable approach. This extension has been incorporated in the \texttt{GJRM} \texttt{R}-package.     

We use data from Indonesia to show how copula GAMLSS can be applied to a poverty analysis. The model identified  
the number of elderly and children in the household, the education of the household head, and the household's location as risk factors for low income or poor education or both, and the probability of being poor in both dimensions. The gender of the household head, belonging to a minority religion and being widowed or separated compared to being married, has shown less or the opposite influence than expected. We did not find evidence for strong tail dependencies between education and expenditures after conditioning on the covariates in the marginals and in the copula parameter. 

Focusing more on the household's location, we find that an example individual in a rural household exhibits lower dependencies compared to the same individual in an urban household, potentially due to restricted employment opportunities for highly educated individuals in rural areas. There is a strong spatial heterogeneity regarding poverty risk and the strength of the dependence. High dependencies can be found in the northwest of Indonesia while central provinces have lower ones. For Jakarta and Kalimantan Timur we found a discrepancy between being rich in terms of GDP and exhibiting high relative poverty risks, potentially indicating unequally distributed economic gains. 
 
Thanks to the flexibility of our approach, the analysis of the dependence between poverty dimensions, and the various results that can be derived consistently by only estimating one model, we advocate to include copula GAMLSS into the tool box of poverty researchers. Other applications in the context of poverty may include analysing the drivers and spatial patterns of inter-generational poverty persistence or upward social mobility. On the methodological side, future research can be directed at combining copula GAMLSS with experimental or quasi-experimental methods to evaluate Indonesia's poverty policies on the micro level. {Extensions beyond the bivariate case are dependent on the availability of a suitable copula. The implementation of these models subsequently becomes more numerically and technically demanding and interpretation of the regression results will be challenging.} 


\newpage
\bibliographystyle{kluwer}
\interlinepenalty=10000
\bibliography{copula}

\newpage

\begin{appendices}


\section{Considered copulas and corresponding (transformed) parameter
}\label{sec:copulas}
\begin{table}[h] 
\fontfamily{ptm}\selectfont
\centering
\caption{Families of some bivariate copula functions.}\label{tab:mixed.copulae}
\begin{tabular}{p{1cm}p{6.5cm}p{2cm}@{\hskip -3mm}p{2cm}p{2.2cm}}
\hline
Name & $\mathcal C(u,v)$ & range of $\gamma$ & $\gamma^*$ & Kendall's $\tau$\\
\hline
Gaussian & $\Phi_2(\Phi^{-1}(u),\Phi^{-1}(v))$ & $[-1,1]$ & $\textup{tanh}^{-1}(\gamma)$ & $\frac{2}{\pi} \arcsin{\gamma}$\\
Clayton & $(u^{-\gamma}+v^{-\gamma}-1)^{-1/\gamma}$ & $(0,\infty)$ & $\log(\gamma-\varepsilon)$ & $\frac{\gamma}{\gamma+2}$\\
Frank & $-\gamma^{-1}\log[1+(e^{-\gamma u}-1)(e^{-\gamma v}-1)/(e^{-\gamma}-1)]$ & $\mathds R\setminus\{0\}$ & $\gamma-\varepsilon$ & $1- \frac{4}{\gamma}(1-D_1(\gamma))$\\
Gumbel & $\exp\left\{-[(-\log u)^\gamma +(-\log v)^\gamma]^{1/\gamma}\right\}$ & $[1,\infty)$ & $\log(\gamma-1)$ & $1-\frac{1}{\gamma}$\\
Joe & $1-[(1-u)^{\gamma}+(1-v)^{\gamma}-(1-u)^{\gamma}(1-v)^{\gamma}]^{1/\gamma}$ & $(1,\infty)$ & $\log(\gamma-1-\varepsilon)$ & $1+ \frac{4}{\gamma^2} D_2(\gamma)$\\
FGM & $uv[1+\gamma(1-u)(1-v)]$ & $[-1,1]$ & $\textup{tanh}^{-1}(\gamma)$ & $\frac{2}{9}\gamma$ \vspace{1pt}\\
AMH & $\frac{uv}{1-\gamma(1-u)(1-v)}$ & $[-1,1)$ & $\textup{tanh}^{-1}(\gamma)$\vspace{4pt} & $1-\frac{2}{3\gamma^2}(\gamma + (1-\gamma)^2 \log(1-\gamma))$\\
Plackett & $\frac{1+(\gamma-1)(u+v)- \sqrt{[1+(\gamma-1)(u+v)]^2 - 4\gamma (\gamma - 1)uv}}{2(\gamma-1)}$ & $(0,\infty)$ & $\log(\gamma-\varepsilon)$ & -\\
\hline \vspace{0.5pt}\\
\multicolumn{5}{p{0.98\linewidth}}{\footnotesize{\textit{Note:} The association parameter is denoted by $\gamma$ and $u$ and $v$ are the marginals $F_1(\eta_{1k})$ and $F_2(y_2)$, respectively. The abbreviations FGM and AMH correspond to the Farlie-Gumbel-Morgenstern and the Ali-Mikhail-Haq copula, respectively. Function $\Phi_2(\dot,\dot;\gamma)$ is the CDF of a bivariate standard normal distribution, while $D_1(\gamma) = \frac{1}{\gamma} \int_0^{\gamma}\frac{t}{\exp(t)-1}dt $ is the Debye function and $D_2(\gamma) = \int_0^1 t \log(t)(1-t)^{\frac{2(1-\gamma)}{\gamma}}dt$ . The quantity $\varepsilon$ denotes the machine smallest floating point multiplied by $10^6$ and is introduced to force the transformed association parameters to lie in their respective supports throughout estimation.}}
\end{tabular}
\end{table}

\section{Further details on the trust algorithm and smoothing parameter selection}\label{sec:apx_trust}

\subsection{General idea of the trust region algorithm}

The trust region algorithm has the advantage of being more stable and faster compared to in-line search methods \citep{Marra.2017}. Both line search and trust region methods use a quadratic model of the objective function and generate steps from one iterate to the next. While line search methods use the model to find first a search direction and suitable step lengths along this direction, trust region algorithms search the step that  minimizes the objective function within a previously defined region around the current iterate such that both direction and step length are selected at the same time. If a function exhibits long plateaus and the current iterate $\boldsymbol \beta^{[a]}$ is in that region, line search methods may search the next step $\boldsymbol \beta^{[a+1]}$ far away from the current iterate. In this case, it is possible that the evaluation of the log likelihood will not be finite causing the algorithm to fail. Before evaluating the objective function, trust region methods define a maximum distance first based on the trust region. This has two advantages: first, the new iterate will not lie too far away from the current one; second, in case of a non-definite evaluation of $\breve{\ell}_p $, step $\boldsymbol p^{[a+1]}$  will be rejected. If a candidate that minimizes the quadratic approximation of the objective function and also lies in the trust region does not improve the function sufficiently or gives a non definite evaluation of $\breve{\ell}_p $, the trust region shrinks and the algorithm moves back to step 1. On the other hand, if the improvement is large enough, the trust region expands in the next iteration. Details on the trust region algorithm can be found in \citet{Nocedal2006}, Chapter 4. 

\subsection{Details on smoothing parameter selection}

Simultaneous optimization of $\boldsymbol \beta$ and $\boldsymbol \lambda$ causes overfitting. The reason is that the penalized log-likelihood will be highest at $\boldsymbol \lambda = \boldsymbol{0}$. Instead, the smoothing parameters $\boldsymbol \lambda$ should be selected in a way such that the function estimates are close to the true functions. 

%
%

Providing a full justification, \citet{marra.baernighausen.2017} propose an approach that yields the result in equation (\ref{eq:step2}). They use the fact that, close to convergence, the trust region algorithm behaves like a classic unconstrained algorithm \citep{Radice2016}. Considering the first order Taylor expansion of $\boldsymbol g_p^{[a+1]}$ and setting this to zero, they obtain
\begin{align} \label{eq:approx}
\boldsymbol{0} = \boldsymbol g_p^{[a+1]} \approx \boldsymbol g_p^{[a]} +\boldsymbol H_p^{[a]}(\boldsymbol \beta ^{[a+1]} - \boldsymbol \beta^{[a]}),
\end{align}
where the quantities as those defined in the paper. The aim is now to find an expression for $\boldsymbol \beta ^{[a+1]}$ that uses $\boldsymbol g^{[a]}$ and $\boldsymbol H ^{[a]}$ as a whole. \citet{marra.baernighausen.2017} argue that this reduces the frequency of situations where $\boldsymbol H$ is not positive definite compared to when working with the $n$ components that make it up. The situations where $\boldsymbol H$ turns out not to be positive definite can then be dealt with by perturbing $\boldsymbol H$ to positive definiteness \citep{wood2015core}. Recalling that $\boldsymbol{\cal I}^{[a]} = - \boldsymbol H^{[a]}$, the expression in (\ref{eq:approx}) can be re-written as  
\begin{align*}
\boldsymbol 0 &= \boldsymbol g_p^{[a]} + (- \boldsymbol{\cal {I}}^{[a]} - \boldsymbol S_{\boldsymbol\lambda} )  (\boldsymbol\beta^{[a+1]} - \boldsymbol \beta ^{[a]}) \\
\boldsymbol g_p^{[a]} &=  (\boldsymbol{\cal{I}}^{[a]} + \boldsymbol S_{\boldsymbol\lambda}) ( \boldsymbol \beta ^{[a+1]} - \boldsymbol \beta ^{[a]}) \\
\boldsymbol g^{[a]} - \boldsymbol S_{\boldsymbol\lambda} \boldsymbol \beta ^{[a]} &= 
(\boldsymbol{\cal{I}}^{[a]} + \boldsymbol S_{\boldsymbol\lambda})( \boldsymbol \beta ^{[a+1]})-(\boldsymbol{\cal{I}}^{[a]} + \boldsymbol S_{\boldsymbol\lambda})(\boldsymbol \beta ^{[a]})\\
%
(\boldsymbol{\cal{I}}^{[a]} + \boldsymbol S_{\boldsymbol\lambda})\boldsymbol \beta ^{[a+1]} 
&= \boldsymbol g^{[a]} - \boldsymbol S_{\boldsymbol\lambda}\boldsymbol \beta ^{[a]} + (\boldsymbol{\cal{I}}^{[a]} + \boldsymbol 
S_{\boldsymbol\lambda})\boldsymbol \beta ^{[a]}\\
&= \boldsymbol g^{[a]} + \boldsymbol{\cal{I}}^{[a]}\boldsymbol \beta ^{[a]} \\
\boldsymbol \beta ^{[a+1]} &= (\boldsymbol{\cal{I}}^{[a]} + \boldsymbol S_{\boldsymbol\lambda})^{-1} \sqrt{\boldsymbol{\cal{I}}^{[a]}} ( \sqrt[]{\boldsymbol{\cal{I}}^{[a]}} \boldsymbol \beta ^{[a]}+\sqrt[]{\boldsymbol{\cal{I}}^{[a]}}^{-1} \boldsymbol g^{[a]} ).
\end{align*}
Thus, at convergence the parameter estimator takes the following form
\begin{align*}
\boldsymbol \beta^{[a+1]} = (\boldsymbol{\cal I}^{[a]} + \boldsymbol S_{\boldsymbol\lambda}) ^{-1}
\sqrt{{\boldsymbol{\cal I}}^{[a]} }\boldsymbol M^{[a]},
\end{align*}
where $\boldsymbol M^{[a]} := \boldsymbol{\mu_M}^{[a]} + \boldsymbol\epsilon^{[a]}$, 
$\boldsymbol{\mu_M}^{[a]} := \sqrt{\boldsymbol{\cal I}^{[a]}} \boldsymbol \beta^{[a]} $
and $\boldsymbol\epsilon^{[a]} := \sqrt{ \boldsymbol{\cal I}^{[a]}}^{-1} \boldsymbol g^{[a]}$.

From likelihood theory we have that $\boldsymbol\epsilon\sim {\cal N}(\boldsymbol 0, \boldsymbol I)$ and $\boldsymbol M  \sim {\cal N}(\boldsymbol \mu_{\boldsymbol M}, \boldsymbol I)$, with $\boldsymbol I$ being the identity matrix and $\boldsymbol \mu_{\boldsymbol M} := \sqrt{\boldsymbol{\cal{I}}} {\boldsymbol \beta^0}$, where $\boldsymbol \beta^0$ denotes the true parameter vector. The predicted value vector for $\boldsymbol M$ is $\hat{\boldsymbol \mu}_{\boldsymbol M} = \sqrt[]{\boldsymbol{\cal I }} \hat{\boldsymbol \beta} = \boldsymbol{AM}$, where $\boldsymbol A = \sqrt{\boldsymbol {\cal I}} (\boldsymbol{\cal I + \boldsymbol S_{\boldsymbol\lambda}}) ^{-1} \sqrt{\boldsymbol {\cal I}}$. To obtain an estimate of $\boldsymbol \lambda$ that suppresses the complexity of smooth terms not supported by the observed data, $\hat{\boldsymbol \mu}_{\boldsymbol M}$ should be close to $\boldsymbol\mu_{\boldsymbol M}$. Hence, employing the expected squared error, yields
\begin{align}
\mathbb{E}(\| \boldsymbol{\mu_M} - \hat{\boldsymbol \mu}_ {\boldsymbol M} \|^2) 
&= \mathbb{E}(\| (\boldsymbol M - \boldsymbol \epsilon) - \boldsymbol{AM} \|^2) \nonumber\\
&= \mathbb{E}(\| \boldsymbol M - \boldsymbol{AM} - \boldsymbol \epsilon \|^2) \nonumber\\
&= \mathbb{E}(\| \boldsymbol M - \boldsymbol{AM} \|^2) + \mathbb{E}(-\boldsymbol \epsilon^{\prime} \boldsymbol\epsilon - 2 \boldsymbol \epsilon ^{\prime} \boldsymbol {\mu_M} + 2 \boldsymbol \epsilon ^{\prime} \boldsymbol {A \mu_M} + 2 \boldsymbol \epsilon ^{\prime} \boldsymbol{A \epsilon}) \nonumber\\ 
&= \mathbb{E}(\| \boldsymbol M - \boldsymbol{AM} \|^2) - Kn + 2\textrm{tr}(\boldsymbol A). \label{eq:expectation}
\end{align}
The smoothing parameter is found by minimizing an estimate of the expectation in equation (\ref{eq:expectation}). For a given $\boldsymbol \beta ^{[a+1]}$ we arrive then at equation (\ref{eq:step2}).

\section{Gradient and Hessian} \label{sec:apx_gradient}
Throughout the derivations of the gradient vector and the Hessian matrix, we denote by $\boldsymbol\beta_1$ the regression coefficient associated to the fist ordinal equation, and by $\boldsymbol\beta_2:=(\boldsymbol\beta^{\mu_2}, \boldsymbol\beta^{\sigma_2})^\prime$ the coefficients for the location and scale parameters of the second continuous equation. Similarly $\boldsymbol\beta^{\gamma}$ refers to the coefficients of the copula association parameter.

\subsection{Gradient vector}
A general expression for the contribution of the $i$-th individual to the gradient vector is the following
\begin{align*}
\ell_{\boldsymbol\beta i}'(\boldsymbol\beta):=\frac{\partial\ell_i(\boldsymbol\beta)}{\partial\boldsymbol\beta}=\sum_{r\in\mathcal R}\mathds 1_{\{y_{1i}=r\}}\frac{1}{\nabla_r F_{12.2}(\eta_{1ri})}\frac{\partial\nabla_r F_{12.2}(\eta_{1ri})}{\partial\boldsymbol\beta}+f_2(y_{2i})^{-1}\frac{\partial f_2( y_{2i})}{\partial\boldsymbol\beta},\nonumber
\end{align*}
where $\nabla_r$ is the backward difference operator applied to $r$, in particular $\nabla_r F_{12.2}(\eta_{1ri}):=F_{12.2}(\eta_{1ri})-F_{12.2}(\eta_{1r-1i})$. We next define the quantities
\begin{align*}
F_{12.12}(\eta_{1ri}):=\frac{\partial^2\mathcal C(F_1(\eta_{1ri}),F_2(y_{2i}))}{\partial F_1(\eta_{1ri})\partial F_2(y_{2i})} \qquad \textrm{with} \qquad F_{12.22}(\eta_{1ri}):=\frac{\partial^2\mathcal C(F_1(\eta_{1ri}),F_2(y_{2i}))}{\partial F_2(y_{2i})^2}.\nonumber
\end{align*}

\subsubsection{Derivatives with respect to the transformed cut points}
\begin{itemize}
\item $h=1$:
\begin{align*}
\ell_{\theta_1^*i}' 
&=\frac{1}{\nabla_r F_{12.2}(\eta_{1ri})} \frac{\partial\theta_1}{\partial\theta_1^*}\left(F_{12.12}(\eta_{1ri})\frac{\partial F_1(\eta_{1ri})}{\partial\eta_{1ri}}\frac{\partial\eta_{1ri}}{\partial\theta_1}-F_{12.12}(\eta_{1r-1i})\frac{\partial F_1(\eta_{1r-1i})}{\partial\eta_{1r-1i}}\frac{\partial\eta_{1r-1i}}{\partial\theta_1}\right)\nonumber\\
&=\frac{1}{\nabla_r F_{12.2}(\eta_{1ri})}\left(F_{12.12}(\eta_{1ri})f_1(\eta_{1ri})-\mathds 1_{\{2\preceq r\}}F_{12.12}(\eta_{1r-1i})f_1(\eta_{1r-1i})\right)\nonumber
\end{align*}
\item $h=2,\ldots,R$:
\begin{align*}
\ell_{\theta_h^*i}'
&=\frac{1}{\nabla_r F_{12.2}(\eta_{1ri})}\frac{\partial\theta_h}{\partial\theta_h^*}\left(F_{12.12}(\eta_{1ri})
\frac{\partial F_1(\eta_{1ri})}{\partial\eta_{1ri}}\frac{\partial\eta_{1ri}}{\partial\theta_h}-F_{12.12}(\eta_{1r-1i})\frac{\partial F_1(\eta_{1r-1i})}{\partial\eta_{1r-1i}}\frac{\partial\eta_{1r-1i}}{\partial\theta_h}\right)\\ 
&=  \frac{2}{\nabla_r F_{12.2}(\eta_{1ri})} \left(\mathds 1_{\{h\preceq r\}}F_{12.12}(\eta_{1ri})f_1(\eta_{1ri})-\mathds 1_{\{h+1\preceq r\}}F_{12.12}(\eta_{1r-1i})f_1(\eta_{1r-1i})\right)\theta_h^*\nonumber
\end{align*}
\end{itemize}

\subsubsection{Derivatives with respect to \texorpdfstring{$\boldsymbol\beta_1$}{b1} and \texorpdfstring{$\boldsymbol\beta_2$}{b2}}
\begin{itemize}
\item $\boldsymbol\beta_1$:
\begin{align*}
\ell_{\boldsymbol\beta_1i}'
&= 
\left[\frac{1}{\nabla_r F_{12.2}(\eta_{1ri})}
\nabla_r\left(
\frac{\partial F_1(\eta_{1ri})}{\partial\eta_{1ri}}
\frac{\partial F_{12.2}(\eta_{1ri})}{\partial F_1(\eta_{1ri})}\right)\right]\frac{\partial\eta_{1ri}}{\partial\boldsymbol\beta_1}\\
\\
\quad  \textrm{with} &\quad \frac{\partial \eta_{1ri}}{\partial \boldsymbol \beta_1} = 
\frac{\partial \eta_{1r-1i}}{\partial \boldsymbol \beta_1}
\end{align*}
\item $\boldsymbol\beta_2$; $\vartheta_k = \{\mu_2,\sigma_2\}$:
\begin{align*}
\ell_{\boldsymbol\beta_2i}'
\, 
&=
\left[
\frac{1}{\nabla_r F_{12.2}(\eta_{1ri})}
\nabla_r\left(
\frac{\partial F_{12.2}(\eta_{1ri})}{\partial F_2(y_{2i})}
 \right) \frac{\partial F_2(y_{2i})}{\partial \vartheta_k}+ f_2(y_{2i})^{-1}
\left.
\frac{\partial f_2(y_{2i})}{\partial \vartheta_k}
\right]\frac{\partial \vartheta_k}{\partial \eta^{\vartheta_k}_i}
 \frac{\partial\eta^{\vartheta_k}_i}{\partial\boldsymbol\beta^{\vartheta_k}}\right. \notag 
\end{align*}
\end{itemize}

\subsubsection{Derivatives with respect to the copula association parameter}
\begin{align*}
\ell_{\boldsymbol\beta^{\gamma}i}'
&=\left[
\frac{1}{\nabla_r F_{12.2}(\eta_{1ri})}
\nabla_r\left(
\frac{F_{12.2}(\eta_{1ri})}{\partial \gamma}
\right)\right]\frac{\partial\gamma}{\partial\eta^{\gamma}_i}\frac{\partial\eta^{\gamma}_i}{\partial\boldsymbol\beta^{\gamma}}
\nonumber
\end{align*}

\subsection{Hessian matrix}
The general expression for the contribution of the $i$-th individual to the Hessian matrix is given by

\begin{align*}
\ell_{\boldsymbol\beta\boldsymbol\beta^\prime}''(\boldsymbol\beta)&:=\frac{\partial}{\partial\boldsymbol\beta}\left(\frac{\partial\ell_i(\boldsymbol\beta)}{\partial\boldsymbol\beta}\right)^\prime\\
&=\sum_{r\in\mathcal R}\mathds1_{\{y_{1i}=r\}} 
\frac{1}{\nabla_r F_{12.2}(\eta_{1ri})}\\
&\left[
\frac{\partial^2\nabla_r F_{12.2}(\eta_{1ri})}{\partial \boldsymbol\beta\partial\boldsymbol\beta^\prime}-\frac{1}{\nabla_r F_{12.2}(\eta_{1ri})}
\left(\frac{\partial\nabla_r F_{12.2}(\eta_{1ri})}{\partial\boldsymbol\beta}\right)\left(\frac{\partial\nabla_r F_{12.2}(\eta_{1ri})}{\partial\boldsymbol\beta}\right)^\prime\right] \\
&+f_2(y_{2i})^{-1} \frac{\partial^2 f_2(y_{2i})}{\partial\boldsymbol\beta\partial\boldsymbol\beta^\prime}-f_2( y_{2i})^{-2}\left(\frac{\partial f_2( y_{2i})}{\partial\boldsymbol\beta}\right)\left(\frac{\partial f_2( y_{2i})}{\partial\boldsymbol\beta}\right)^\prime.
\end{align*}

\subsubsection{Hessian components for the transformed cut points}
\begin{itemize}
\item $(\theta_1^*)^2$:
\begin{align*}
\ell_{(\theta_1^*)^2i}'' &= \left(\frac{\partial\theta_1}{\partial\theta_1^*}\right)^2\left\{\frac{1}{\nabla_r F_{12.2}(\eta_{1ri})}\nabla_r\left(\left(\frac{\partial\eta_{1ri}}{\partial\theta_1}\right)^2 \left[\frac{\partial^2 F_{12.2}(\eta_{1ri})}{\partial F_1(\eta_{1ri})^2}\left(\frac{\partial F_1(\eta_{1ri})}{\partial\eta_{1ri}}\right)^2 \right. \right. \right. + \\
&\left. \left. \left. \frac{\partial F_{12.2}(\eta_{1ri})}{\partial F_1(\eta_{1ri})}\frac{\partial^2 F_1(\eta_{1ri})}{\partial\eta_{1ri}^2}\right]\right)
%
\left(\frac{1}{\nabla_r F_{12.2}(\eta_{1ri})}\right)^2\left[\nabla_r\left(\frac{\partial\eta_{1ri}}{\partial\theta_1}\frac{\partial F_{12.2}(\eta_{1ri})}{\partial F_1(\eta_{1ri})}\frac{\partial F_1(\eta_{1ri})}{\partial\eta_{1ri}}\right)\right]^2\right\}
\end{align*}
\item $\theta_1^*(\boldsymbol\theta_h^*)^\prime$; $h = 2,\ldots,R$:
\begin{align*}
\ell_{\theta_1^*(\boldsymbol\theta_h^*)^\prime i}''& =\frac{\partial\theta_1}{\partial\theta_1^*}\left\{\frac{1}{\nabla_r F_{12.2}(\eta_{1ri})} \right. \\
& \left. \nabla_r\left(\frac{\partial\eta_{1ri}}{\partial\theta_1}\left[\frac{\partial^2 F_{12.2}(\eta_{1ri})}{\partial F_1(\eta_{1ri})^2}\left(\frac{\partial F_1(\eta_{1ri})}{\partial\eta_{1ri}}\right)^2+\frac{\partial F_{12.2}(\eta_{1ri})}{\partial F_1(\eta_{1ri})}\frac{\partial^2 F_1(\eta_{1ri})}{\partial\eta_{1ri}^2}\right]\left(\frac{\partial\eta_{1ri}}{\partial\boldsymbol\theta_h}\right)^\prime\right)\right. \\
%
%
%
&\left.-\left(\frac{1}{\nabla_r F_{12.2}(\eta_{1ri})}\right)^2 \right. \\
& \left. \nabla_r\left(\frac{\partial\eta_{1ri}}{\partial\theta_1}\frac{\partial F_{12.2}(\eta_{1ri})}{\partial F_1(\eta_{1ri})}\frac{\partial F_1(\eta_{1ri})}{\partial\eta_{1ri}}\right)\nabla_r\left(\frac{\partial F_{12.2}(\eta_{1ri})}{\partial F_1(\eta_{1ri})}\frac{\partial F_1(\eta_{1ri})}{\partial\eta_{1ri}}\left(\frac{\partial\eta_{1ri}}{\partial\boldsymbol\theta_h}\right)^\prime\right) \right\}
\left(\frac{\partial\boldsymbol\theta_h^\prime}{\partial\boldsymbol\theta_h^*}\right)^\prime\nonumber
%
\end{align*}

\item $\theta_1^*\boldsymbol\beta_1^\prime$:
\begin{align*}
\ell_{\theta_1^*\boldsymbol\beta_1^\prime i}'' &=\frac{\partial\theta_1}{\partial\theta_1^*}\left\{\frac{1}{\nabla_r F_{12.2}(\eta_{1ri})} \right.\\
& \left. \nabla_r\left(\frac{\partial\eta_{1ri}}{\partial\theta_1}\left[\frac{\partial^2 F_{12.2}(\eta_{1ri})}{\partial F_1(\eta_{1ri})^2}\left(\frac{\partial F_1(\eta_{1ri})}{\partial\eta_{1ri}}\right)^2+\frac{\partial F_{12.2}(\eta_{1ri})}{\partial F_1(\eta_{1ri})}\frac{\partial^2 F_1(\eta_{1ri})}{\partial\eta_{1ri}^2}\right]\right)\right.\nonumber\\
&\left.-\left(\frac{1}{\nabla_r F_{12.2}(\eta_{1ri})}\right)^2 \right. \\
& \left.\nabla_r\left(\frac{\partial\eta_{1ri}}{\partial\theta_1}\frac{\partial F_{12.2}(\eta_{1ri})}{\partial F_1(\eta_{1ri})}\frac{\partial F_1(\eta_{1ri})}{\partial\eta_{1ri}}\right)\nabla_r\left(\frac{\partial F_{12.2}(\eta_{1ri})}{\partial F_1(\eta_{1ri})}\frac{\partial F_1(\eta_{1ri})}{\partial\eta_{1ri}}\right)\right\}\left(\frac{\partial\eta_{1ri}}{\partial\boldsymbol\beta_1}\right)^\prime\nonumber
%
\end{align*}

\item $\theta_1^*\boldsymbol\beta_2^\prime$; $\vartheta_k = \{\mu_2,\sigma_2\}$:
\begin{align*}
\ell_{\theta_1^*\boldsymbol\beta_2^\prime i}'' &=\frac{\partial\theta_1}{\partial\theta_1^*}\left\{\frac{1}{\nabla_r F_{12.2}(\eta_{1ri})}\nabla_r\left(\frac{\partial\eta_{1ri}}{\partial\theta_1}\frac{\partial^2 F_{12.2}(\eta_{1ri})}{\partial F_1(\eta_{1ri})\partial F_2(y_{2,i})}\frac{\partial F_1(\eta_{1ri})}{\partial\eta_{1ri}}\right)\frac{\partial F_2(y_{2i})}{\partial\eta^{\vartheta_k}_i} \right.\\
&-\left.\left(\frac{1}{\nabla_r F_{12.2}(\eta_{1ri})}\right)^2\nabla_r\left(\frac{\partial\eta_{1ri}}{\partial\theta_1}\frac{\partial F_{12.2}(\eta_{1ri})}{\partial F_1(\eta_{1ri})}\frac{\partial F_1(\eta_{1ri})}{\partial\eta_{1ri}}\right)\frac{\partial\nabla_k F_{12.2}(\eta_{1ri})}{\partial F_2(y_{2i})}\frac{\partial F_2(y_{2i})}{\partial\eta^{\vartheta_k}_i}\right\} \\
&\left(\frac{\partial\eta^{\vartheta_k}_i}{\partial\boldsymbol\beta^{\vartheta_k}}\right)^\prime\nonumber
%
\end{align*}

\item $\theta_1^*(\boldsymbol\beta^{\gamma})^\prime$:
\begin{align*}
\ell_{\theta_1^*(\boldsymbol\beta^{\gamma})^\prime i}''
&= \frac{\partial\theta_1}{\partial\theta_1^*}\left\{\frac{1}{\nabla_r F_{12.2}(\eta_{1ri})}\nabla_r\left(\frac{\partial\eta_{1ri}}{\partial\theta_1}\frac{\partial^2 F_{12.2}(\eta_{1ri})}{\partial F_1(\eta_{1ri})\partial\gamma}\frac{\partial F_1(\eta_{1ri})}{\partial\eta_{1ri}}\right)\right.\nonumber\\
%
&-\left.\left(\frac{1}{\nabla_r F_{12.2}(\eta_{1ri})}\right)^2\nabla_r\left(\frac{\partial\eta_{1ri}}{\partial\theta_1}\frac{\partial F_{12.2}(\eta_{1ri})}{\partial F_1(\eta_{1ri})}\frac{\partial F_1(\eta_{1ri})}{\partial\eta_{1ri}}\right)\frac{\partial\nabla_r F_{12.2}(\eta_{1ri})}{\partial\gamma}\right\}\frac{\partial\gamma}{\partial\eta^{\gamma}_i}\left(\frac{\partial\eta^{\gamma}_i}{\partial\boldsymbol\beta^{\gamma}}\right)^\prime\nonumber
%
%
\end{align*}

\item $\boldsymbol\theta_{\bar h}^*(\boldsymbol\theta_h^*)^\prime$; $\bar h, h = 2,\ldots,R$:
\begin{align*}
\ell_{\boldsymbol\theta_{\bar h}^*(\boldsymbol\theta_h^*)^\prime i}''
&=\frac{1}{\nabla_r F_{12.2}(\eta_{1ri})}\nonumber\\
&\left\{\frac{\partial\boldsymbol\theta_h^\prime}{\partial\boldsymbol\theta_h^*}\nabla_r\left(\left[\frac{\partial^2F_{12.2}(\eta_{1ri})}{\partial F_1(\eta_{1ri})^2}\left(\frac{\partial F_1(\eta_{1ri})}{\partial\eta_{1ri}}\right)^2+\frac{\partial F_{12.2}(\eta_{1ri})}{\partial F_1(\eta_{1ri})}\frac{\partial^2 F_1(\eta_{1ri})}{\partial\eta_{1ri}^2}\right]\frac{\partial\eta_{1ri}}{\partial\boldsymbol\theta_{\bar h}}\left(\frac{\partial\eta_{1ri}}{\partial\boldsymbol\theta_h}\right)^\prime\right) \right. \\ 
& \left.  \left(\frac{\partial\boldsymbol\theta_h^\prime}{\partial\boldsymbol\theta_h^*}\right)^\prime
+\nabla_r\left(\frac{\partial F_{12.2}(\eta_{1ri})}{\partial F_1(\eta_{1ri})}\frac{\partial F_1(\eta_{1ri})}{\partial\eta_{1ri}}\left(\frac{\partial\eta_{1ri}}{\partial\boldsymbol\theta_h}\right)^\prime\right)\frac{\partial^2\boldsymbol\theta_h}{\partial\boldsymbol\theta_{\bar h}^*\partial(\boldsymbol\theta_h^*)^\prime}\right\}\nonumber\\
%
& -\left(\frac{1}{\nabla_r F_{12.2}(\eta_{1ri})}\right)^2\nonumber\\
&\nabla_r\left(\frac{\partial F_{12.2}(\eta_{1ri})}{\partial F_1(\eta_{1ri})}\frac{\partial F_1(\eta_{1ri})}{\partial\eta_{1ri}}\left(\frac{\partial\eta_{1ri}}{\partial\boldsymbol\theta_h}\right)^\prime\right)\left(\frac{\partial\boldsymbol\theta_h^\prime}{\partial\boldsymbol\theta_h^*}\right)^\prime\frac{\partial\boldsymbol\theta_h^\prime}{\partial\boldsymbol\theta_h^*}\nabla_r\left(\frac{\partial F_{12.2}(\eta_{1ri})}{\partial F_1(\eta_{1ri})}\frac{\partial F_1(\eta_{1ri})}{\partial\eta_{1ri}}\frac{\partial\eta_{1ri}}{\partial\boldsymbol\theta_h}\right)\\
%
\\
\quad  \textrm{where} &\quad \frac{\partial^2\boldsymbol\theta_h}{\partial\boldsymbol\theta_{\bar h}^*\partial(\boldsymbol\theta_h^*)^\prime} = \left[\frac{\partial^2\theta_h}{\partial\theta_{\bar h}^*\partial(\theta_h^*)^\prime}\right]_{\bar h = 2,\ldots, R; h = 1,\ldots,R} \\
\quad  \textrm{and} &\quad \frac{\partial^2\theta_h}{\partial\theta_{\bar h}^*\partial\theta_h^*}=\left\{\begin{array}{ll}2 & \textup{if } \bar h=h \\ 0 & \textup{o/w}\end{array}\right.
\end{align*}

\item $\boldsymbol\theta_h^*\boldsymbol\beta_1^\prime$; $h=2,\ldots,R$:
\begin{align*}
\ell_{\boldsymbol\theta_h^*
\boldsymbol\beta_1^\prime i}'' 
&=\frac{\partial\boldsymbol\theta_h^\prime}{\partial\boldsymbol\theta_h^*} \left\{\frac{1}{\nabla_r F_{12.2}(\eta_{1ri})} \right. \\
& \left.\nabla_r\left(\frac{\partial\eta_{1ri}}{\partial\boldsymbol\theta_h}\left[\frac{\partial^2F_{12.2}(\eta_{1ri})}{\partial F_1(\eta_{1ri})^2}\left(\frac{\partial F_1(\eta_{1ri})}{\partial\eta_{1ri}}\right)^2+\frac{\partial F_{12.2}(\eta_{1ri})}{\partial F_1(\eta_{1ri})}\frac{\partial^2 F_1(\eta_{1ri})}{\partial\eta_{1ri}^2}\right]\right)\right.\nonumber\\
&-\left.\left(\frac{1}{\nabla_r F_{12.2}(\eta_{1ri})}\right)^2\nabla_r\left(\frac{\partial\eta_{1ri}}{\partial\boldsymbol\theta_h}\frac{\partial F_{12.2}(\eta_{1ri})}{\partial F_1(\eta_{1ri})}\frac{\partial F_1(\eta_{1ri})}{\partial\eta_{1ri}}\right)\nabla_r\left(\frac{\partial F_{12.2}(\eta_{1ri})}{\partial F_1(\eta_{1ri})}\frac{\partial F_1(\eta_{1ri})}{\partial\eta_{1ri}}\right)\right\} \\
& \left(\frac{\partial\eta_{1ri}}{\partial\boldsymbol\beta_1}\right)^\prime\nonumber
%
\end{align*}

\item $\boldsymbol\theta_h^*\boldsymbol\beta_2^\prime$; $h=2,\ldots,R$; $\vartheta_k = \{\mu_2,\sigma_2\}$:
\begin{align*}
\ell_{\boldsymbol\theta_h^*
\boldsymbol\beta_2^\prime i}'' 
&=\frac{\partial\boldsymbol\theta_h^\prime}{\partial\boldsymbol\theta_h^*} \left\{\frac{1}{\nabla_r F_{12.2}(\eta_{1ri})}\nabla_r\left(\frac{\partial\eta_{1ri}}{\partial\boldsymbol\theta_h}\frac{\partial^2F_{12.2}(\eta_{1ri})}{\partial F_1(\eta_{1ri})\partial F_2(y_{2i})}\frac{\partial F_1(\eta_{1ri})}{\partial\eta_{1ri}}\right)\right.\nonumber\\
&-\left.\left(\frac{1}{\nabla_r F_{12.2}(\eta_{1ri})}\right)^2\nabla_r\left(\frac{\partial\eta_{1ri}}{\partial\boldsymbol\theta_h}\frac{\partial F_{12.2}(\eta_{1ri})}{\partial F_1(\eta_{1ri})}\frac{\partial F_1(\eta_{1ri})}{\partial\eta_{1ri}}\right)\nabla_r\left(\frac{\partial F_{12.2}(\eta_{1ri})}{\partial F_2(y_{2i})}\right)\right\} \\
&\frac{\partial F_2(y_{2i})}{\partial\vartheta_k}\frac{\partial\vartheta_k}{\partial\eta^{\vartheta_k}_i}\left(\frac{\partial\eta^{\vartheta_k}_i}{\partial\boldsymbol\beta^{\vartheta_k}}\right)^\prime\nonumber
\end{align*}

\item $\boldsymbol\theta_h^*(\boldsymbol\beta^{\gamma})^\prime$; $h=2,\ldots,R$:
\end{itemize}
\begin{align*}
\ell_{\boldsymbol\theta_h^*
\boldsymbol(\boldsymbol\beta^\gamma)^\prime i}'' 
&=\frac{\partial\boldsymbol\theta_h^\prime}{\partial\boldsymbol\theta_h^*} \left\{\frac{1}{\nabla_r F_{12.2}(\eta_{1ri})}\nabla_r\left(\frac{\partial\eta_{1ri}}{\partial\boldsymbol\theta_h}\frac{\partial^2F_{12.2}(\eta_{1ri})}{\partial F_1(\eta_{1ri})\partial\gamma}\frac{\partial F_1(\eta_{1ri})}{\partial\eta_{1ri}}\right)\right.\nonumber\\
&-\left.\left(\frac{1}{\nabla_r F_{12.2}(\eta_{1ri})}\right)^2\nabla_r\left(\frac{\partial\eta_{1ri}}{\partial\boldsymbol\theta_h}\frac{\partial F_{12.2}(\eta_{1ri})}{\partial F_1(\eta_{1ri})}\frac{\partial F_1(\eta_{1ri})}{\partial\eta_{1ri}}\right)\nabla_r\left(\frac{\partial F_{12.2}(\eta_{1ri})}{\partial\gamma}\right)\right\} \\
&\frac{\partial F_2(y_{2i})}{\partial\gamma}\frac{\partial\gamma}{\partial\eta^{\gamma}_i}\left(\frac{\partial\eta^{\gamma}_i}{\partial\boldsymbol\beta^{\gamma}}\right)^\prime\nonumber
\end{align*}

\subsubsection{Hessian components for \texorpdfstring{$\boldsymbol\beta_1$}{b1}}
\begin{itemize}
\item $\boldsymbol\beta_1\boldsymbol\beta_1^\prime$:
\begin{align*}
\ell_{\boldsymbol\beta_1\boldsymbol\beta_1^\prime i}''
&=\frac{\partial\eta_{1ri}}{\partial\boldsymbol\beta_1}\left\{\frac{1}{\nabla_r F_{12.2}(\eta_{ri})}\nabla_r\left(\frac{\partial^2 F_{12.2}(\eta_{1ri})}{\partial F_1(\eta_{1ri})^2}\left(\frac{\partial F_1(\eta_{1ri})}{\partial\eta_{1ri}}\right)^2+\frac{\partial F_{12.2}(\eta_{1ri})}{\partial F_1(\eta_{1ri})}\frac{\partial^2 F_1(\eta_{1ri})}{\partial\eta_{1ri}^2}\right)\right. \\
&-\left.\left(\frac{1}{\nabla_r F_{12.2}(\eta_{1ri})}\right)^2 \left[\nabla_r\left(\frac{\partial F_{12.2}(\eta_{1ri})}{\partial F_1(\eta_{1ri})} \left. \frac{\partial F_1(\eta_{1ri})}{\partial\eta_{1ri}}\right)\right]^2\right\}
\left(\frac{\partial\eta_{1ri}}{\partial\boldsymbol\beta_1}\right)^\prime\right. \notag
\end{align*}

\item $\boldsymbol\beta_1\boldsymbol\beta_2^\prime$; $h=2,\ldots,R$; $\vartheta_k = \{\mu_2,\sigma_2\}$:
\begin{align*}
\ell_{\boldsymbol\beta_1\boldsymbol\beta_2^\prime i}''
&=\frac{\partial\eta_{1ri}}{\partial\boldsymbol\beta_1}\left\{\frac{1}{\nabla_r F_{12.2}(\eta_{1ri})}\nabla_r\left(\frac{\partial^2 F_{12.2}(\eta_{1ri})}{\partial F_1(\eta_{1ri})\partial F_2(y_{2i})}\frac{\partial F_1(\eta_{1ri})}{\partial\eta_{1ri}}\right)\right.\nonumber\\
&\left.-\left(\frac{1}{\nabla_r F_{12.2}(\eta_{1ri})}\right)^2\nabla_r\left(\frac{\partial F_{12.2}(\eta_{1ri})}{\partial F_1(\eta_{1ri})}\frac{\partial F_1(\eta_{1ri})}{\partial\eta_{1ri}}\right)\nabla_r\left(\frac{\partial F_{12.2}(\eta_{1ri})}{\partial F_2(y_{2i})}\right)\right\}\frac{\partial F_2( y_{2i})}{\partial\eta^{\vartheta_k}_i}\left(\frac{\partial\eta^{\vartheta_k}_i}{\partial\boldsymbol\beta^{\vartheta_k}}\right)^\prime\nonumber
\end{align*}

\item $\boldsymbol\beta_1(\boldsymbol\beta^{\gamma})^\prime$:
\begin{align*}
\ell_{\boldsymbol\beta_1\boldsymbol\beta_{\gamma}^\prime i}'' & =\frac{\partial\eta_{1ri}}{\partial\boldsymbol\beta_1}\left\{\frac{1}{\nabla_r F_{12.2}(\eta_{1ri})}\nabla_r\left(\frac{\partial^2 F_{12.2}(\eta_{1ri})}{\partial F_1(\eta_{1ri})\partial\gamma}\frac{\partial F_1(\eta_{1ri})}{\partial\eta_{1ri}}\right)\right.\\
&-\left.\left(\frac{1}{\nabla_r F_{12.2}(\eta_{1ri})}\right)^2\nabla_r\left(\frac{\partial F_{12.2}(\eta_{1ri})}{\partial F_1(\eta_{1ri})}\frac{\partial F_1(\eta_{1ri})}{\partial\eta_{1ri}}\right)\nabla_r\left(\frac{\partial F_{12.2}(\eta_{1ri})}{\partial\gamma}\right)\right\}\frac{\partial\gamma}{\partial\eta^{\gamma}_i}\left(\frac{\partial\eta^{\gamma}_i}{\partial\boldsymbol\beta^{\gamma}}\right)^\prime\nonumber
\end{align*}
\end{itemize}

\subsubsection{Hessian components for \texorpdfstring{$\boldsymbol\beta_2$}{b2}}
\begin{itemize}
\item $\boldsymbol\beta_2\boldsymbol\beta_2^\prime$; $\vartheta_{\bar k}, \vartheta_k = \{\mu_2,\sigma_2\}$:
\begin{align*}
\ell_{\boldsymbol\beta_2\boldsymbol\beta_2^\prime i}'' & = \frac{\partial\eta^{\vartheta_{\bar k}}_i}{\partial\boldsymbol\beta^{\vartheta_{\bar k}}}\left\{\frac{1}{\nabla_r F_{12.2}(\eta_{1ri})}\left(\frac{\partial^2\nabla_r F_{12.2}(\eta_{1ri})}{\partial F_2( y_{2i})^2}\frac{\partial F_2( y_{2i})}{\partial\eta^{\vartheta_{\bar k}}_i}\frac{\partial F_2( y_{2i})}{\partial\eta^{\vartheta_k}_i}+\frac{\partial\nabla_r F_{12.2}(\eta_{1ri})}{\partial F_2(y_{2i})}\frac{\partial^2 F_2( y_{2i})}{\partial\eta^{\vartheta_{\bar k}}_i\partial\eta^{\vartheta_k}_i}\right)\right.\\
&-\left(\frac{1}{\nabla_2 F_{12.2}(\eta_{1ri}}\right)^2\left(\frac{\partial\nabla_r F_{12.2}(\eta_{1ri})}{\partial F_2(y_{2i})}\frac{\partial F_2( y_{2i})}{\partial\eta^{\vartheta_{\bar k}}_i}\right)\left(\frac{\partial\nabla_r F_{12.2}(\eta_{1ri})}{\partial F_2(y_{2i})}\frac{\partial F_2( y_{2i})}{\partial\eta^{\vartheta_k}_i}\right)\\
&+\left.f_2(y_{2i})^{-1}\frac{\partial^2 f_2( y_{2i})}{\partial\eta^{\vartheta_{\bar k}}_i\partial\eta^{\vartheta_k}_i}-f_2(y_{2i})^{-2}\frac{\partial f_2( y_{2i})}{\partial\eta^{\vartheta_{\bar k}}_i}\frac{\partial f_2( y_{2i})}{\partial\eta^{\vartheta_k}_i}\right\}\left(\frac{\partial\eta^{\vartheta_{\bar k}}_i}{\partial\boldsymbol\beta^{\vartheta_k}}\right)^\prime
\end{align*}
\item $\boldsymbol\beta_2(\boldsymbol\beta^{\gamma})^\prime$; $\vartheta_k = \{\mu_2,\sigma_2\}$:
\begin{align*}
\ell_{\boldsymbol\beta_2(\boldsymbol\beta^{\gamma})^\prime i}''
&=\frac{\partial\eta^{\vartheta_k}_i}{\partial\boldsymbol\beta^{\vartheta_k}}\left\{\frac{1}{\nabla_r F_{12.2}(\eta_{1ri})}\left(\frac{\partial^2\nabla_r F_{12.2}(\eta_{1ri})}{\partial F_2(y_{2,i})\partial\gamma}\frac{\partial F_2( y_{2i})}{\partial\eta^{\vartheta_k}_i}\frac{\partial\gamma}{\partial\eta^{\gamma}_i}\right)\right.\nonumber\\
& -\left.\left(\frac{1}{\nabla_r F_{12.2}(\eta_{1ri})}\right)^2 \right. \left(\frac{\partial\nabla_r F_{12.2}(\eta_{1ri})}{\partial F_2(y_{2i})}\left. \frac{\partial F_2(y_{2i})}{\partial\eta^{\vartheta_k}_i}\right)\left(\frac{\partial\nabla_r F_{12.2}(\eta_{1ri})}{\partial\gamma}\frac{\partial\gamma}{\partial\eta^{\gamma}_i}\right)\right\}\left(\frac{\partial\eta^{\gamma}_i}{\partial\boldsymbol\beta^{\gamma}}\right)^\prime\nonumber
\end{align*}
\end{itemize}
\subsubsection{Hessian components for the copula association parameter}
\begin{itemize}
\item $\boldsymbol\beta^\gamma(\boldsymbol\beta^\gamma)^\prime$:
\begin{align*}
\ell_{\boldsymbol\beta^\gamma(\boldsymbol\beta^\gamma)^\prime i}''
&= \frac{\partial\eta^\gamma_i}{\partial\boldsymbol\beta^{\gamma}}\left\{\frac{1}{\nabla_r F_{12.2}(\eta_{1ri})}\left(\frac{\partial^2\nabla_r F_{12.2}(\eta_{1ri})}{\partial\gamma^2} \left(\frac{\partial\gamma}{\partial\eta^\gamma_i}\right)^2+\frac{\partial\nabla_r F_{12.2}(\eta_{1ri})}{\partial\gamma}\frac{\partial^2\gamma}{\partial(\eta^\gamma_i)^2}\right)\right. \notag \\
&-\left.\left(\frac{1}{\nabla_r F_{12.2}(\eta_{1ri})}\right)^2\left(\frac{\partial\nabla_r F_{12.2}(\eta_{1ri})}{\partial\gamma}\frac{\partial\gamma}{\partial\eta^\gamma_i}\right)^2\right\}\left(\frac{\partial\eta^\gamma_i}{\partial\boldsymbol\beta^\gamma}\right)^{\prime}\nonumber
\end{align*}
\end{itemize}


\section{Remarks on asymptotic properties \label{sec:apx_asymp}}

Asymptotic results for the proposed estimator can be derived along the lines of \citet{Marra.2017} and \cite{Donat2017}, for instance. Consider the Maximum Penalized Likelihood estimator
\begin{align*}
    \hat{\boldsymbol \beta} = \arg \underset{\boldsymbol{\beta}}{\max}   \, \ell_p(\boldsymbol \beta),
\end{align*}
where the penalized log-likelihood $\ell_p$ is given in equation (\ref{lik_pen}) and the parameter vector $\boldsymbol{\beta}$ comprises coefficients for the transformed cut points, all distributional parameters, and the copula parameter, i.e. $\boldsymbol{\beta} = (\theta^*_1, \dots, \theta^*_R, \boldsymbol{\beta}_1^{\prime}, \boldsymbol{\beta}_2^{\prime}, \boldsymbol\beta^{\boldsymbol\gamma \prime})^{\prime}$. The situation under consideration has a fixed number of spline bases such that the unknown smooth functions may not be exactly represented as linear combinations of given basis functions. However, as for example noted in \citet{Kauermann.2005}, using a large number of basis functions the approximation bias plays a minor role compared to estimation variability.   

If $L^t$ is the likelihood of the true model, its Kullback-Leibler distance to likelihood $L(\boldsymbol \beta)$ is 
\begin{align*}
    KL(L^t || L(\boldsymbol \beta)) = E (\ell^t - \ell(\boldsymbol \beta)).
\end{align*}

Defining the minimizer of this distance as $\boldsymbol{\beta}^0 = (\theta^{*0}_1, \dots, \theta^{*0}_{c-1}, \boldsymbol{\beta}_1^{0\prime}, \boldsymbol{\beta}_2^{0\prime}, \boldsymbol \beta^{\gamma0\prime})^{\prime}$ leads to
\begin{align*}
    \boldsymbol{\beta}^0 = \arg \underset{\boldsymbol{\beta}}{\min}  KL(L^t || L(\boldsymbol \beta)). 
\end{align*}
Thus, $\boldsymbol{\beta}^0$ minimizes the unpenalized log-likelihood $\ell(\cdot)$, that is $\mathbb E(\boldsymbol g(\boldsymbol \beta^0)) = 0,$ with $\boldsymbol g$ being the gradient of $\ell(\cdot)$. The Hessian is denoted by $\boldsymbol{H}(\boldsymbol{\beta})$. Both gradient and Hessian have penalized versions:
\begin{align*}
    \boldsymbol g_p(\boldsymbol{\beta}) = \boldsymbol{g}(\boldsymbol{\beta}) - \boldsymbol{S}_{\boldsymbol\lambda} \boldsymbol{\beta}, \\
    \boldsymbol H_p(\boldsymbol{\beta}) = \boldsymbol{H}(\boldsymbol{\beta}) - \boldsymbol{S}_{\boldsymbol\lambda} . 
\end{align*}
We assume the following set of conditions related to gradient and Hessian: 
\begin{itemize}
    \item [(A1)] $\boldsymbol g(\boldsymbol\beta^0) = O_P(n^{1/2})$
    \item[(A2)] $\mathbb E(\boldsymbol H(\boldsymbol \beta^0)) = O(n)$
    \item[(A3)] $\boldsymbol H(\boldsymbol \beta^0) - \mathbb E(\boldsymbol H(\boldsymbol \beta^0) ) =  O_P(n^{1/2}) $
    \item[(A4)] $\boldsymbol S_{\boldsymbol \lambda} = o(n^{1/2})$
\end{itemize}
The first three assumptions are standard conditions when showing consistency of the un-penalized MLE and (A1) and (A3) mean that $\frac{1}{n}\boldsymbol g(\boldsymbol \beta^0)$ and $\frac{1}{n}\boldsymbol H(\boldsymbol \beta^0)$ converge in probability to their expected values with rate $n^{1/2}$. \citet{Kauermann.2005} gives an alternative formulation of (A4): $\lambda^{\vartheta_k} = o(n^{1/2})$, which means that the penalty sub-matrices of $\boldsymbol S_{\boldsymbol\lambda}$ corresponding to $\vartheta_k$ are asymptotically bounded and the penalty term becomes less and less important for the fitting procedure as $n\rightarrow \infty$.  

Regarding consistency, one has to show that 
\begin{align}\label{eq:consistency}
 \hat{\boldsymbol\beta} - \boldsymbol\beta^0  =O_P(n^{-1/2}) \qquad \textrm{with} \, n\rightarrow \infty.   
\end{align}
\begin{prop}\label{prop:asym}

Let $\boldsymbol \beta^0$ be the "true" parameter vector as defined above, under conditions (A1) - (A4) the penalized ML estimator $\hat{\boldsymbol \beta}$ satisfies 
\begin{align*}
    \hat{\boldsymbol{\beta}} - \boldsymbol \beta^0 = (-\mathbb E(\boldsymbol H (\boldsymbol \beta^0)) + \boldsymbol S_{\boldsymbol \lambda})^{-1} (\boldsymbol g(\boldsymbol \beta^0) - \boldsymbol S_{\boldsymbol \lambda} \boldsymbol  \beta^0) (\boldsymbol I + o_P(1))
\end{align*}
implying convergence in probability at rate $n^{-1/2}$ and hence consistency of $\hat{\boldsymbol\beta}$.
\end{prop}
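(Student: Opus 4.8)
\emph{Proof strategy.} The plan is to exploit the first-order optimality condition satisfied by $\hat{\boldsymbol\beta}$ together with a mean-value expansion of the unpenalized gradient around $\boldsymbol\beta^0$. Since $\hat{\boldsymbol\beta}$ maximizes $\ell_p$, it solves $\boldsymbol g_p(\hat{\boldsymbol\beta}) = \boldsymbol g(\hat{\boldsymbol\beta}) - \boldsymbol S_{\boldsymbol\lambda}\hat{\boldsymbol\beta} = \boldsymbol 0$. First I would write $\boldsymbol g(\hat{\boldsymbol\beta}) = \boldsymbol g(\boldsymbol\beta^0) + \bar{\boldsymbol H}\,(\hat{\boldsymbol\beta} - \boldsymbol\beta^0)$ with $\bar{\boldsymbol H} := \int_0^1 \boldsymbol H(\boldsymbol\beta^0 + t(\hat{\boldsymbol\beta}-\boldsymbol\beta^0))\,dt$, split $\boldsymbol S_{\boldsymbol\lambda}\hat{\boldsymbol\beta} = \boldsymbol S_{\boldsymbol\lambda}\boldsymbol\beta^0 + \boldsymbol S_{\boldsymbol\lambda}(\hat{\boldsymbol\beta}-\boldsymbol\beta^0)$, substitute into the optimality condition, and collect the terms proportional to $\hat{\boldsymbol\beta}-\boldsymbol\beta^0$ to obtain
\begin{align*}
(\bar{\boldsymbol H} - \boldsymbol S_{\boldsymbol\lambda})(\hat{\boldsymbol\beta} - \boldsymbol\beta^0) = -\big(\boldsymbol g(\boldsymbol\beta^0) - \boldsymbol S_{\boldsymbol\lambda}\boldsymbol\beta^0\big),
\end{align*}
and hence, whenever the matrix is invertible, $\hat{\boldsymbol\beta}-\boldsymbol\beta^0 = (-\bar{\boldsymbol H} + \boldsymbol S_{\boldsymbol\lambda})^{-1}(\boldsymbol g(\boldsymbol\beta^0) - \boldsymbol S_{\boldsymbol\lambda}\boldsymbol\beta^0)$.

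Next I would upgrade $\bar{\boldsymbol H}$ to the expected Hessian at $\boldsymbol\beta^0$. Writing $-\bar{\boldsymbol H} + \boldsymbol S_{\boldsymbol\lambda} = (-\mathbb E(\boldsymbol H(\boldsymbol\beta^0)) + \boldsymbol S_{\boldsymbol\lambda})(\boldsymbol I + \boldsymbol R)$ with $\boldsymbol R := (-\mathbb E(\boldsymbol H(\boldsymbol\beta^0)) + \boldsymbol S_{\boldsymbol\lambda})^{-1}(\mathbb E(\boldsymbol H(\boldsymbol\beta^0)) - \bar{\boldsymbol H})$, I would argue as follows. By (A2) the matrix $-\mathbb E(\boldsymbol H(\boldsymbol\beta^0))$ is $O(n)$ and, being (minus) the expected Hessian at the Kullback--Leibler minimizer, is positive definite with eigenvalues of order $n$; together with (A4) this makes $(-\mathbb E(\boldsymbol H(\boldsymbol\beta^0)) + \boldsymbol S_{\boldsymbol\lambda})^{-1}$ of order $O(n^{-1})$. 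The factor $\mathbb E(\boldsymbol H(\boldsymbol\beta^0)) - \bar{\boldsymbol H}$ I would decompose as $(\mathbb E(\boldsymbol H(\boldsymbol\beta^0)) - \boldsymbol H(\boldsymbol\beta^0)) + (\boldsymbol H(\boldsymbol\beta^0) - \bar{\boldsymbol H})$: the first term is $O_P(n^{1/2}) = o_P(n)$ by (A3), and the second is $o_P(n)$ by a local continuity/Lipschitz bound on the Hessian combined with a preliminary $\hat{\boldsymbol\beta}-\boldsymbol\beta^0 = o_P(1)$. Hence $\boldsymbol R = o_P(1)$, $(\boldsymbol I + \boldsymbol R)^{-1} = \boldsymbol I + o_P(1)$, which yields the stated representation $\hat{\boldsymbol\beta}-\boldsymbol\beta^0 = (-\mathbb E(\boldsymbol H(\boldsymbol\beta^0)) + \boldsymbol S_{\boldsymbol\lambda})^{-1}(\boldsymbol g(\boldsymbol\beta^0) - \boldsymbol S_{\boldsymbol\lambda}\boldsymbol\beta^0)(\boldsymbol I + o_P(1))$. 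The rate then drops out: by (A1) $\boldsymbol g(\boldsymbol\beta^0) = O_P(n^{1/2})$ and by (A4) $\boldsymbol S_{\boldsymbol\lambda}\boldsymbol\beta^0 = o(n^{1/2})$, so the right factor is $O_P(n^{1/2})$; multiplied by the $O(n^{-1})$ inverse information matrix this gives $\hat{\boldsymbol\beta}-\boldsymbol\beta^0 = O_P(n^{-1/2})$, i.e.\ (\ref{eq:consistency}).

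The main obstacle is precisely the preliminary consistency $\hat{\boldsymbol\beta}-\boldsymbol\beta^0 = o_P(1)$ that I invoked twice above: it is needed both to legitimise the mean-value expansion (so that the remainder point stays near $\boldsymbol\beta^0$) and to control $\boldsymbol H(\boldsymbol\beta^0) - \bar{\boldsymbol H}$, yet the expansion point depends on $\hat{\boldsymbol\beta}$ itself, so one cannot reason linearly from the outset. I would close this gap with a standard Brouwer fixed-point / Aitchison--Silvey argument: show that, with probability tending to one, the map corresponding to one Newton-type step of $\ell_p$ sends a ball of radius $C n^{-1/2}$ around $\boldsymbol\beta^0$ into itself — bounding the gradient on the ball via (A1) and the Hessian via (A2)--(A4) and its local regularity — so that a zero of $\boldsymbol g_p$, and by the negative definiteness of $-\mathbb E(\boldsymbol H(\boldsymbol\beta^0))$ together with (A3)--(A4) the maximizer $\hat{\boldsymbol\beta}$, lies in that ball. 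This mirrors the reasoning in \citet{Marra.2017} and \citet{Kauermann.2005}, to which the remaining routine bookkeeping can be deferred.
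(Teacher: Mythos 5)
Your argument follows essentially the same route as the paper's proof: the first-order condition $\boldsymbol g_p(\hat{\boldsymbol\beta})=\boldsymbol 0$, an expansion of the gradient about $\boldsymbol\beta^0$, inversion of the penalized Hessian, and its decomposition into the penalized Fisher information $\boldsymbol F(\boldsymbol\lambda)=-\mathbb E(\boldsymbol H(\boldsymbol\beta^0))+\boldsymbol S_{\boldsymbol\lambda}$ plus a stochastic error term controlled by (A2)--(A4) through a Neumann-type expansion yielding the $(\boldsymbol I+o_P(1))$ factor. If anything you are more careful than the paper, which simply truncates the Taylor expansion with ``$+\dots$'' and does not address either the integral-form remainder or the preliminary consistency that you supply via the Brouwer/Aitchison--Silvey argument.
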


\begin{proof}
Using the Taylor expansion of $\boldsymbol g_p(\hat{\boldsymbol \beta})$ at point $\boldsymbol \beta^0$, yields:
\begin{align*}
    \boldsymbol g_p(\hat{\boldsymbol \beta}) 
   & = \boldsymbol g_p({\boldsymbol \beta^0})  
    + \boldsymbol H_p(\boldsymbol \beta^0) (\hat{\boldsymbol \beta} - \boldsymbol \beta^0) 
    + \dots \\
    \textrm{since} \quad \boldsymbol g_p(\hat{\boldsymbol \beta}) = 0 \Longleftrightarrow
    \hat{\boldsymbol \beta} - \boldsymbol \beta^0 &= - \boldsymbol H_p(\boldsymbol \beta^0)^{-1} \boldsymbol g_p(\boldsymbol \beta^0) \\
    &= - \boldsymbol H_p(\boldsymbol \beta^0)^{-1} (\boldsymbol g (\boldsymbol \beta^0) - \boldsymbol S_{\boldsymbol \lambda} \boldsymbol \beta^0)
\end{align*}
We decompose  $\boldsymbol H_p(\boldsymbol \beta^0)$ and obtain:
\begin{align*}
    \boldsymbol H_p(\boldsymbol \beta^0) 
    = \boldsymbol H (\boldsymbol \beta^0) - \mathbb E(\boldsymbol H(\boldsymbol \beta^0)) - (-\mathbb E(\boldsymbol H(\boldsymbol \beta^0)) + \boldsymbol S_{\boldsymbol \lambda}).
\end{align*}
Upon defining $\boldsymbol R= \boldsymbol H (\boldsymbol \beta^0) - \mathbb E(\boldsymbol H(\boldsymbol \beta^0))$ as a stochastic error term and the penalized Fisher information matrix $\boldsymbol F = -\mathbb E(\boldsymbol H(\boldsymbol \beta^0)) + \boldsymbol S_{\boldsymbol \lambda})$, we calculate the Taylor expansion of $f(\boldsymbol{R}) = \boldsymbol H_p^{-1}(\boldsymbol \beta^0)$ at $f(\boldsymbol{0})$. The auxiliary function $f(\cdot) = (\cdot - \boldsymbol F(\lambda))^{-1}$ takes a matrix as input. We obtain:  
\begin{align*}
     \boldsymbol H_p^{-1}(\boldsymbol \beta^0) &= - \boldsymbol F(\boldsymbol\lambda)^{-1} - \boldsymbol F(\boldsymbol \lambda)^{-1} \boldsymbol R (\boldsymbol F(\boldsymbol \lambda))^{-1})^{\prime} + \dots
     \\
     & = - \boldsymbol F(\boldsymbol \lambda)^{-1} (\boldsymbol I + \boldsymbol{RF}(\boldsymbol \lambda)^{-1} + \dots)\\
     & = - \boldsymbol F(\boldsymbol \lambda)^{-1}(\boldsymbol I + O_P(n^{-1/2})) \qquad \textrm{with (A2)-(A4)}\\
      \Longleftrightarrow \hat{\boldsymbol{\beta}} - \boldsymbol \beta^0 &= (-\mathbb E(\boldsymbol H (\boldsymbol \beta^0)) + \boldsymbol S_{\boldsymbol \lambda})^{-1} (\boldsymbol g(\boldsymbol \beta^0) - \boldsymbol S_{\boldsymbol \lambda} \boldsymbol  \beta^0) (\boldsymbol I + o_P(1)),
\end{align*}
which proves the stated proposition and hence consistency as in equation (\ref{eq:consistency}). 
\end{proof}

The argumentation above is in line with maximum likelihood theory and is also adopted by \citet{Kauermann.2005} and \citet{Kauermann.2009} to derive asymptotic results on penalized spline smoothing. With Proposition \ref{prop:asym}, we can also derive the bias and covariance matrix of $\hat{\boldsymbol \beta}$, i.e. 
\begin{align*}
    \textrm{bias:} \quad &\mathbb E(\hat{\boldsymbol \beta}) - \boldsymbol \beta^0 = - \boldsymbol F(\boldsymbol \lambda)^{-1}\boldsymbol S_{\boldsymbol \lambda} \boldsymbol\beta^0(\boldsymbol I + o_P(1)) \quad \textrm{with} \quad \mathbb E(\boldsymbol g(\boldsymbol \beta^0)) = 0 \\
    \textrm{covariance:} \quad & Cov(\hat{\boldsymbol \beta})  = - \boldsymbol F(\boldsymbol \lambda)^{-1}
    \mathbb E(\boldsymbol H{\boldsymbol\beta^0} F(\boldsymbol \lambda)^{-1}(\boldsymbol I + o_P(1)) \quad \textrm{with} \quad Cov(\boldsymbol g(\boldsymbol \beta^0)) = - \mathbb E(\boldsymbol H(\boldsymbol \beta^0)). 
\end{align*}
Taking these considerations together with (A2) and (A4), we can characterise the asymptotic order of bias and covariance matrix as $o(n^{-1/2})$ and $O(n^{-1})$, respectively.

To guarantee an asymptotically normal behaviour of the score, we need an additional assumption: 
\begin{itemize}
    \item[(A5)] $\forall \beta^s \in \boldsymbol \beta: {\partial^3 \ell(\boldsymbol \beta)}/{\partial \beta^{s3}}$ exists and is bounded in the neighbourhood of $\beta_0^s$, that is \newline
    $|{\partial^3 \ell(\boldsymbol \beta)}/{\partial \beta^{s3}}| \leq M(\nu)$,  with  $E(M(\nu)|\beta^{0s})<\infty$, for all $\nu \in \mathds{R}$. Furthermore, $0\leq \boldsymbol{I}(\beta^{0s}) < \infty$. 
\end{itemize}

As we assume the observations to be independent, $\boldsymbol g(\boldsymbol \beta^0)$ and $\boldsymbol H(\boldsymbol \beta^0 )$  consist of sums of i.i.d. random variables. Therefore, 
\begin{align*}
    (-\mathbb E(\boldsymbol H(\boldsymbol \beta^0)))^{-1/2} \boldsymbol g(\boldsymbol \beta^0) \xrightarrow{d} \mathcal{N}(\boldsymbol{0}, \boldsymbol{I}),
\end{align*}
which, together with Proposition (\ref{prop:asym}), implies asymptotic normality of $\hat{\boldsymbol \beta}$. As in \citet{Radice2016} note, the normal approximation is not accurate in case the copula parameter is bounded and the sample size is small. 


\section{Simulations} \label{apx:simulation}

To assess the method's effectiveness, we report the results of four simulation exercises. The first scenario corresponds to our application setting in terms of marginals and type of copula, i.e. using a normal copula and the lognormal distribution as the continuous marginal. We then change either the second marginal to a gamma distribution (scenario 2) or the copula to a Joe copula (scenario 3), or both (scenario 4). 

In all scenarios, the response vector consists of an ordinal outcome with 3 levels and a continuous variable which follows either a lognormal (scenario 1 \& 3) or a gamma (scenario 2 \& 4) distribution. The five covariates, $x_1, x_2, x_3, \nu_1, \nu_2$ are uniformly distributed on the $[-2,2]$ interval and two of them, $\nu_1$ and $\nu_2$, enter the model in a non-linear fashion. Each parameter's additive predictor is constructed in a way that they are roughly symmetrical around zero with most values in the $[-3,3]$ interval. The copula is Gaussian in scenario 1 and Joe copula in scenario 2.  

The $i$-th predictors are constructed as follows:
\begin{align*}
    \eta_i^{\mu_1} & = \theta_{ri} -                          \left(\beta_1^{\mu_1}x_{1i} +
                                   s_1^{\mu_1}(\nu_{1i}) +
                                   s_2^{\mu_1}(\nu_{2i}) \right)\\
    \eta_i^{\mu_2} & = \beta_0^{\mu_2} +
                                  \beta_1^{\mu_2}x_{1i} +
                                  \beta_2^{\mu_2}x_{2i} +
                                   s_3^{\mu_2}(\nu_{1i})\\
    \eta_i^{\sigma_2} & = \beta_0^{\sigma_2} +                \beta_1^{\sigma_2}x_{3i}\\
    \eta_i^{\gamma} & = \beta_0^{\gamma} + s_3^{\gamma}(\nu_{2i}),\\  
\end{align*} 
where $s_1, s_2, s_3$ are the three different smooth functions below:
\begin{align*}
 s_1(\nu) &= \nu * \sin(3 * \nu) \\
  s_2(\nu) & = \sin(2*\nu) + 0.5 * \nu \\
  s_3(\nu) &=  3*\nu* \cos(\nu). 
\end{align*} 
For each scenario we run three different cases that differ in number of observations. We consider case 1 with $n= 1,000$, case 2 with $n=3,000$ and case 3 with $n = 10,000$. For each iteration, we store the coefficients of the linear effects and estimated smooth functions. 
There were also iterations that gave warning messages about the convergence of the model. As these warnings often indicates that the model is too complicated for the number of observation, convergence fails more frequently for the cases with $n=1,000$. Sometimes, even if these warnings occur, the fit might be reasonable. Users are nonetheless advised to check the model carefully whenever warning messages are returned. We decided to drop iterations with warnings and move on to the next ones until 100 simulation runs without warnings were obtained. The number of iterations displaying warning messages are reported in Table \ref{tab:sim_warnings}. The number of those warnings decreases drastically as the sample size increases. For scenario 1, no iterations had to be excluded due to non-convergence, while for scenarios 3 and 4 more iterations returned warning messages. Although the number of parameters is equal in all four scenarios, based on the number of repetitions, it seems that scenarios 1 and 2 using a Gaussian copula are relatively easier to be fitted than scenarios 3 and 4 which employ a Joe copula.    

Note that the aim of this simulation is to check the implementation and the ability to estimate reliably the model's coefficients. How well \texttt{GJRM()} selects the correct copula specification via the AIC and BIC is considered elsewhere \citep[e.g.,][]{Marra.2017,Radice2016}.   

Figure \ref{fig:sim_lin_scen1} shows the boxplots of all estimated coefficients for the linear effects in scenario 1. The estimator captures the effect fairly well and the performance improves with the sample size.

\begin{figure} 
  \begin{subfigure}{0.5\textwidth}
    \includegraphics[scale = 0.4]{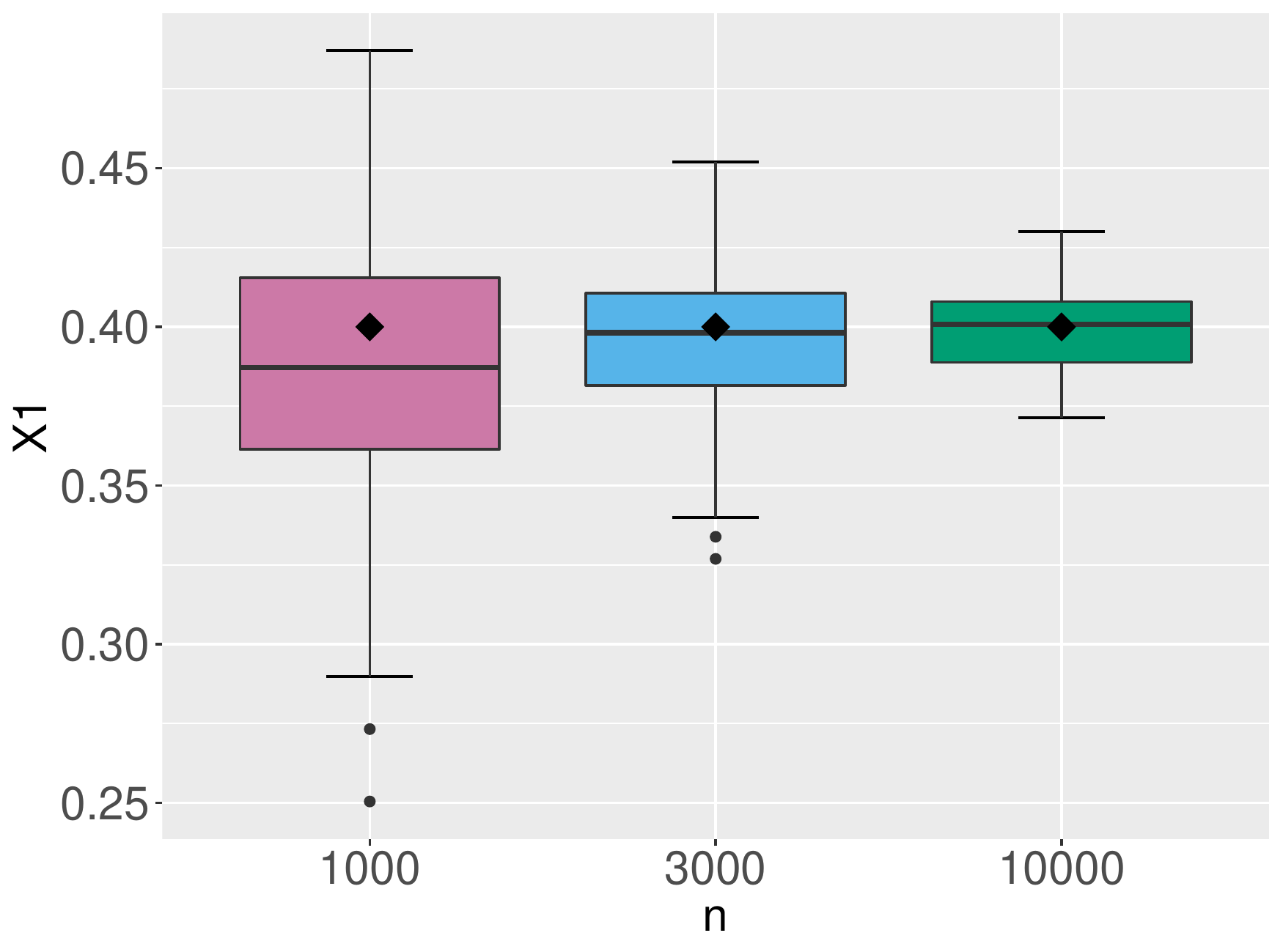}
    \caption{coefficients for $x_1$ of $\mu_1$}
  \end{subfigure}%
  ~ 
  \begin{subfigure}{0.5\textwidth}
    \includegraphics[scale = 0.4]{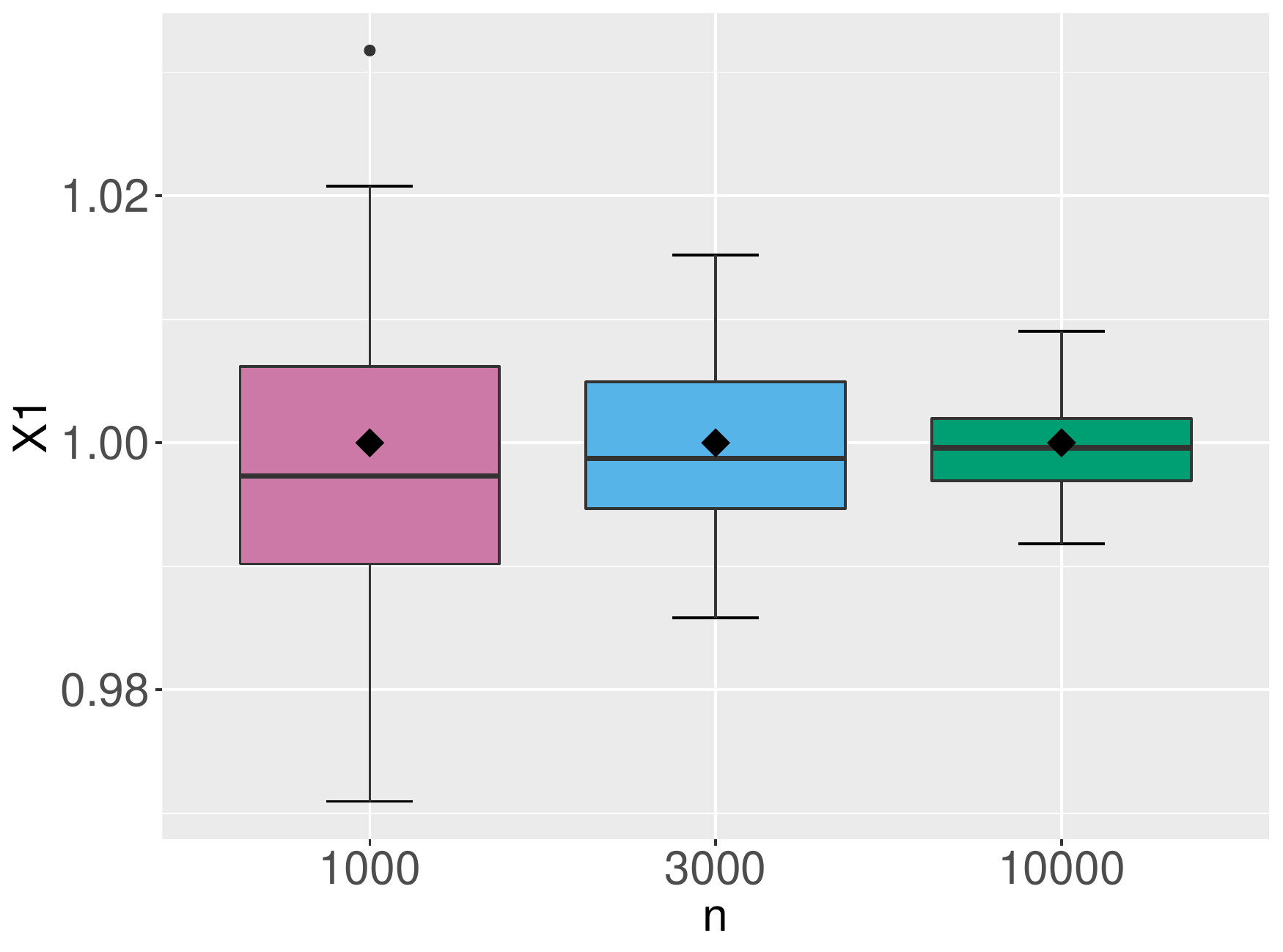}
    \caption{coefficients for $x_1$ of $\mu_2$}
  \end{subfigure}
  \begin{subfigure}{0.5\textwidth}
    \includegraphics[scale = 0.4]{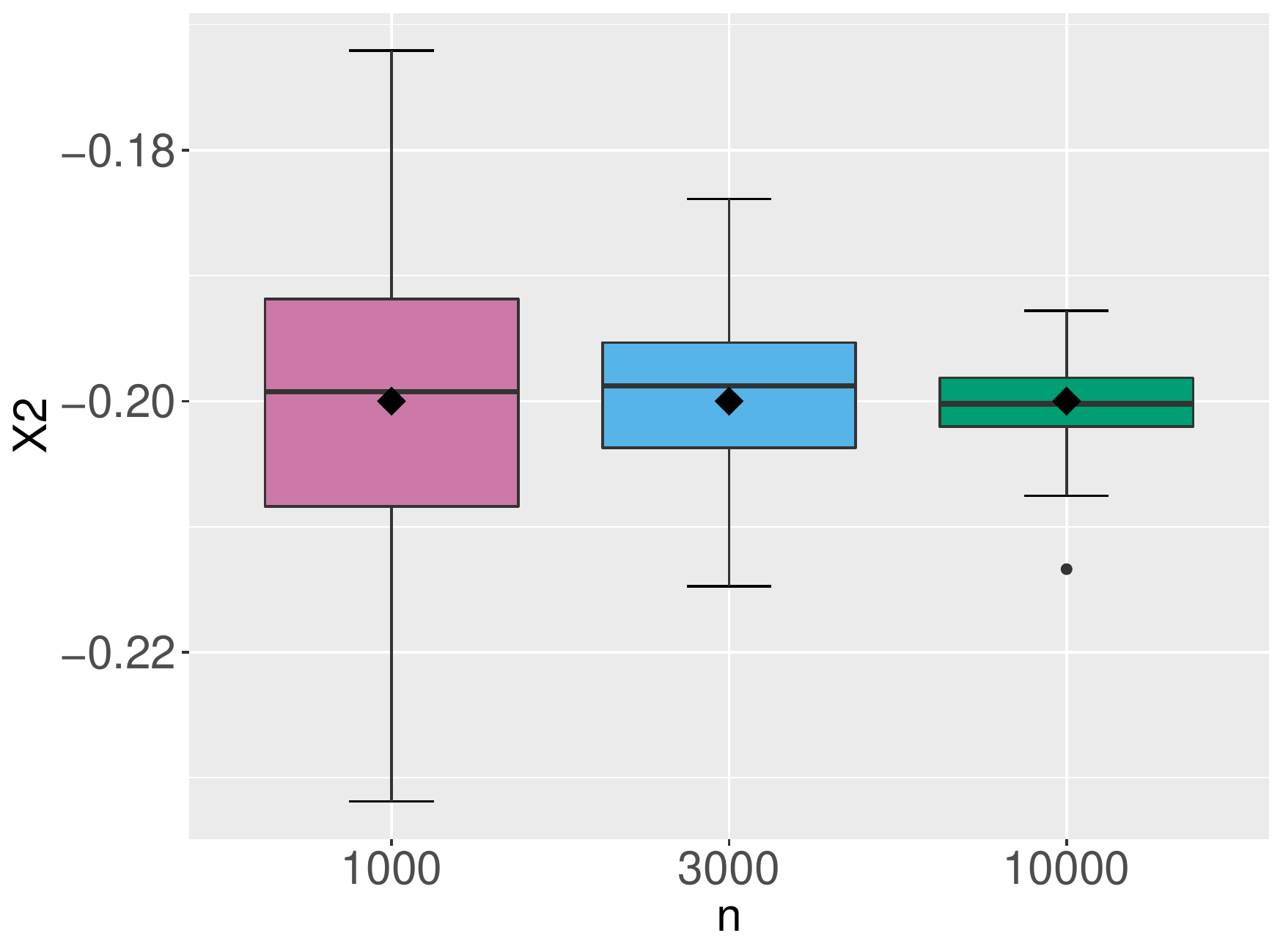}
    \caption{coefficients for $x_2$ of $\mu_2$}
  \end{subfigure}%
  ~
  \begin{subfigure}{0.5\textwidth}
    \includegraphics[scale = 0.4]{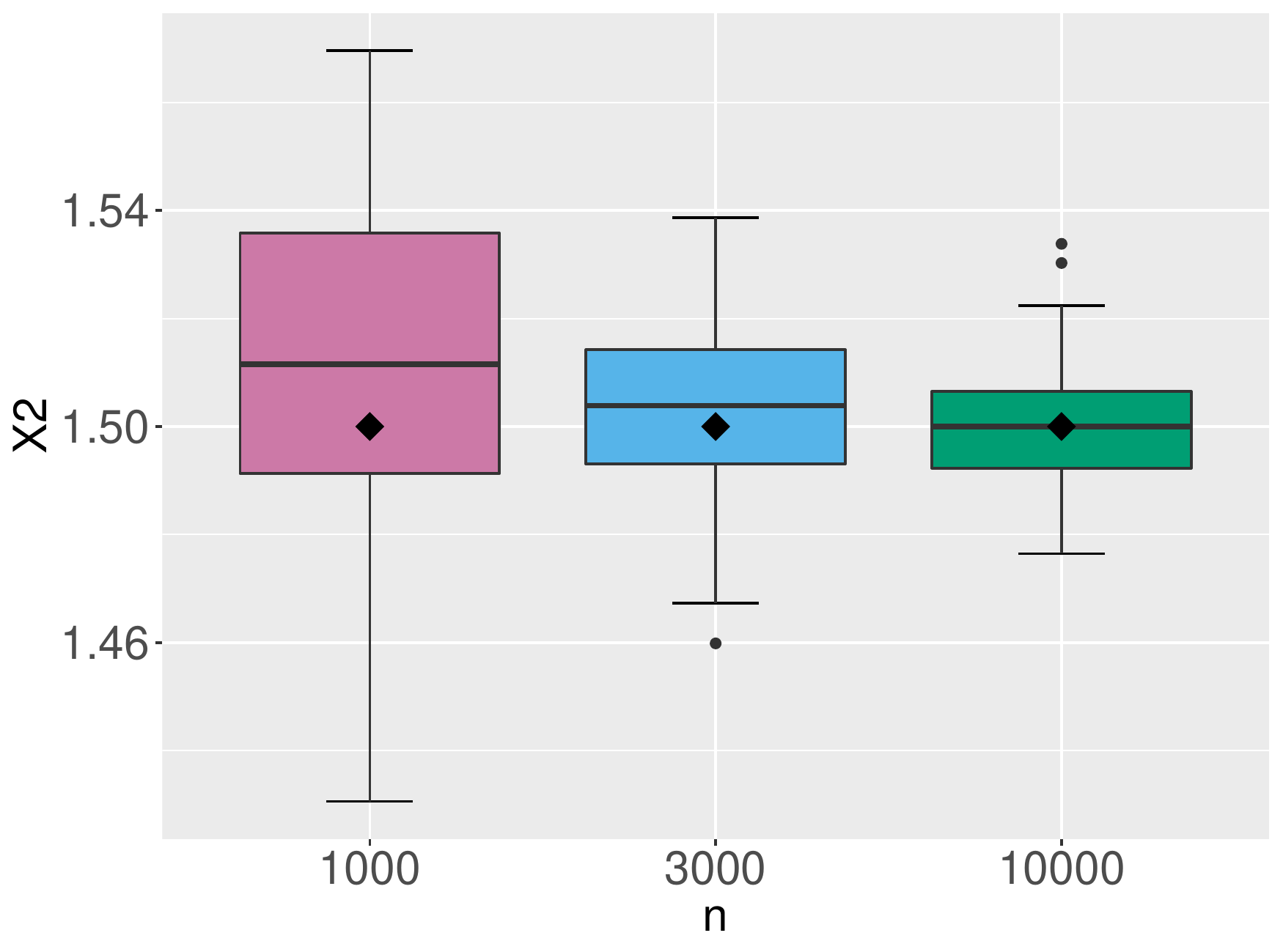}
    \caption{coefficients for $x_3$ of $\sigma_2$}
  \end{subfigure}
\caption{Simulation results for linear effects in scenario 1. The boxplots represent the estimated linear coefficients in $N = 100$ iterations. The true values of the coefficients are denoted by the black diamond symbols.} \label{fig:sim_lin_scen1}
\end{figure}

Figure \ref{fig:sim_smooth_scen1} exemplarily shows the estimated smooth functions against the true ones for scenario 1 and $n= 1,000$. The procedure is able to recover the smooth functions satisfactorily although some of the curves are wigglier or too smooth compared to the true functions. This effect vanishes in the other cases as the sample size grows. Note that for the location parameter of the continuous marginal, the smooth effects seem to be easier to estimate as all curves are very close to the true functions. If we increase the sample size, the fit of the smooth functions improves significantly especially at the local minima and maxima (plots are available upon request). 

\begin{figure}
  \begin{subfigure}{0.5\textwidth}
    \includegraphics[scale = 0.4]{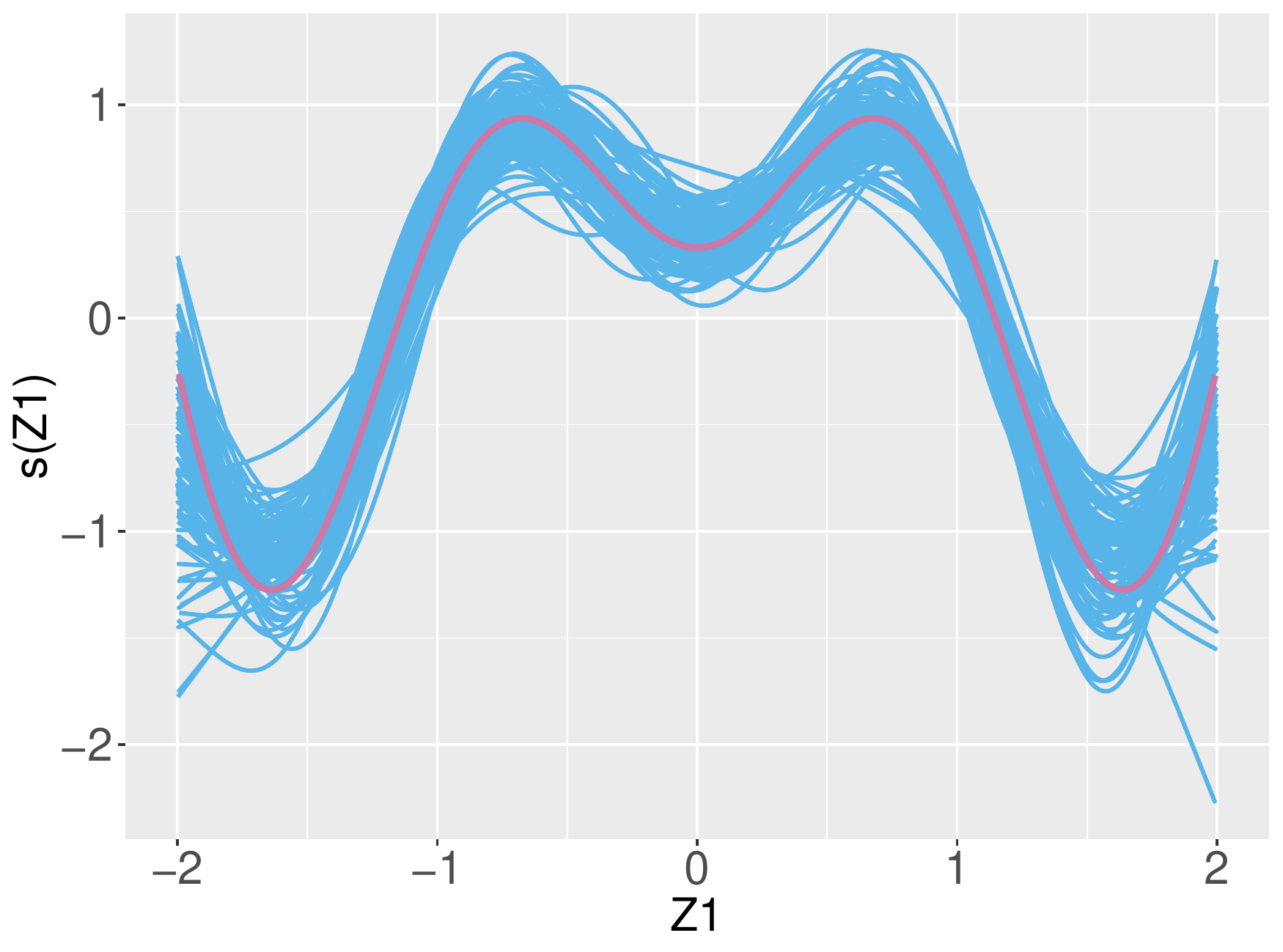}
    \caption{$s_1^{\mu_1}(\nu_1)$}
  \end{subfigure}%
  ~ 
  \begin{subfigure}{0.5\textwidth}
    \includegraphics[scale = 0.4]{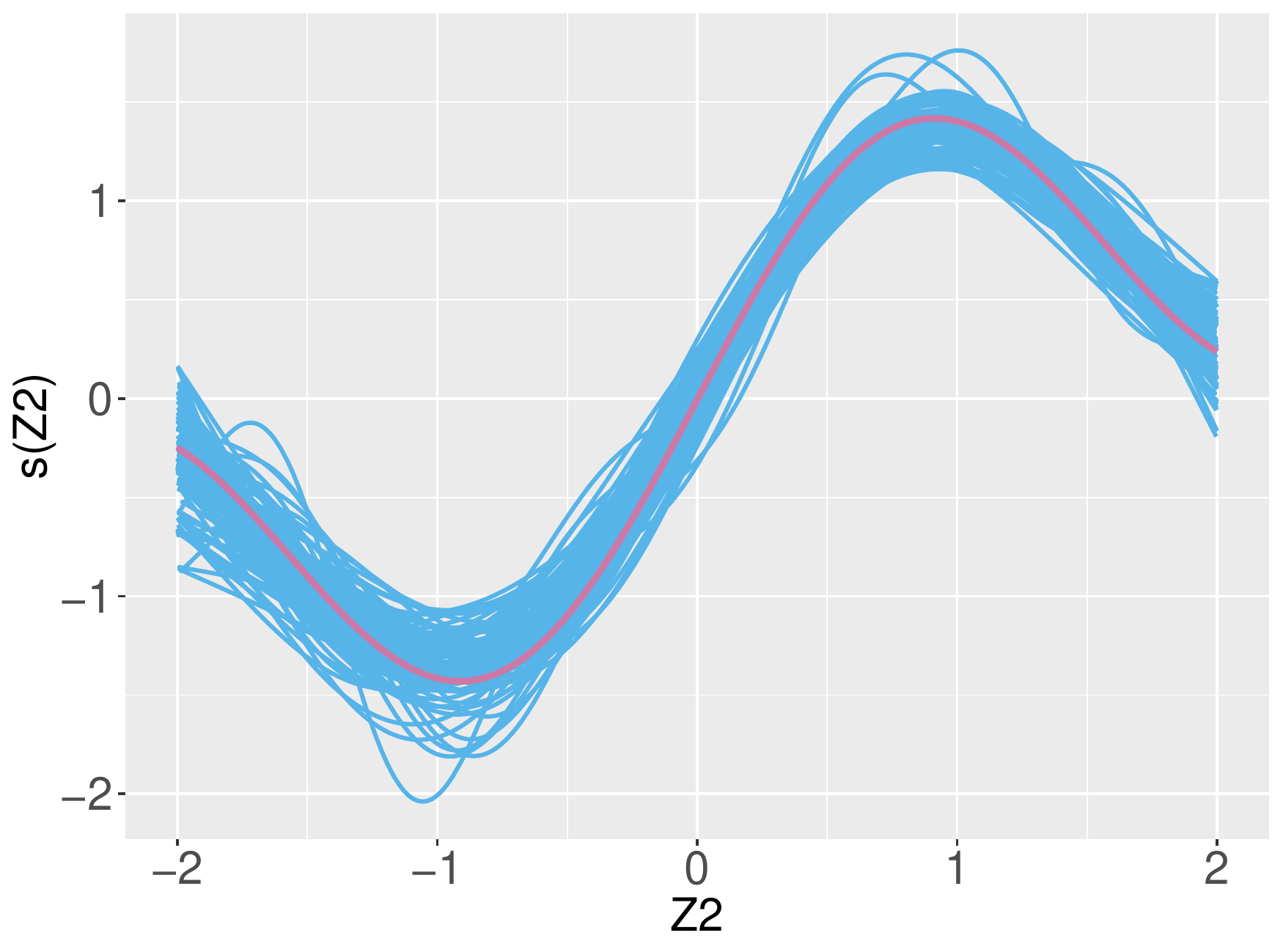}
    \caption{$s_2^{\mu_1}(\nu_2)$}
  \end{subfigure}
  \begin{subfigure}{0.5\textwidth}
    \includegraphics[scale = 0.4]{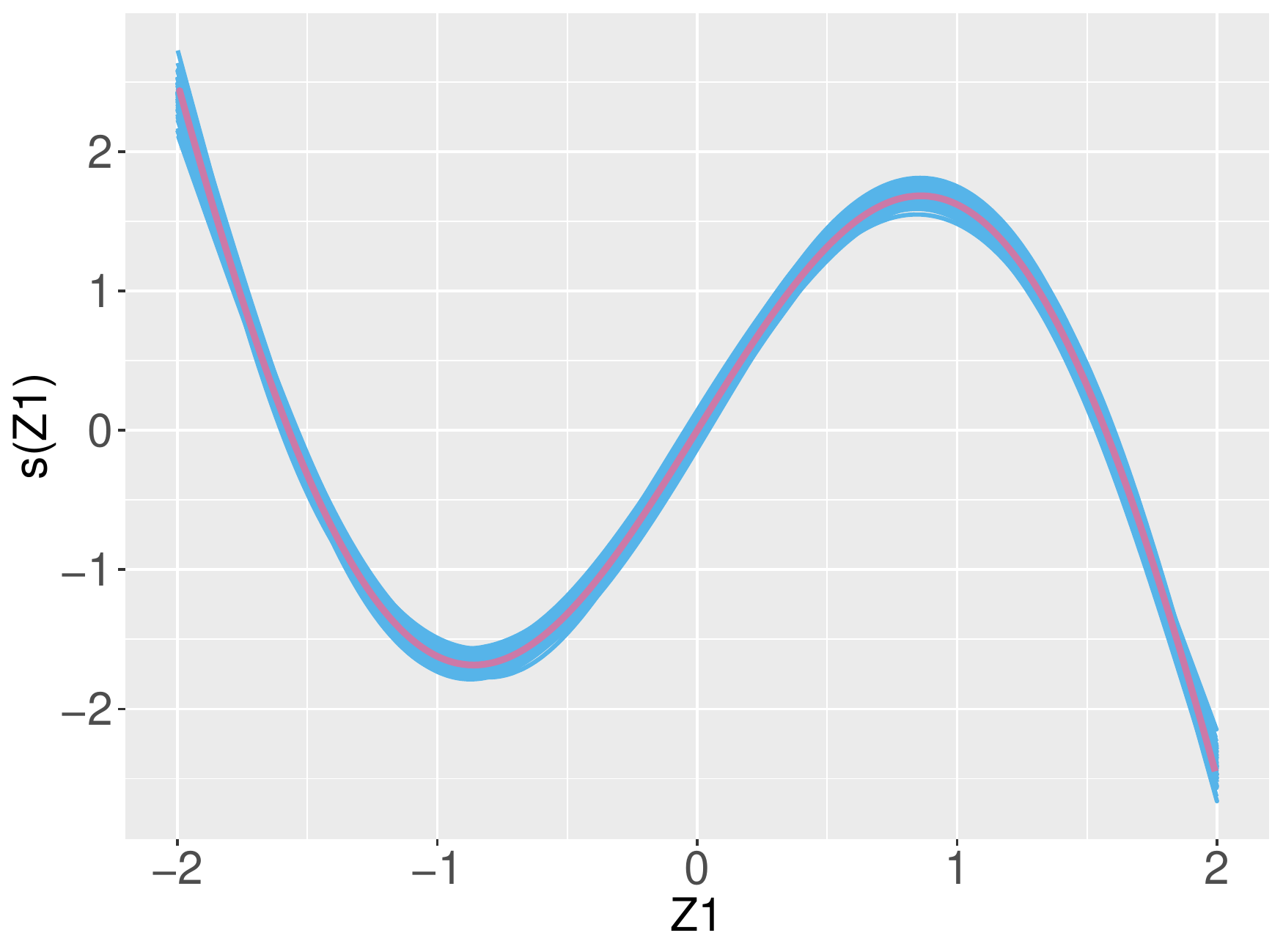}
    \caption{$s_3^{\mu_2}(\nu_1)$}
  \end{subfigure}%
  ~
  \begin{subfigure}{0.5\textwidth}
    \includegraphics[scale = 0.4]{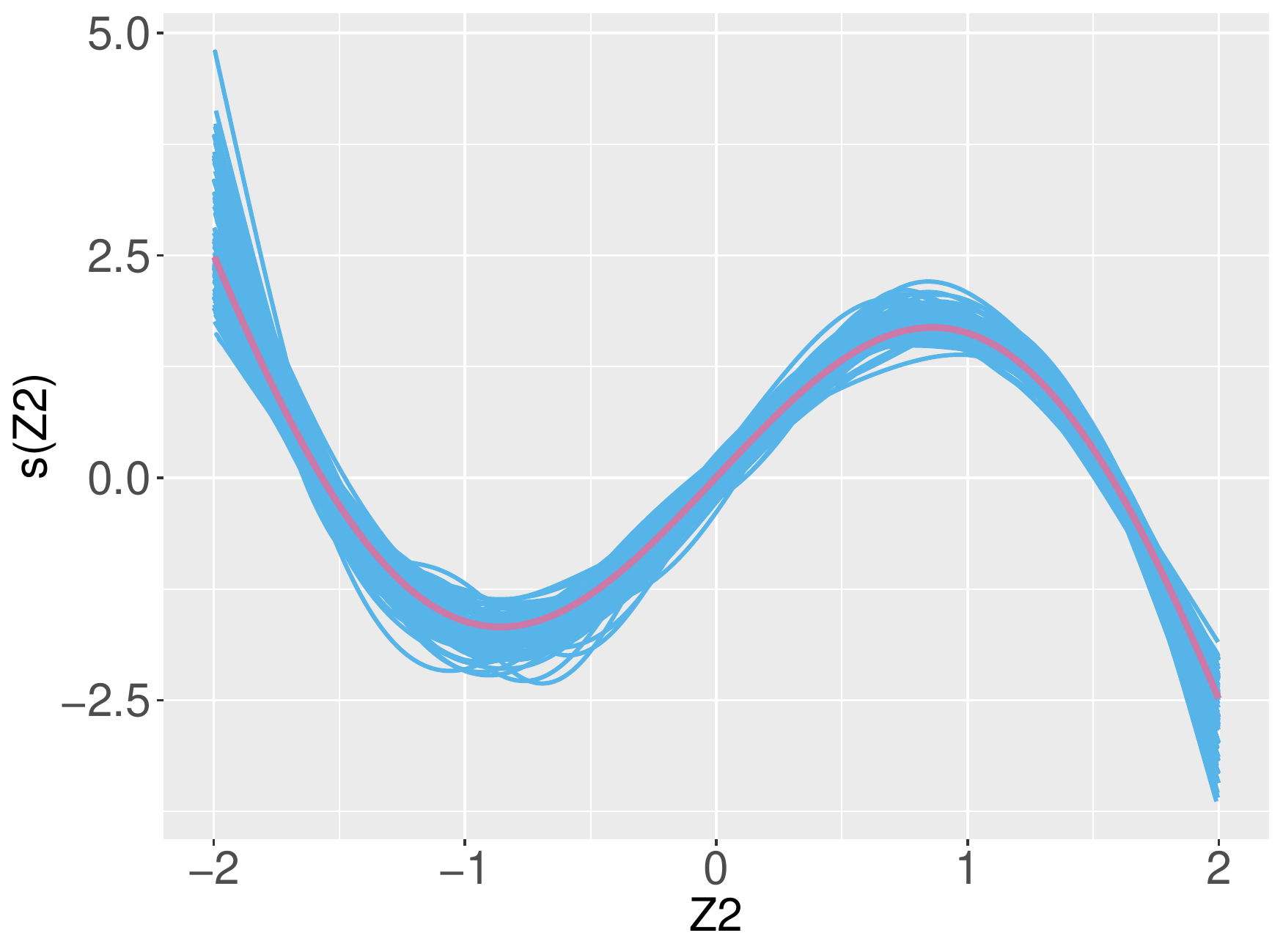}
    \caption{$s_3^{\gamma}(\nu_2)$}
  \end{subfigure}
\caption{Simulation results for non-linear effects in scenario 1. The pink solid lines represent the true functions.\label{fig:sim_smooth_scen1}} 
\end{figure}

Note that some simulation iterations for the case of $n=1,000$ were problematic in that the smooth functions were not estimated adequately. Their number was rather small and we excluded them. Table \ref{tab:sim_warnings} gives an overview of how many iterations were excluded.                 

The results for scenarios 2, 3 and 4 are similar to the first one and shown in Figures \ref{fig:sim_smooth_scen2}, \ref{fig:sim_smooth_scen3} and \ref{fig:sim_smooth_scen4}, respectively. Again, the estimation of the smooth terms improves considerably as the sample size increases but we only show the more difficult case of $n=1,000$ here to demonstrate that results are still acceptable also at relatively small sample sizes.  

\begin{table}
    \centering
    \begin{tabular}{llcc}
        \hline
        scenario  & case & rejected by algorithm & manually deleted \\ \hline 
         scenario 1 & n = 1,000 & 0 & 3 \\ 
         & n = 3,000 & 0 & 0\\
         & n = 10,000 & 0 & 0\\
          scenario 2 & n = 1,000 & 1 & 1 \\ 
         & n = 3,000 & 1 & 0\\
          & n = 10,000 & 1 & 0\\
          scenario 3 (J0) & n = 1,000 & 62 & 2\\ 
          & n = 3,000 & 3 & 0\\
          & n = 10,000 & 0 & 0\\
          scenario 4 & n = 1,000 & 77 & 1 \\ 
          & n = 3,000 & 3 & 0\\
          & n = 10,000 & 2 & 0\\ \hline          
    \end{tabular}
    \caption{Number of repeated iterations due to warning messages and number of manually deleted iterations that were extreme outliers.}
    \label{tab:sim_warnings}
\end{table}


\begin{figure} 
    \centering
  \begin{subfigure}{0.5\textwidth}
    \includegraphics[scale = 0.35]{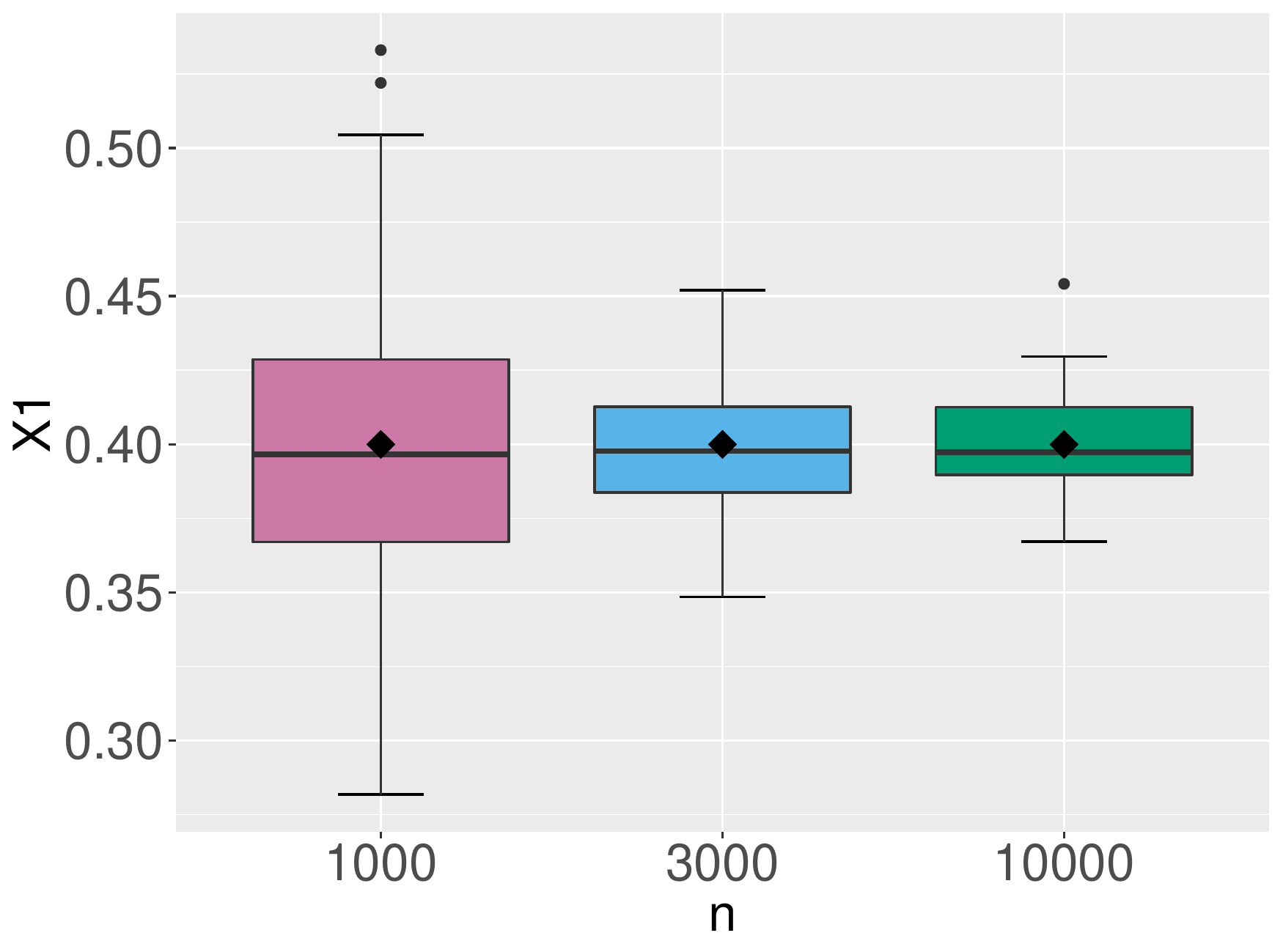}
    \caption{coefficients for $x_1$ of $\mu_1$}
  \end{subfigure}%
  ~ 
  \begin{subfigure}{0.5\textwidth}
    \includegraphics[scale = 0.35]{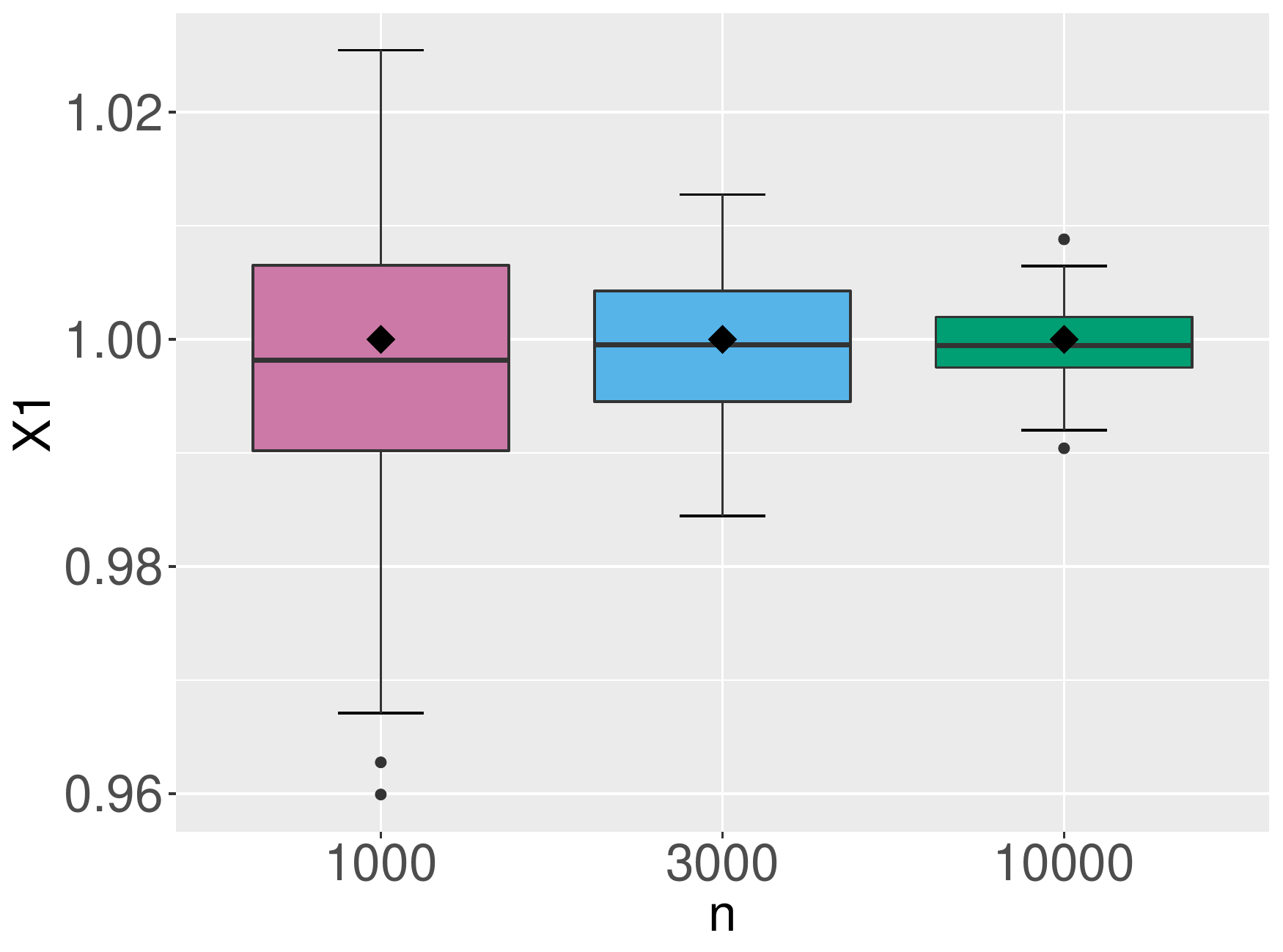}
    \caption{coefficients for $x_1$ of $\mu_2$}
  \end{subfigure}
  \begin{subfigure}{0.5\textwidth}
    \includegraphics[scale = 0.35]{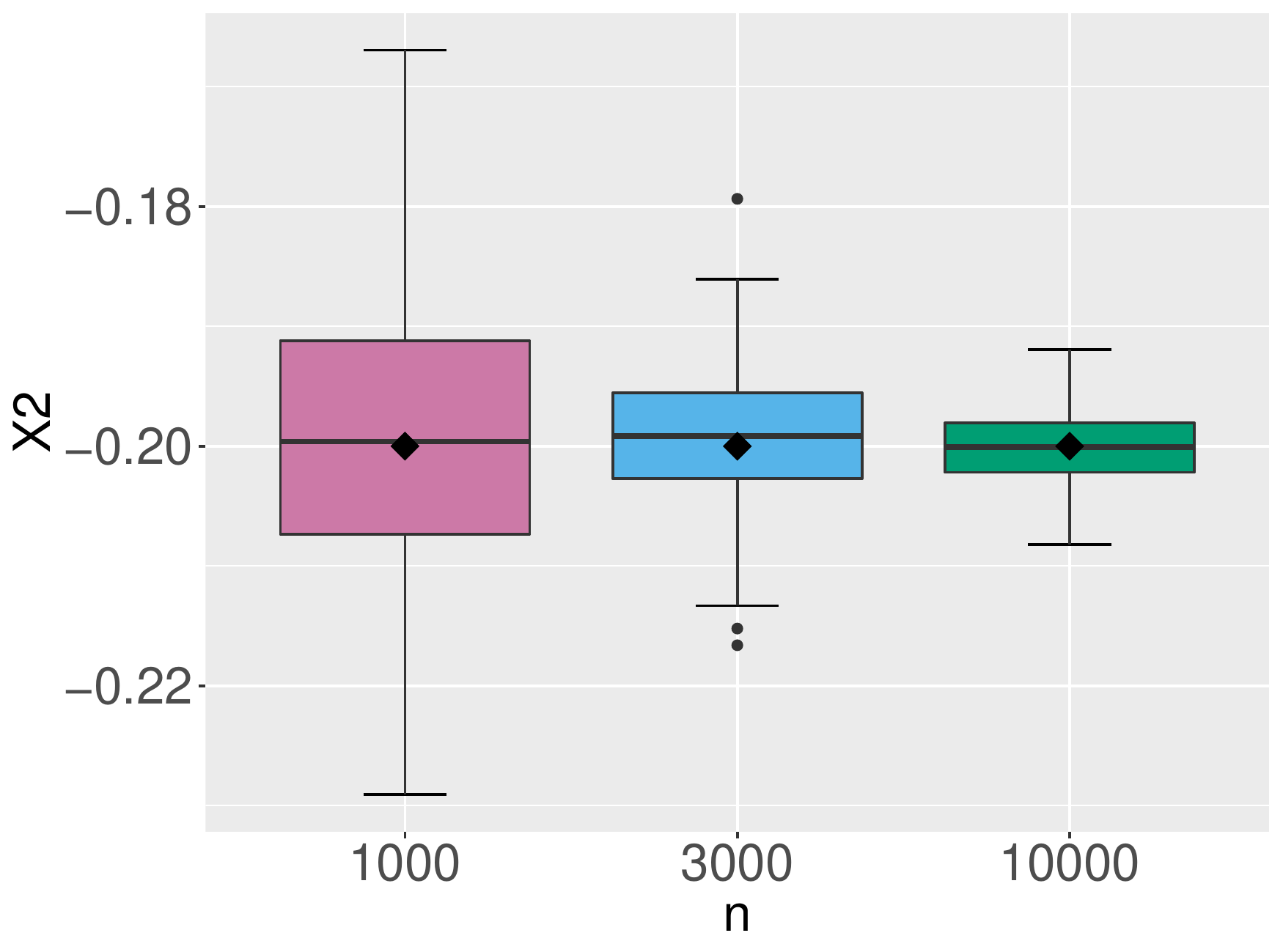}
    \caption{coefficients for $x_2$ of $\mu_2$}
  \end{subfigure}%
  ~
  \begin{subfigure}{0.5\textwidth}
    \includegraphics[scale = 0.35]{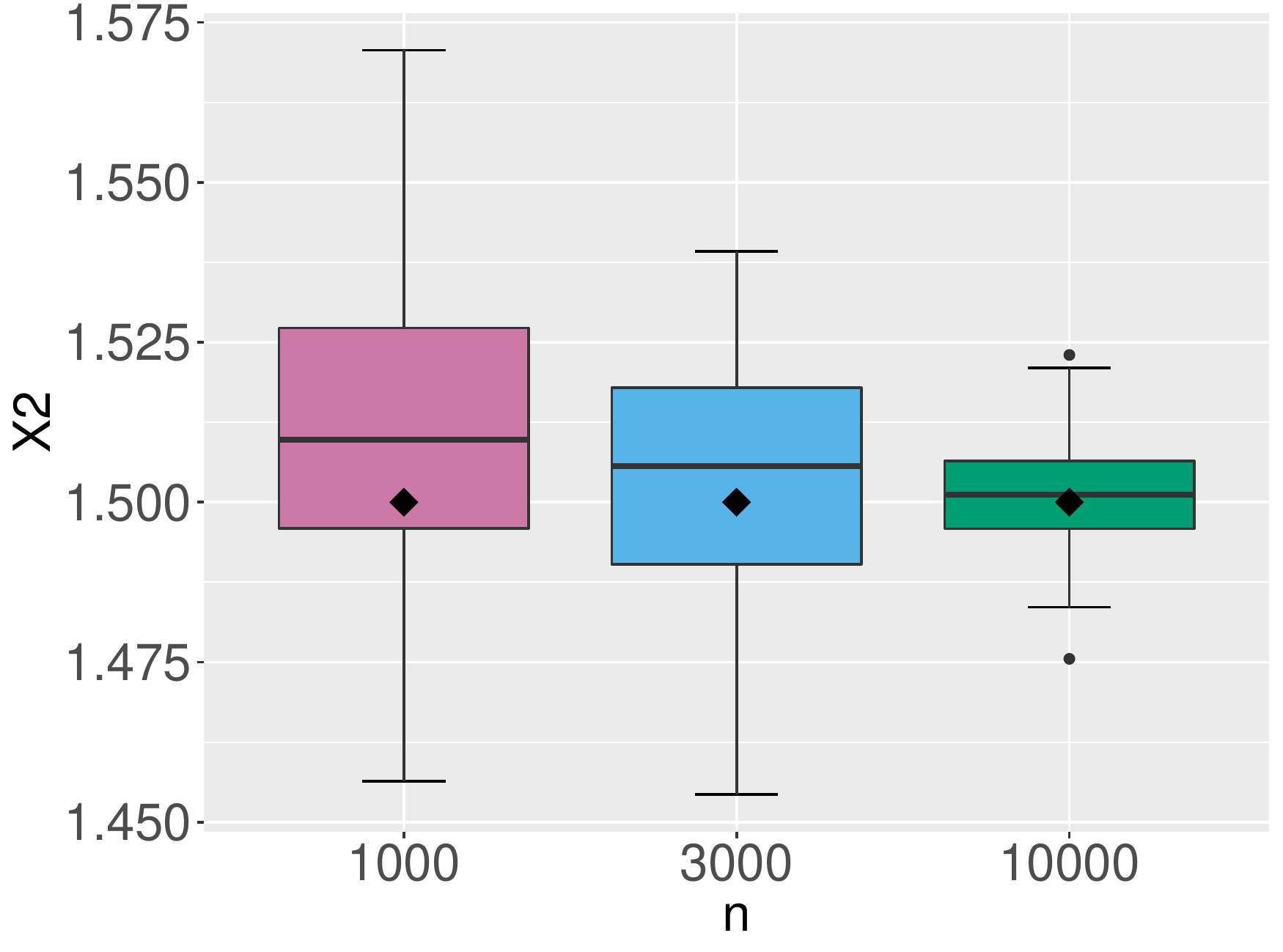}
    \caption{coefficients for $x_3$ of $\sigma_2$}
  \end{subfigure}
  \begin{subfigure}{0.5\textwidth}
    \includegraphics[scale = 0.35]{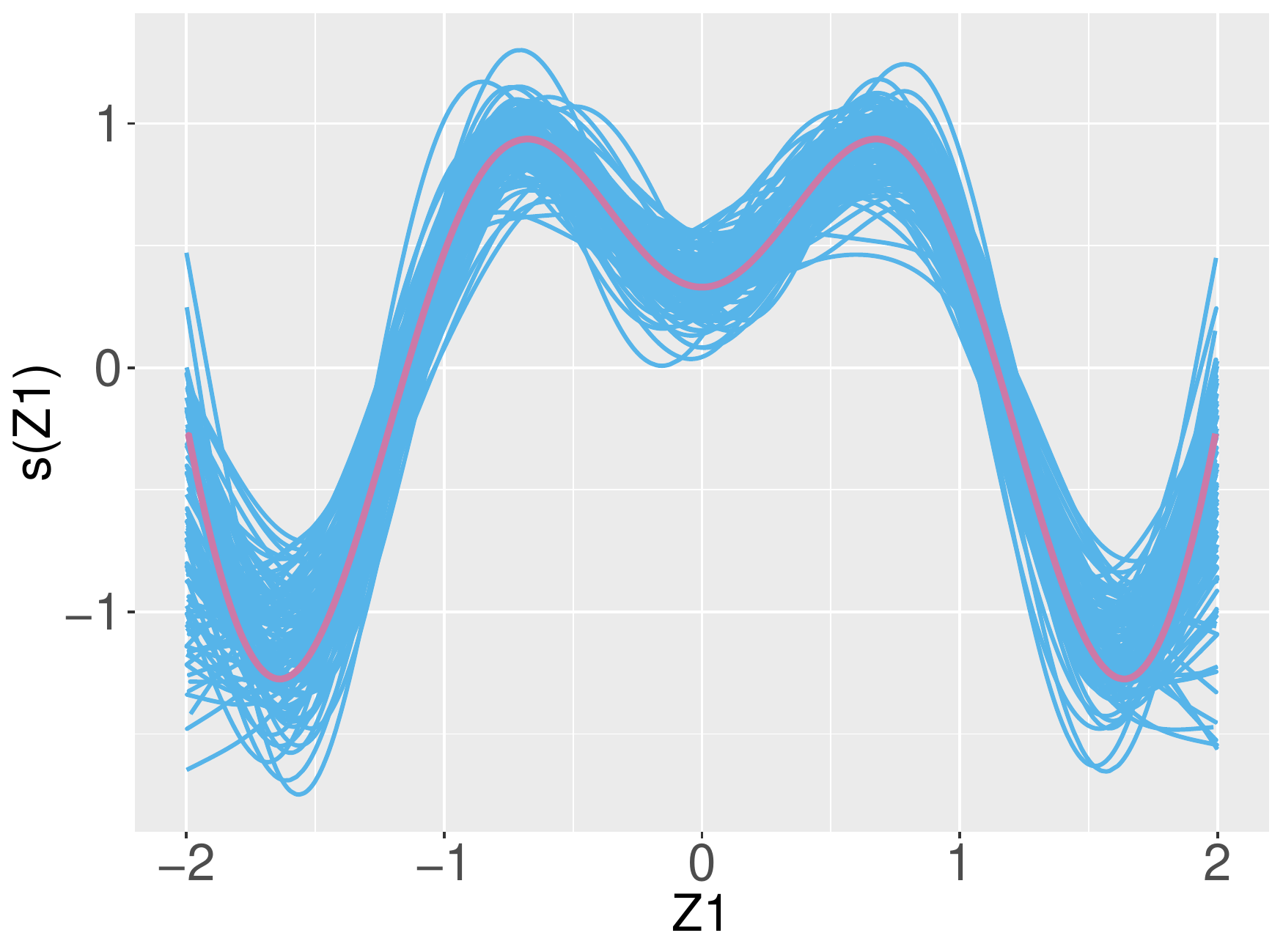}
    \caption{$s_1^{\mu_1}(\nu_1)$}
  \end{subfigure}%
  ~ 
  \begin{subfigure}{0.5\textwidth}
    \includegraphics[scale = 0.35]{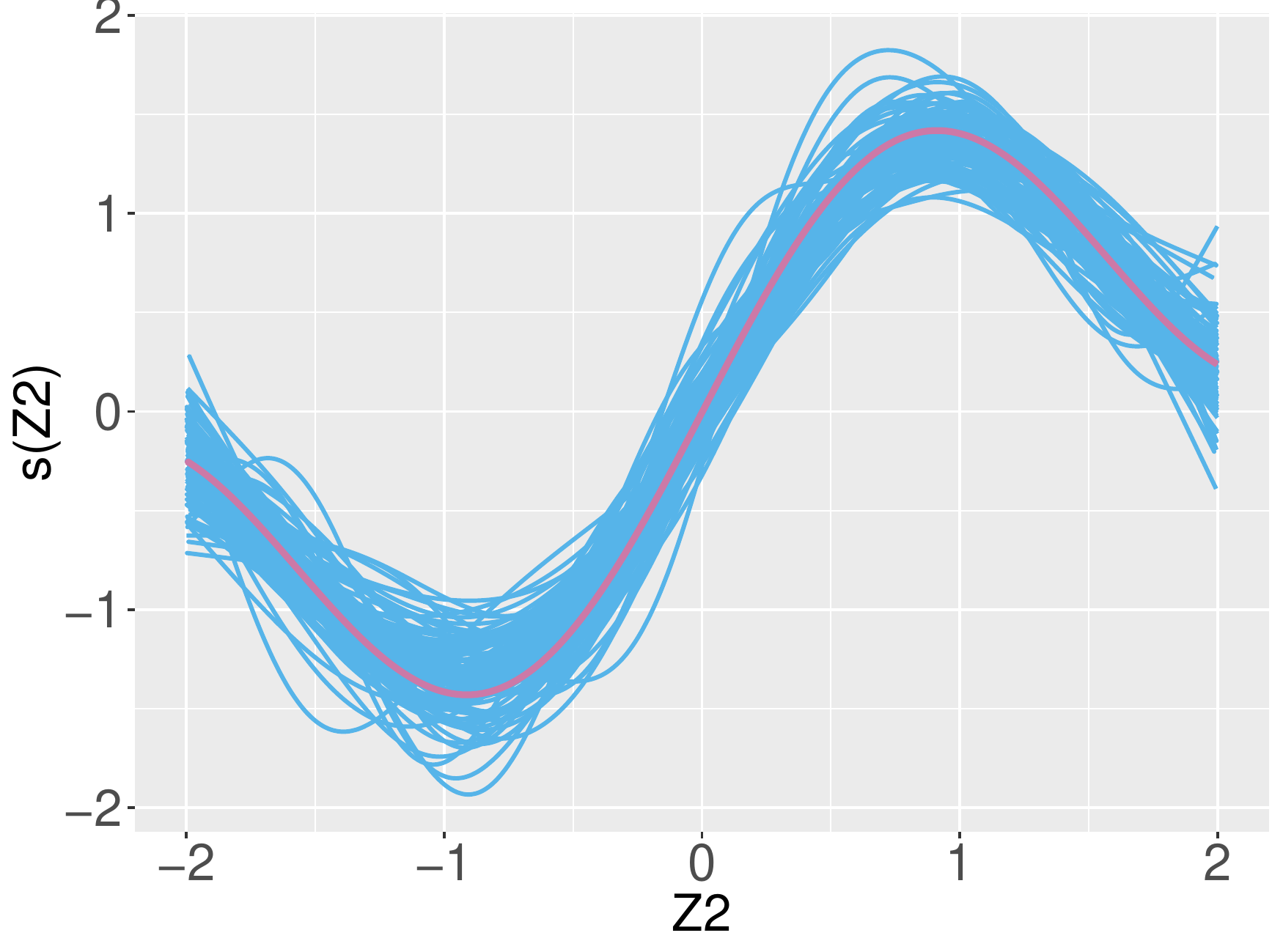}
    \caption{$s_2^{\mu_1}(\nu_2)$}
  \end{subfigure}
  \begin{subfigure}{0.5\textwidth}
    \includegraphics[scale = 0.35]{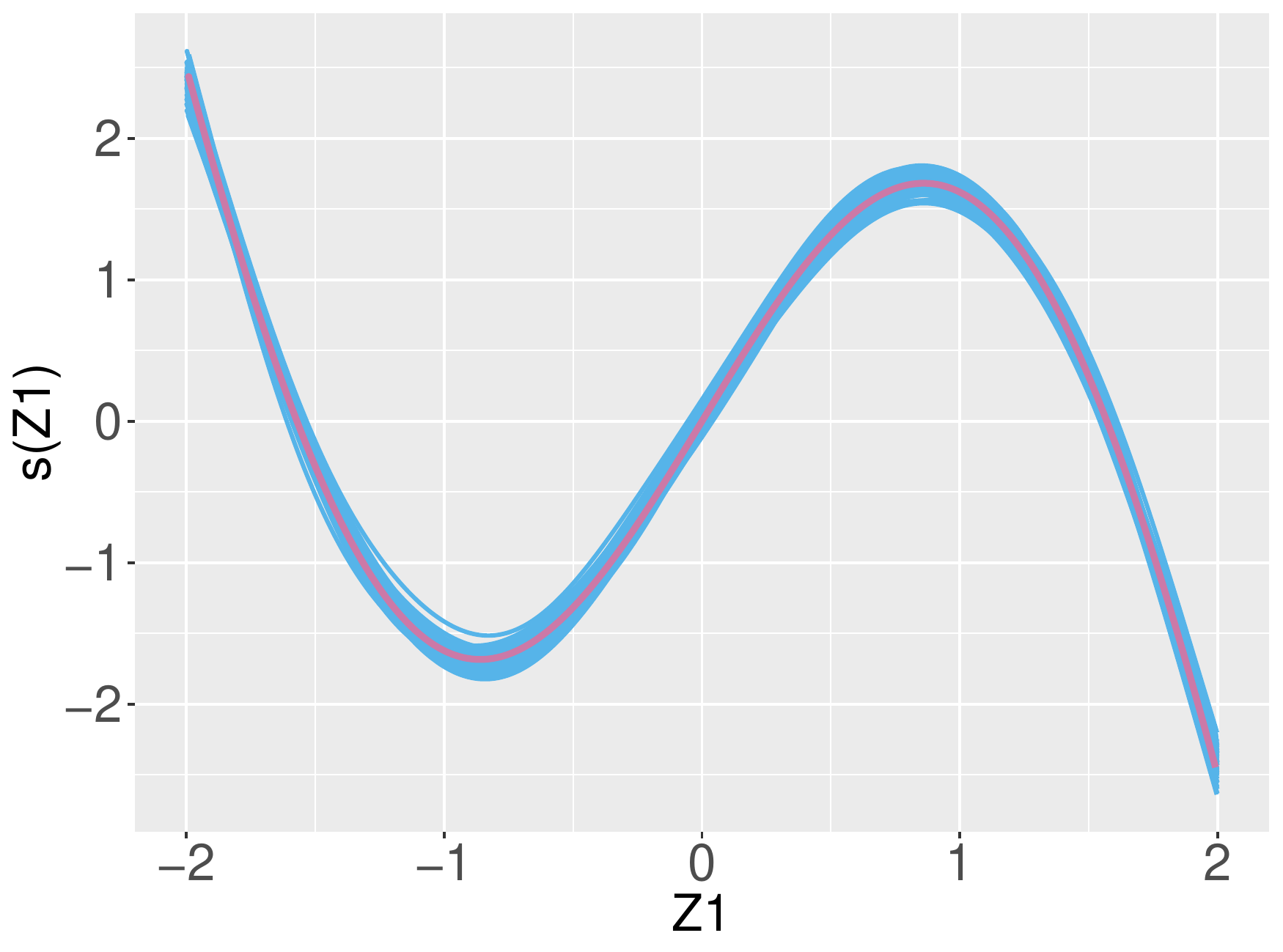}
    \caption{$s_3^{\mu_2}(\nu_1)$}
  \end{subfigure}%
  ~
  \begin{subfigure}{0.5\textwidth}
    \includegraphics[scale = 0.35]{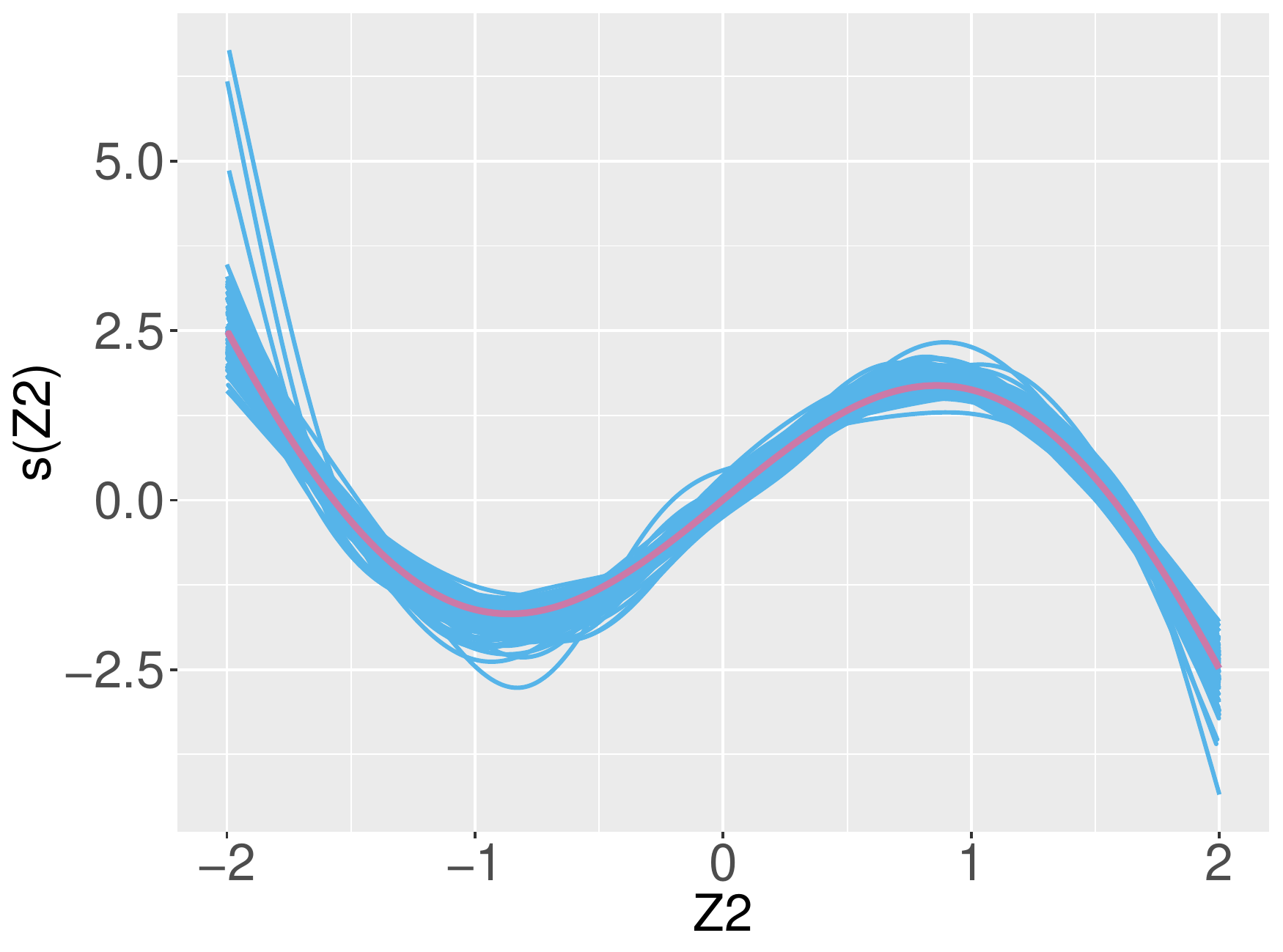}
    \caption{$s_3^{\gamma}(\nu_2)$}
  \end{subfigure}
\caption{Simulation results for linear and non-linear effects (case $n=1000$) in scenario 2. 
The boxplots represent the estimated linear coefficients in $N = 100$ iterations. The true values of the coefficients are denoted by the black diamond symbols. The pink solid lines represent the true functions of the non-linear effects.} \label{fig:sim_smooth_scen2}
\end{figure}


\begin{figure} 
  \begin{subfigure}{0.5\textwidth}
    \includegraphics[scale = 0.35]{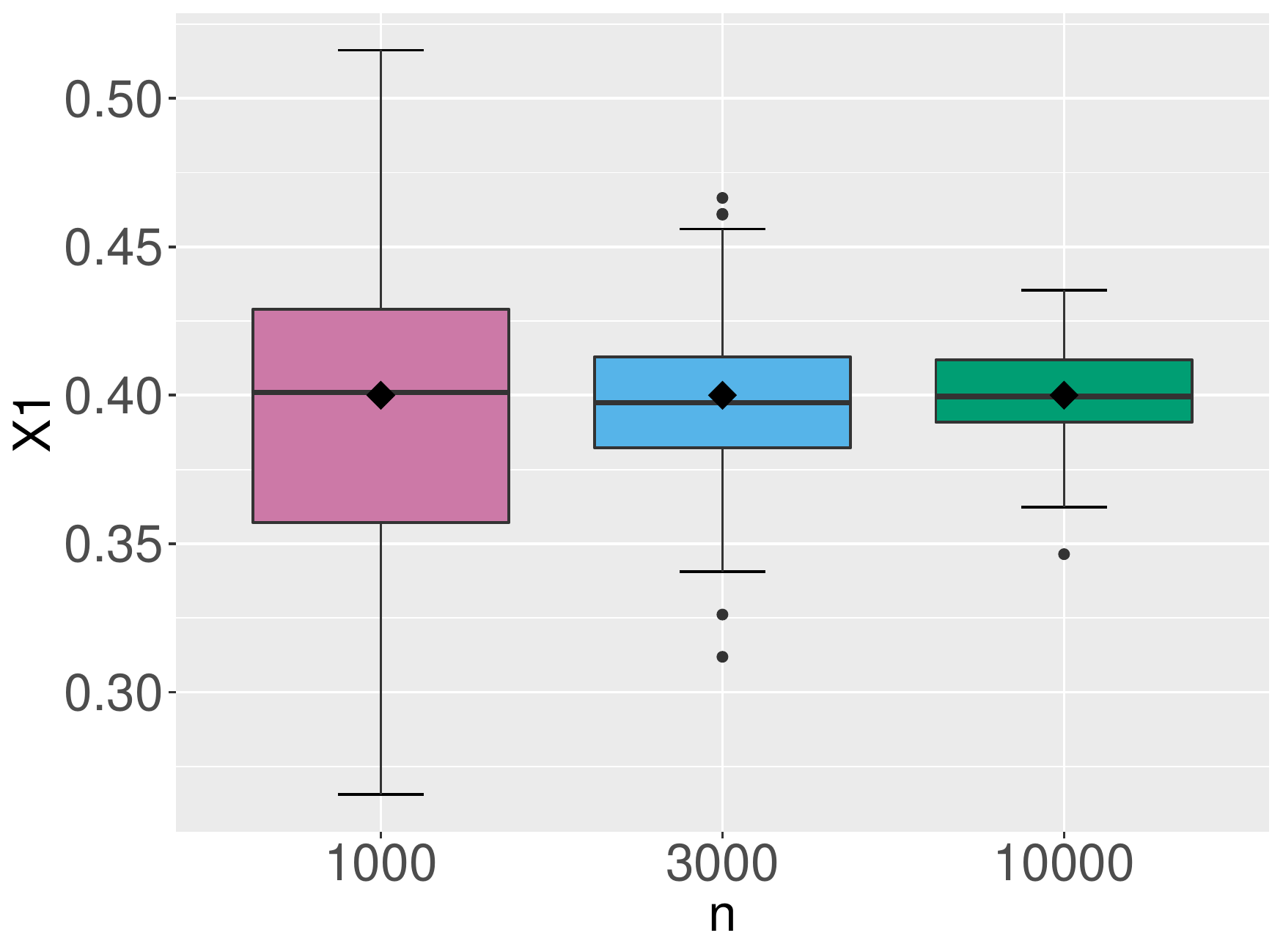}
    \caption{coefficients for $x_1$ of $\mu_1$}
  \end{subfigure}%
  ~ 
  \begin{subfigure}{0.5\textwidth}
    \includegraphics[scale = 0.35]{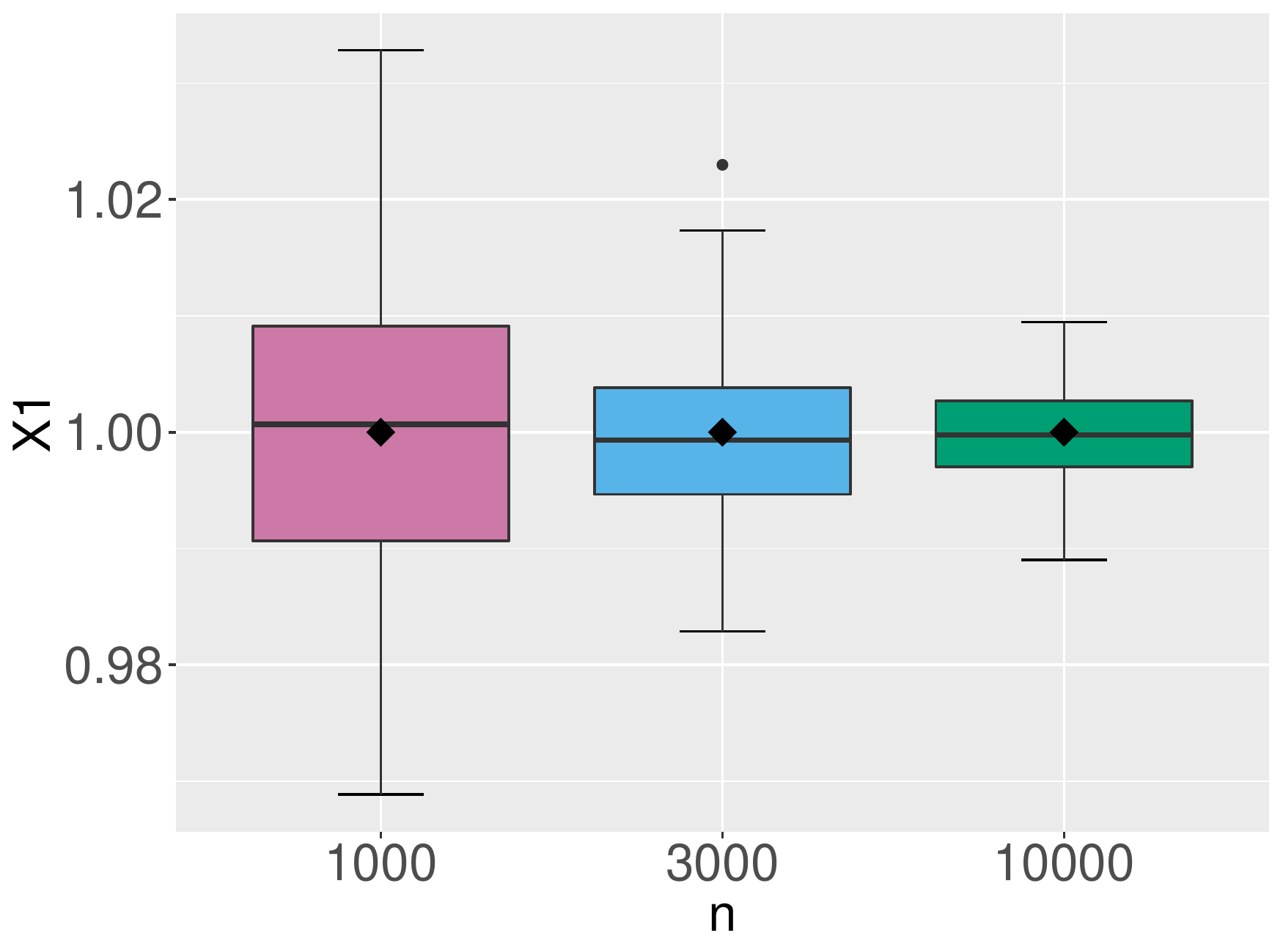}
    \caption{coefficients for $x_1$ of $\mu_2$}
  \end{subfigure}
  \begin{subfigure}{0.5\textwidth}
    \includegraphics[scale = 0.35]{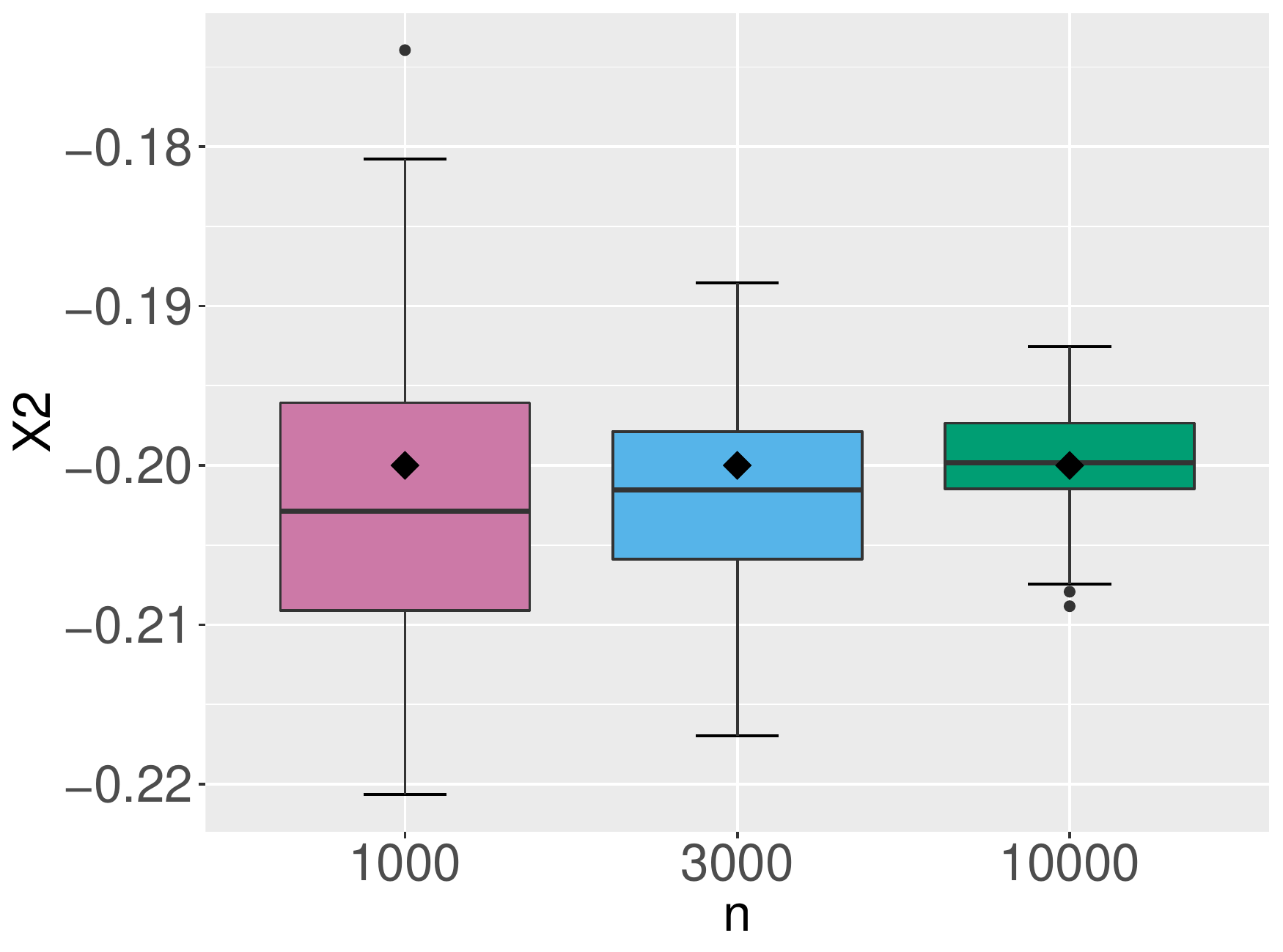}
    \caption{coefficients for $x_2$ of $\mu_2$}
  \end{subfigure}%
  ~
  \begin{subfigure}{0.5\textwidth}
    \includegraphics[scale = 0.35]{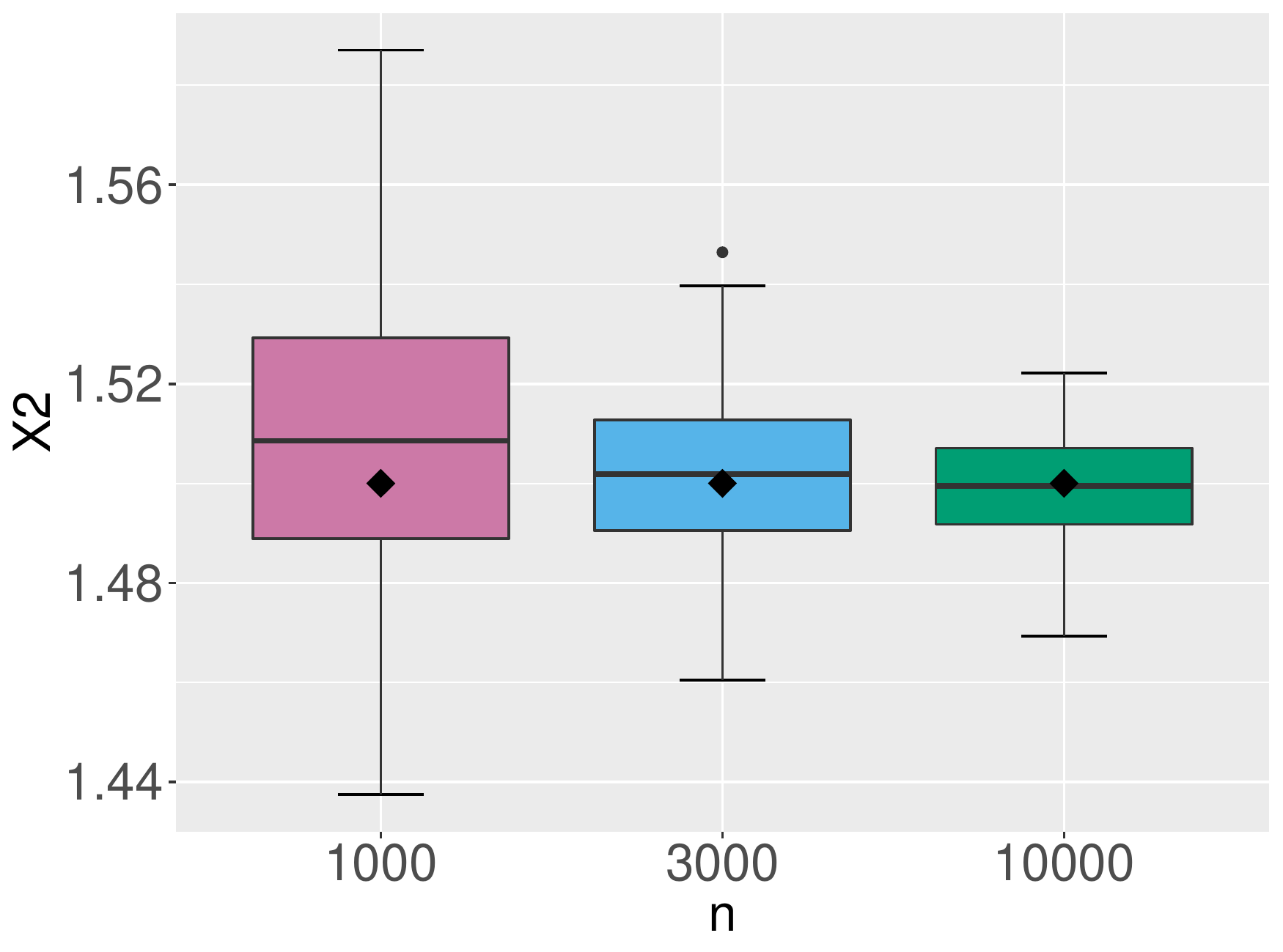}
    \caption{coefficients for $x_3$ of $\sigma_2$}
  \end{subfigure}
  \begin{subfigure}{0.5\textwidth}
    \includegraphics[scale = 0.35]{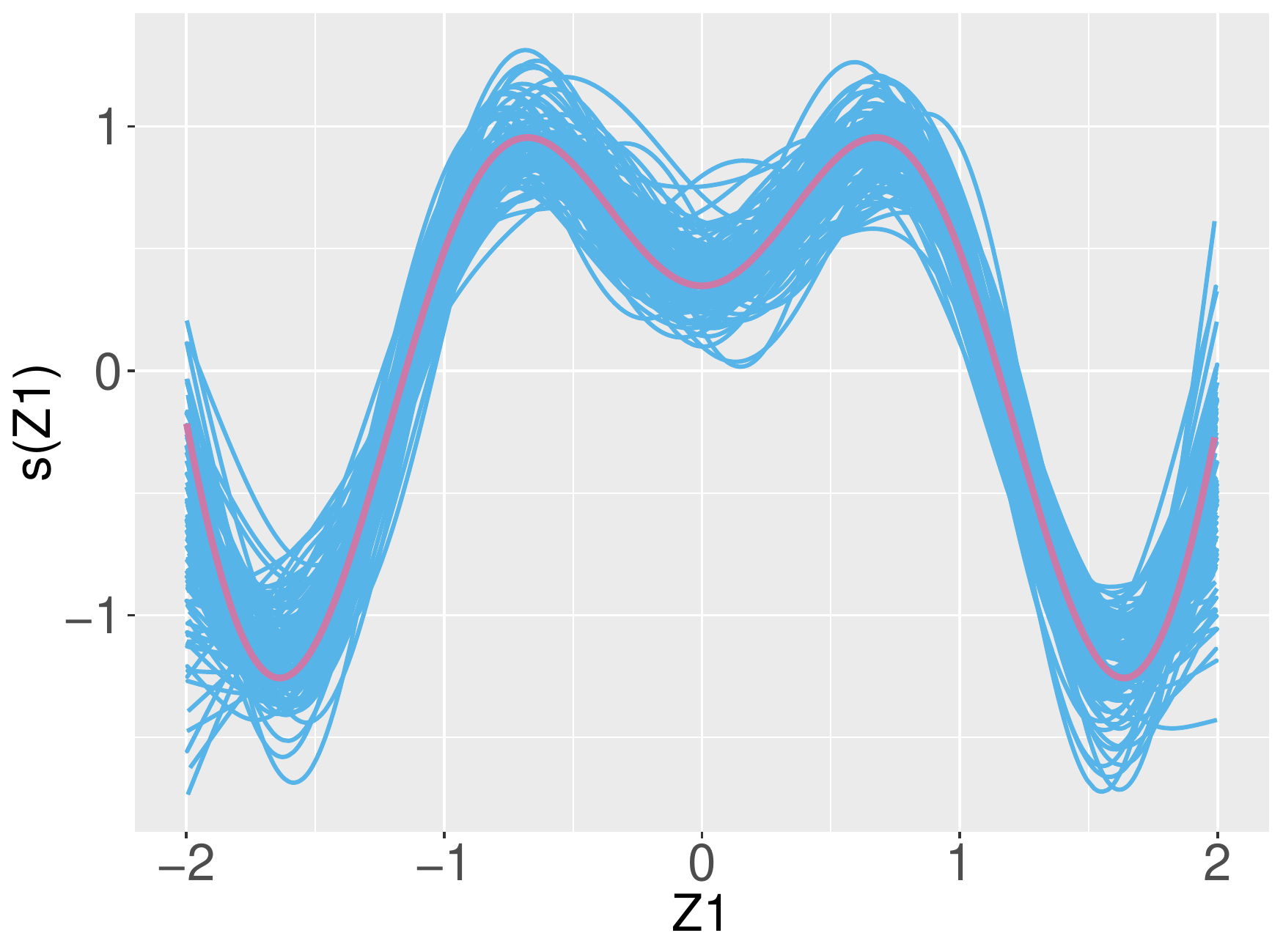}
    \caption{$s_1^{\mu_1}(\nu_1)$}
  \end{subfigure}%
  ~ 
  \begin{subfigure}{0.5\textwidth}
    \includegraphics[scale = 0.35]{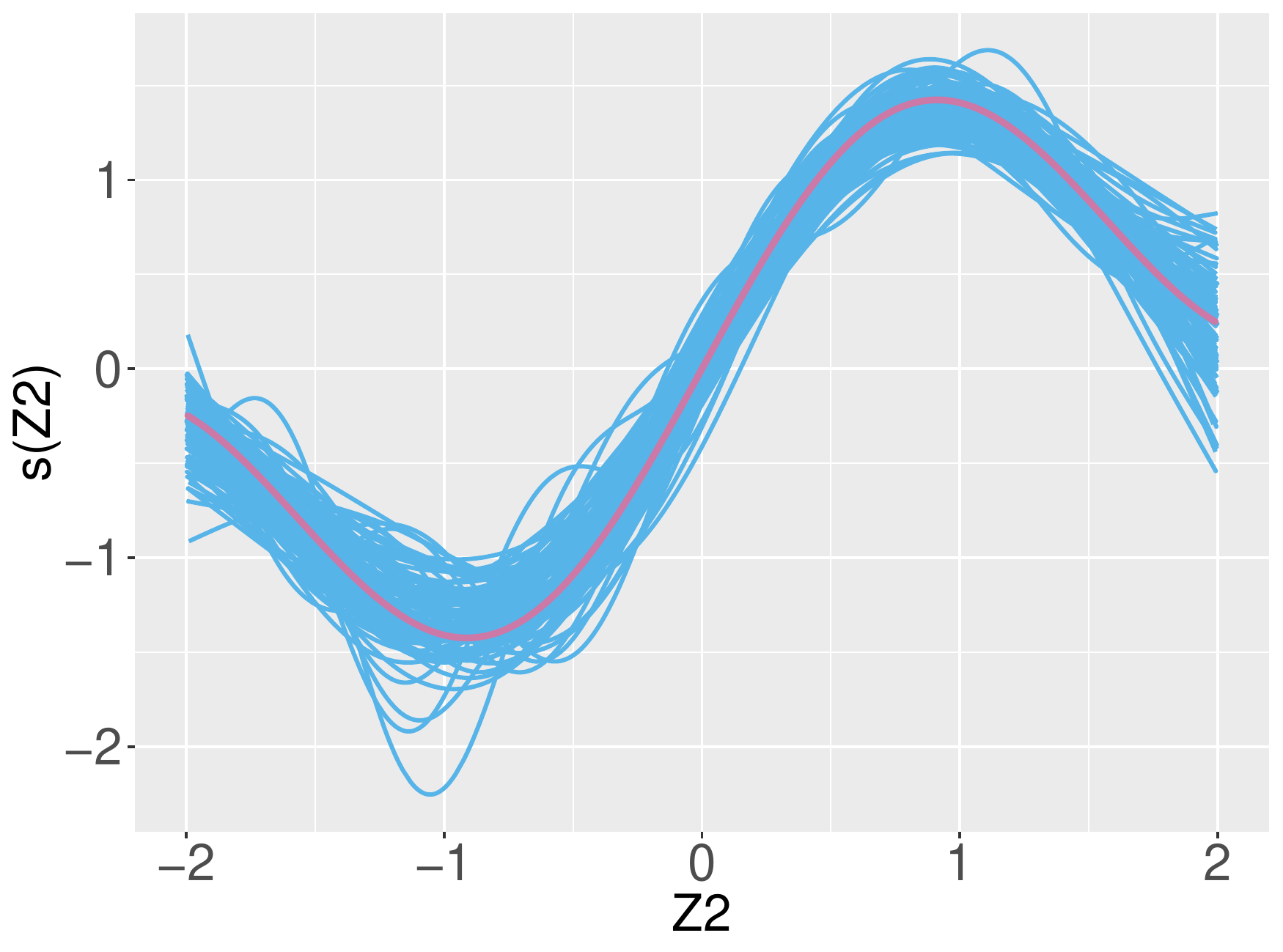}
    \caption{$s_2^{\mu_1}(\nu_2)$}
  \end{subfigure}
  \begin{subfigure}{0.5\textwidth}
    \includegraphics[scale = 0.35]{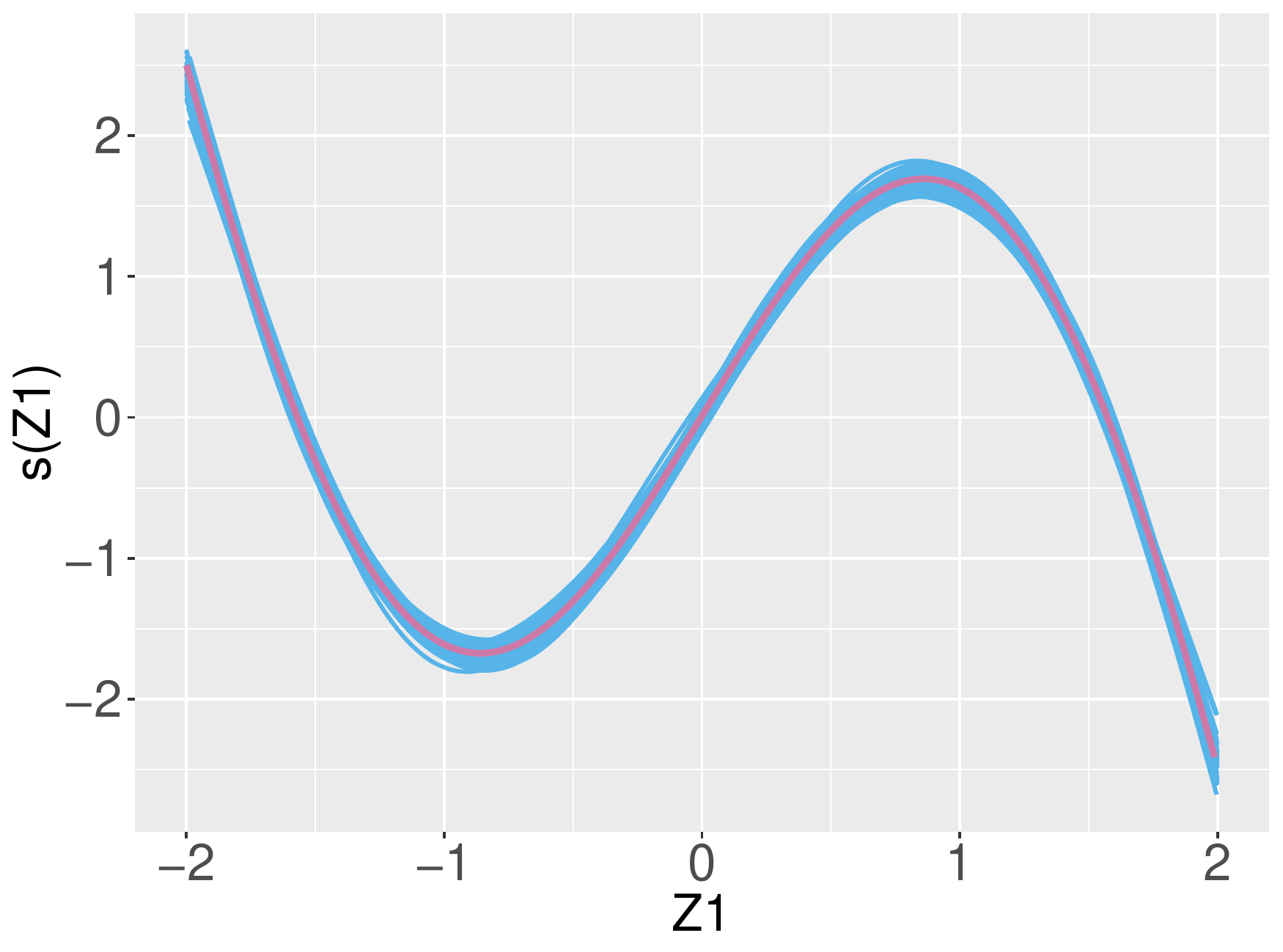}
    \caption{$s_3^{\mu_2}(\nu_1)$}
  \end{subfigure}%
  ~
  \begin{subfigure}{0.5\textwidth}
    \includegraphics[scale = 0.35]{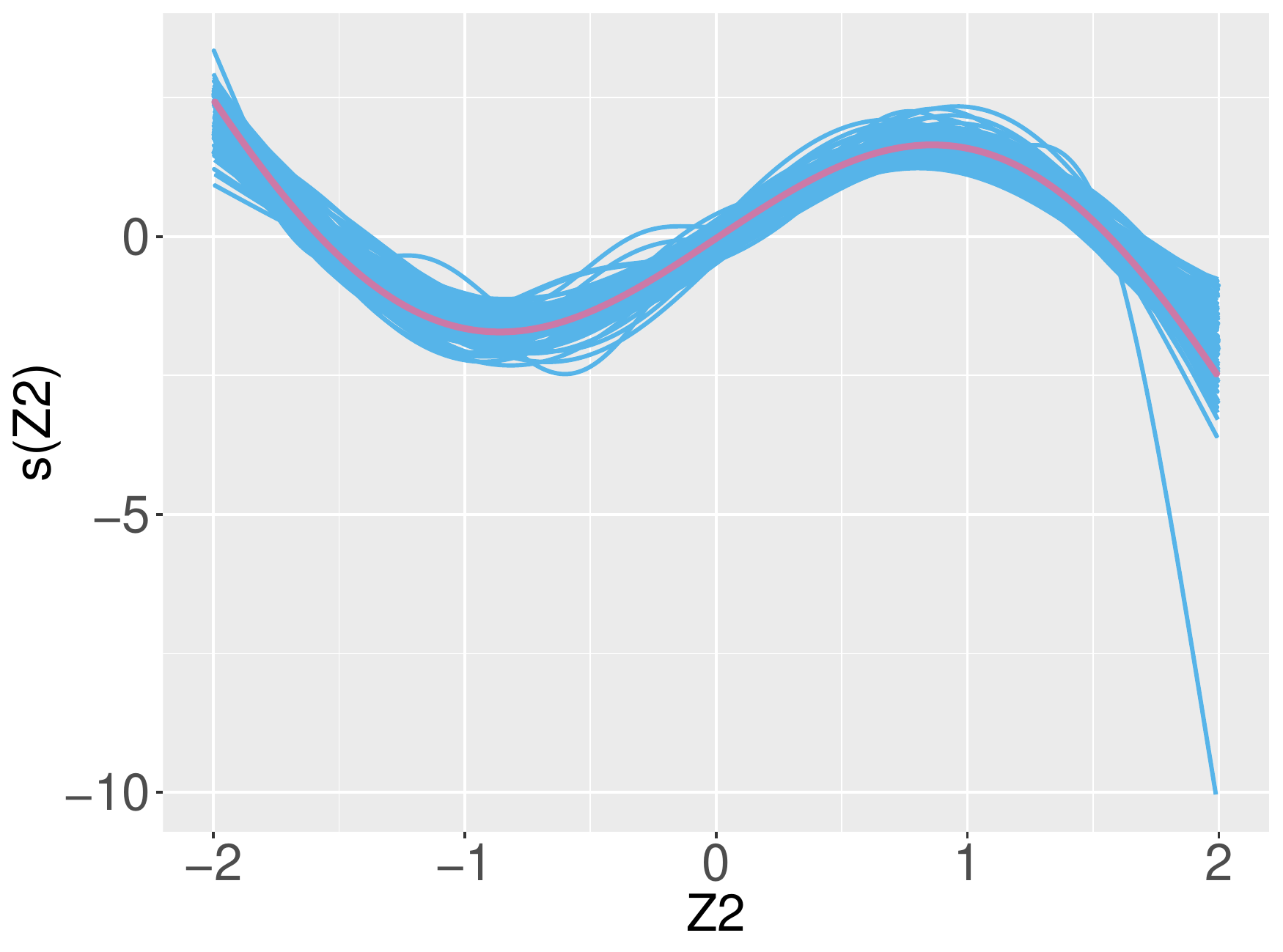}
    \caption{$s_3^{\gamma}(\nu_2)$}
  \end{subfigure}
\caption{Simulation results for linear and non-linear effects (case $n = 1000$) in scenario 3. The boxplots represent the estimated linear coefficients in $N = 100$ iterations. The true values of the coefficients are denoted by the black diamond symbols. The pink solid lines represent the true functions of the non-linear effects.} \label{fig:sim_smooth_scen3}
\end{figure}


\begin{figure} 
  \begin{subfigure}{0.5\textwidth}
    \includegraphics[scale = 0.35]{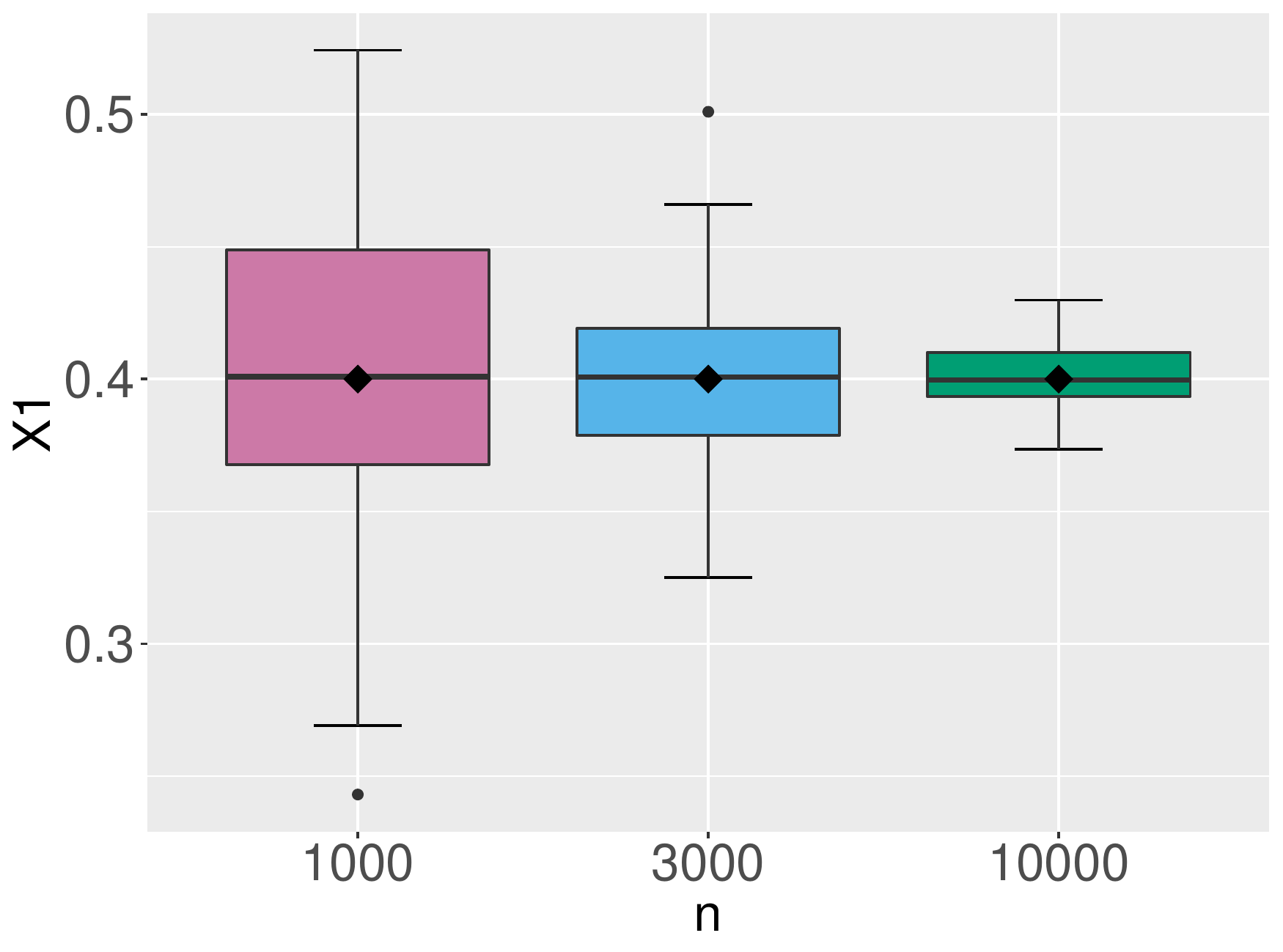}
    \caption{coefficients for $x_1$ of $\mu_1$}
  \end{subfigure}%
  ~ 
  \begin{subfigure}{0.5\textwidth}
    \includegraphics[scale = 0.35]{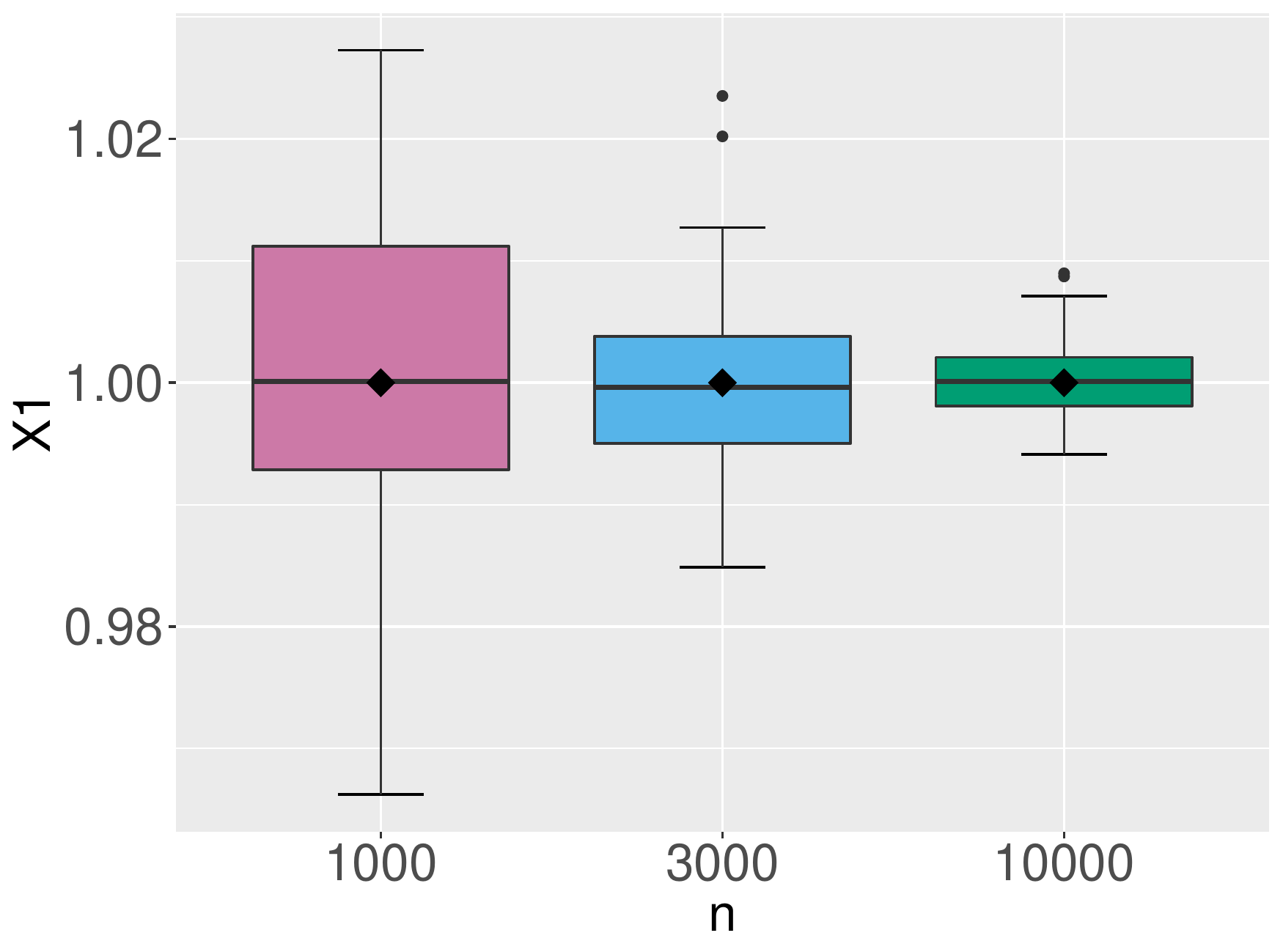}
    \caption{coefficients for $x_1$ of $\mu_2$}
  \end{subfigure}
  \begin{subfigure}{0.5\textwidth}
    \includegraphics[scale = 0.35]{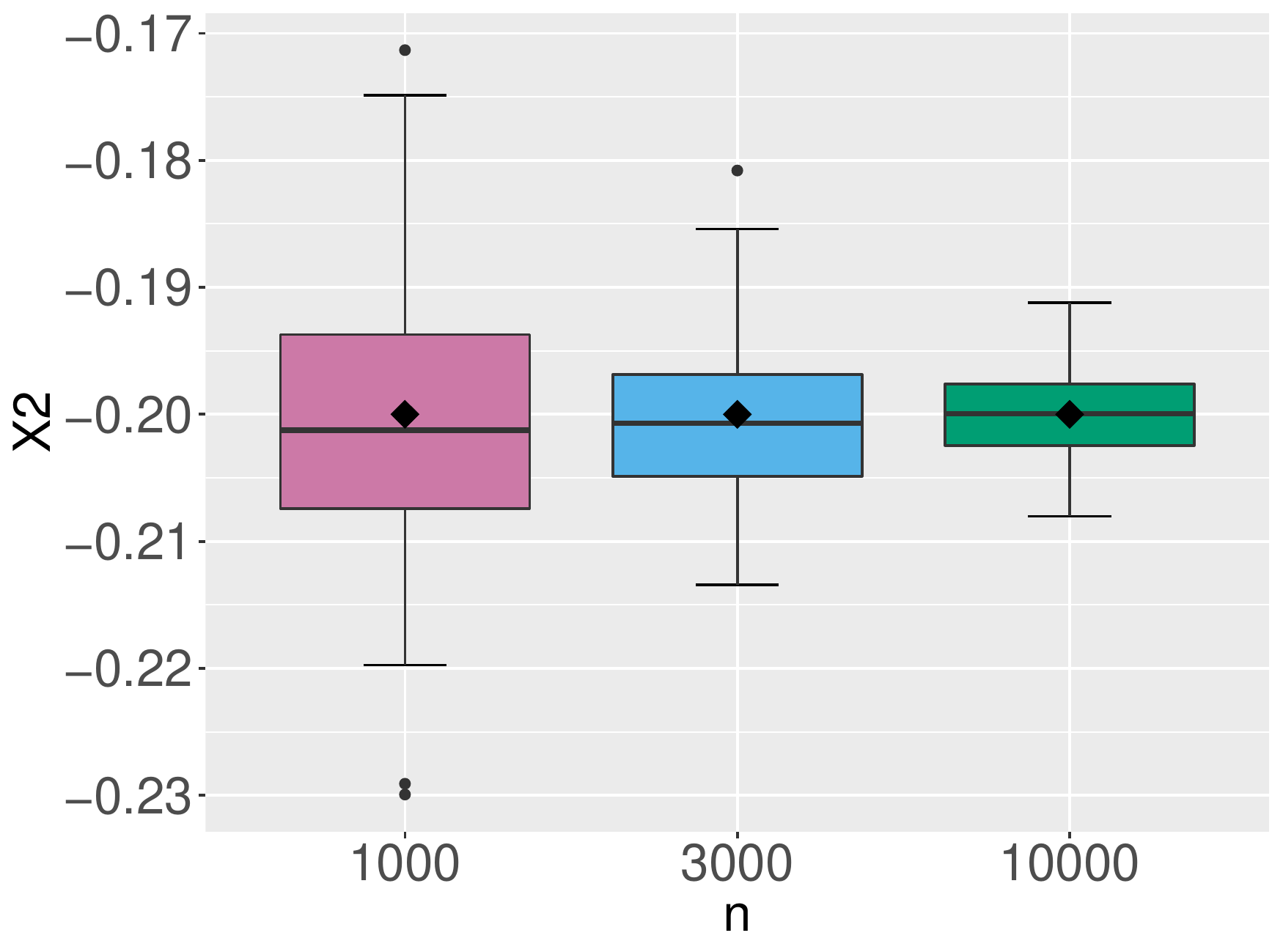}
    \caption{coefficients for $x_2$ of $\mu_2$}
  \end{subfigure}%
  ~
  \begin{subfigure}{0.5\textwidth}
    \includegraphics[scale = 0.35]{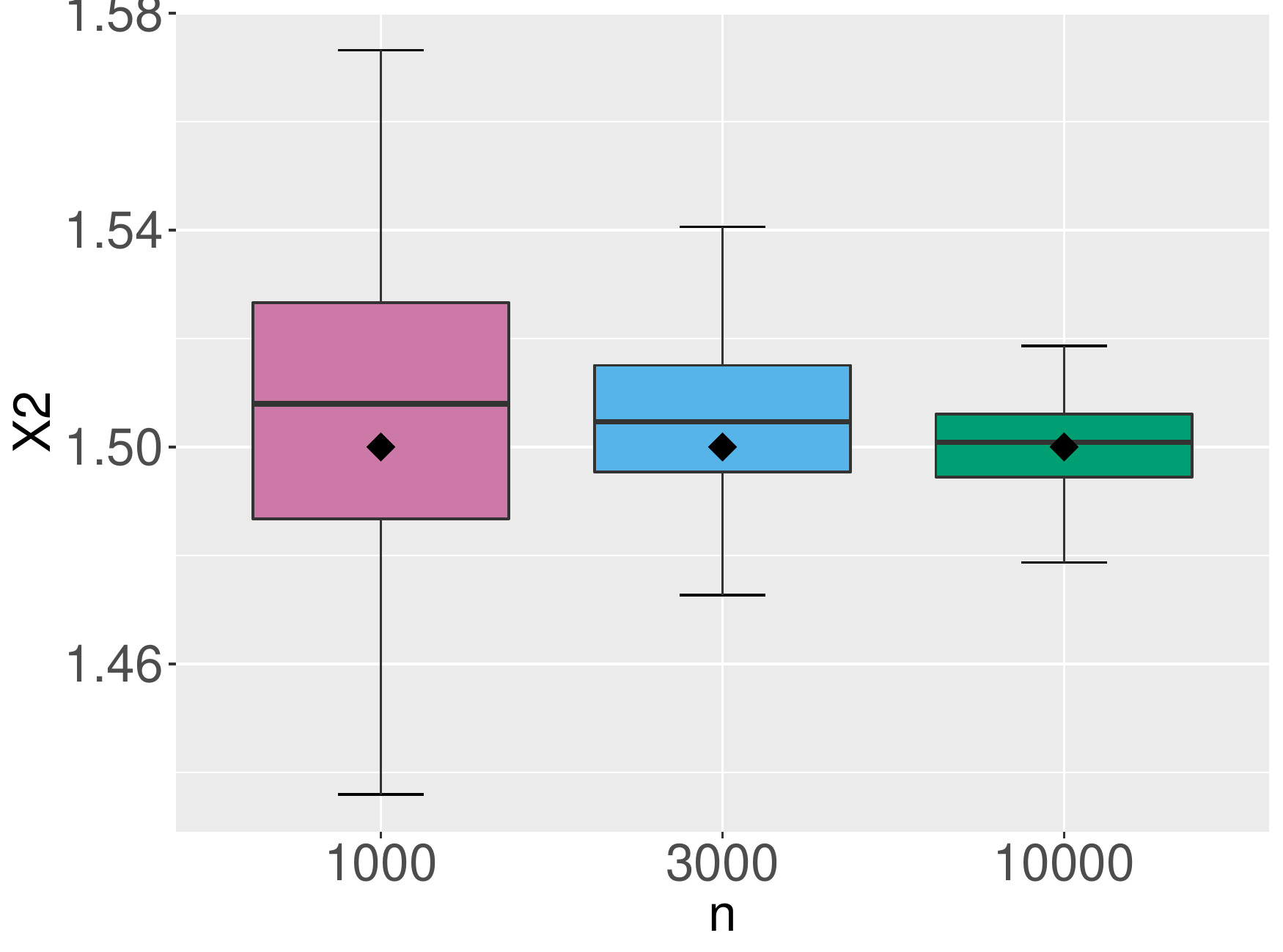}
    \caption{coefficients for $x_3$ of $\sigma_2$}
  \end{subfigure}
  \begin{subfigure}{0.5\textwidth}
    \includegraphics[scale = 0.35]{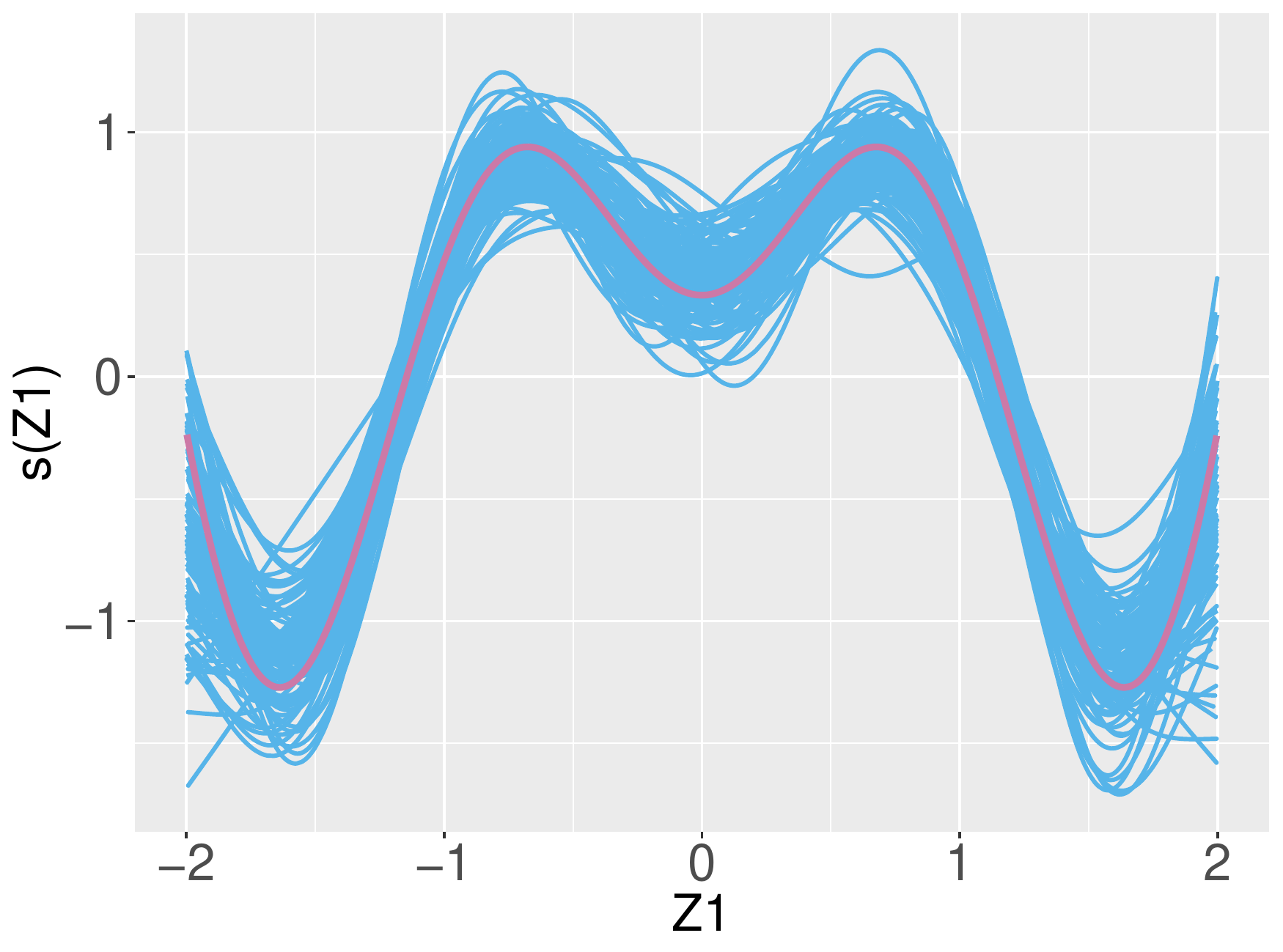}
    \caption{$s_1^{\mu_1}(\nu_1)$}
  \end{subfigure}%
  ~ 
  \begin{subfigure}{0.5\textwidth}
    \includegraphics[scale = 0.35]{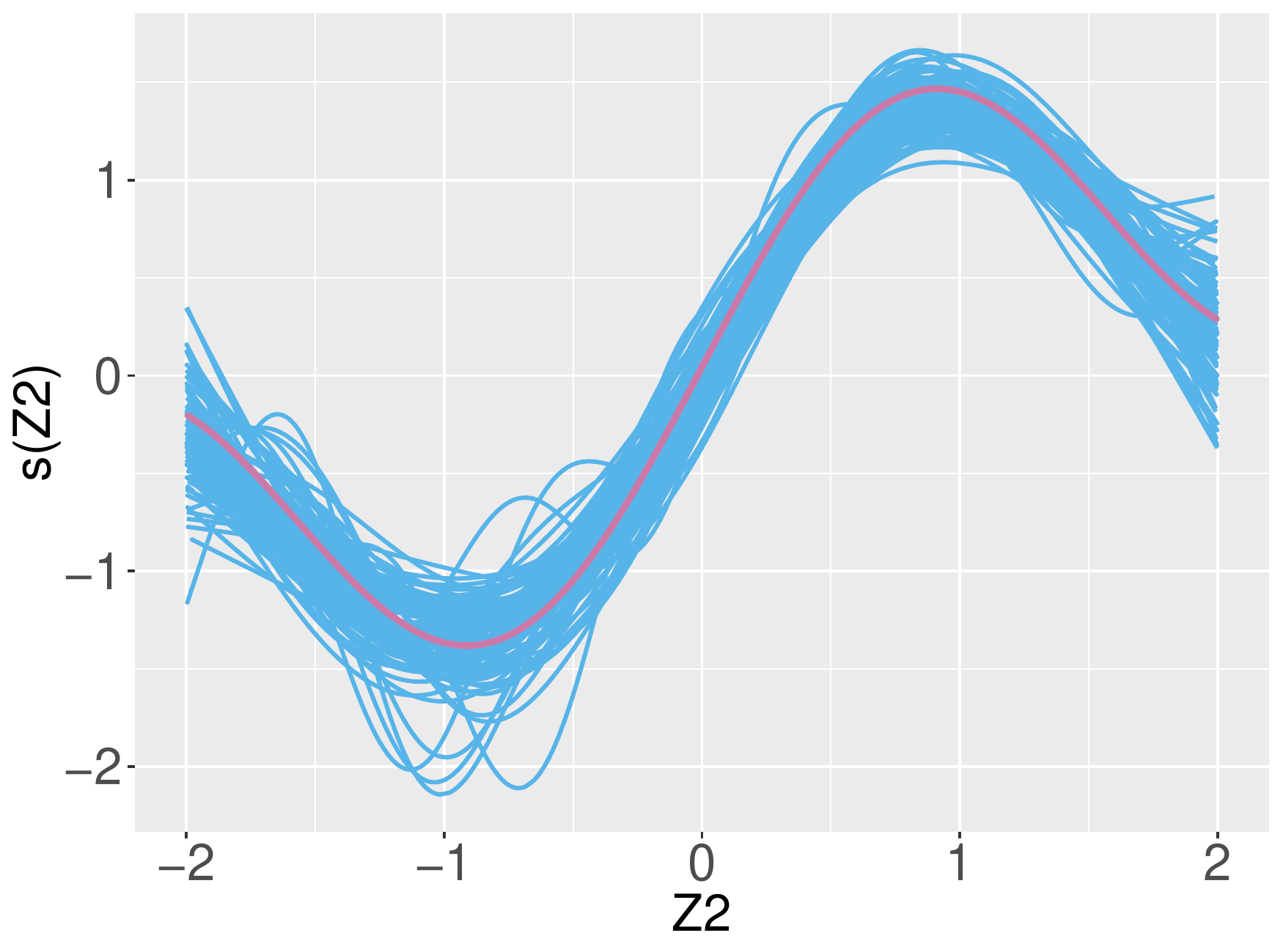}
    \caption{$s_2^{\mu_1}(\nu_2)$}
  \end{subfigure}
  \begin{subfigure}{0.5\textwidth}
    \includegraphics[scale = 0.35]{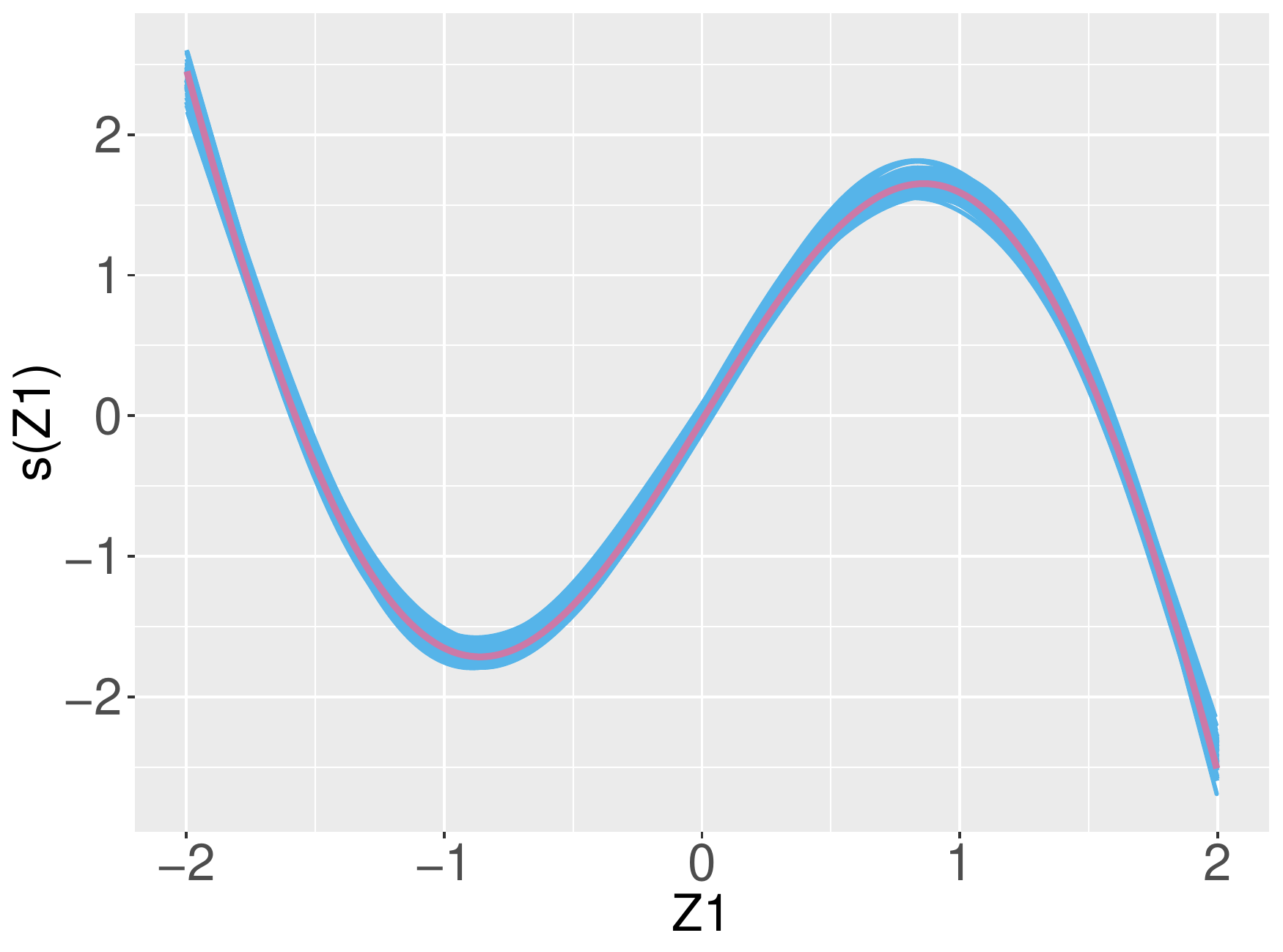}
    \caption{$s_3^{\mu_2}(\nu_1)$}
  \end{subfigure}%
  ~
  \begin{subfigure}{0.5\textwidth}
    \includegraphics[scale = 0.35]{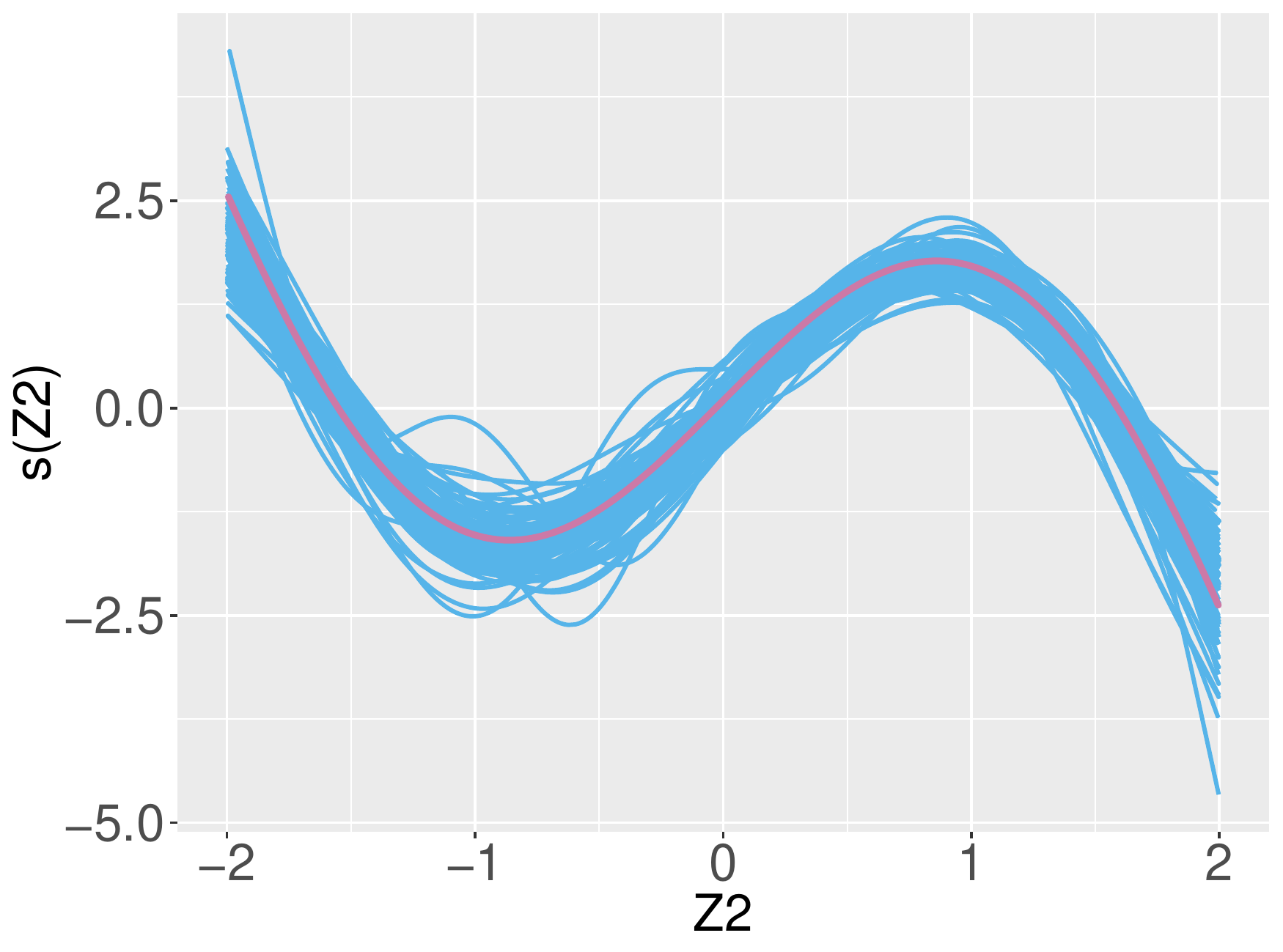}
    \caption{$s_3^{\gamma}(\nu_2)$}
  \end{subfigure}
\caption{Simulation results for linear and non-linear effects (case $n = 1000$) in scenario 4. The boxplots represent the estimated linear coefficients in $N = 100$ iterations. The true values of the coefficients are denoted by the black diamond symbols. The pink solid lines represent the true functions of the non-linear effects.} \label{fig:sim_smooth_scen4}
\end{figure}

\newpage

\section{Additional material for the application case\label{apx_application}}


\begin{figure}[ht]
    \centering
    \includegraphics[scale=0.6]{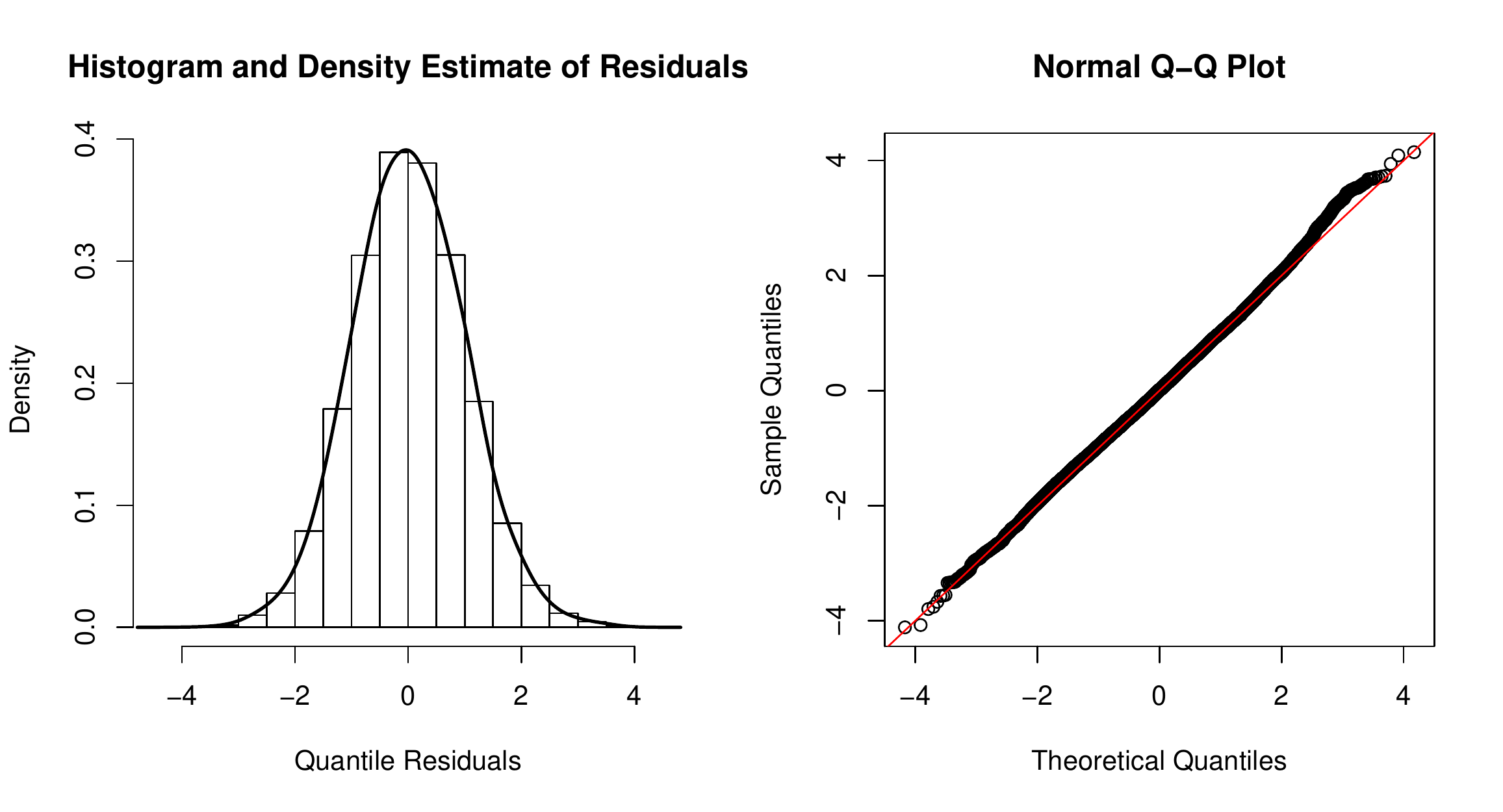}
        \caption{Histogram and normal Q-Q plots for the log-normal continuous margin of the final bivariate copula model.}
    \label{fig:post_check_biv}
\end{figure}

\begin{table}[ht]
\centering
\caption{Effects on education}\label{tab:educ}
\begin{tabular}{rrrrr}
  \hline
 & Estimate & Std. Error & z value & Pr($>$$|$z$|$) \\ 
  \hline
cutoff 1 & -3.170 & 0.071 & -44.348 & 0.000 \\ 
  cutoff 2 & -0.201 & 0.008 & -24.059 & 0.000 \\ 
  cutoff 3 & 0.761 & 0.006 & 129.930 & 0.000 \\ 
  cutoff 4 & 2.618 & 0.006 & 413.481 & 0.000 \\ 
  marital status of household head: married & -0.460 & 0.063 & -7.272 & 0.000 \\ 
  marital status of household head: separated & -0.637 & 0.089 & -7.128 & 0.000 \\ 
  marital status of household head: widowed & -0.448 & 0.073 & -6.151 & 0.000 \\ 
  household head is male & 0.185 & 0.042 & 4.353 & 0.000 \\ 
  urban dummy & 1.042 & 0.022 & 48.383 & 0.000 \\ 
  number of children: 1 & -0.166 & 0.028 & -5.969 & 0.000 \\ 
  number of children: 2 & -0.131 & 0.030 & -4.366 & 0.000 \\ 
  number of children: 3 & -0.168 & 0.043 & -3.936 & 0.000 \\ 
  number of children: 4-7 & -0.332 & 0.067 & -4.949 & 0.000 \\ 
  number of elderly: 1 & -0.003 & 0.031 & -0.111 & 0.912 \\ 
  number of elderly: 2 or 3 & 0.191 & 0.059 & 3.250 & 0.001 \\ 
  religion: Christian & 1.002 & 0.047 & 21.175 & 0.000 \\ 
  religion: Hinduism and other & -0.028 & 0.048 & -0.576 & 0.565 \\ 
   \hline
   \vspace{0.5pt}\\
\multicolumn{5}{p{0.98\linewidth}}{\footnotesize{\textit{Note:} Base categories for marital status is ``not yet married'', for number of children ``no children'', for number of elderly  ``no elderly'', and for religion ``Islam''.}}
\end{tabular}
\end{table}

\begin{table}[ht]
\centering
\caption{Effects on $\mu$ of the income distribution} \label{tab:income}
\begin{tabular}{rrrrr}
  \hline
 & Estimate & Std. Error & z value & Pr($>$$|$z$|$) \\ 
  \hline
Intercept & 14.430 & 0.031 & 465.128 & 0.0000 \\ 
  marital status of household head: married & -0.142 & 0.021 & -6.674 & 0.0000 \\ 
  marital status of household head: separated & -0.140 & 0.029 & -4.834 & 0.0000 \\ 
  marital status of household head: widowed & -0.145 & 0.024 & -5.997 & 0.0000 \\ 
  household head is male & -0.060 & 0.013 & -4.754 & 0.0000 \\ 
  education of household head: primary & 0.127 & 0.022 & 5.734 & 0.0000 \\ 
  education of household head: middle school & 0.245 & 0.023 & 10.713 & 0.0000 \\ 
  education of household head: high school & 0.388 & 0.023 & 16.944 & 0.0000 \\ 
  education of household head: tertiary education & 0.663 & 0.026 & 25.048 & 0.0000 \\ 
  urban dummy & 0.120 & 0.007 & 17.852 & 0.0000 \\ 
  number of children: 1 & -0.259 & 0.008 & -32.190 & 0.0000 \\ 
  number of children: 2 & -0.414 & 0.009 & -47.637 & 0.0000 \\ 
  number of children: 3 & -0.559 & 0.012 & -45.353 & 0.0000 \\ 
  number of children: 4-7 & -0.738 & 0.019 & -39.319 & 0.0000 \\ 
  number of elderly: 1 & -0.189 & 0.009 & -21.082 & 0.0000 \\ 
  number of elderly: 2 or 3 & -0.312 & 0.017 & -18.660 & 0.0000 \\ 
  religion: Christian & 0.059 & 0.016 & 3.818 & 0.0001 \\ 
  religion: Hinduism and other & 0.166 & 0.026 & 6.441 & 0.0000 \\ 
   \hline
      \vspace{0.5pt}\\
\multicolumn{5}{p{0.98\linewidth}}{\footnotesize{\textit{Note:} Base categories for marital status is ``not yet married'', for education ``no schooling'', for number of children ``no children'', for number of elderly  ``no elderly'', and for religion ``Islam''.}}
\end{tabular}
\end{table}

\begin{table}[ht]
\centering
\caption{Effects on $\sigma$ of the income distribution}\label{tab:income_sigma}
\begin{tabular}{rrrrr}
  \hline
 & Estimate & Std. Error & z value & Pr($>$$|$z$|$) \\ 
  \hline
Intercept & -0.426 & 0.022 & -19.394 & 0.000 \\ 
  marital status of household head: married & -0.176 & 0.023 & -7.617 & 0.000 \\ 
  marital status of household head: separated & -0.075 & 0.033 & -2.262 & 0.024 \\ 
  marital status of household head: widowed & -0.116 & 0.026 & -4.458 & 0.000 \\ 
  number of children: 1 & -0.065 & 0.011 & -6.132 & 0.000 \\ 
  number of children: 2 & -0.081 & 0.011 & -7.021 & 0.000 \\ 
  number of children: 3 & -0.066 & 0.017 & -4.000 & 0.000 \\ 
  number of children: 4-7 & -0.076 & 0.026 & -2.958 & 0.003 \\ 
  religion: Christian & 0.046 & 0.019 & 2.394 & 0.017 \\ 
  religion: Hinduism and other & 0.035 & 0.026 & 1.310 & 0.190 \\ 
   \hline
      \vspace{0.5pt}\\
\multicolumn{5}{p{0.98\linewidth}}{\footnotesize{\textit{Note:} Base categories for marital status is ``not yet married'', for number of children ``no children'', for number of elderly  ``no elderly'', and for religion ``Islam''.}}
\end{tabular}
\end{table}

\begin{table}[ht]
\centering
\caption{Effects on the copula parameter}\label{tab:gamma}
\begin{tabular}{rrrrr}
  \hline
 & Estimate & Std. Error & z value & Pr($>$$|$z$|$) \\ 
  \hline
Intercept & -0.179 & 0.045 & -3.998 & 0.000 \\ 
  marital status of household head: married & 0.143 & 0.034 & 4.219 & 0.000 \\ 
  marital status of household head: separated & 0.170 & 0.048 & 3.517 & 0.000 \\ 
  marital status of household head: widowed & 0.177 & 0.039 & 4.543 & 0.000 \\ 
  education of household head: primary & 0.061 & 0.033 & 1.877 & 0.060 \\ 
  education of household head: middle school & 0.104 & 0.036 & 2.906 & 0.004 \\ 
  education of household head: high school & 0.161 & 0.033 & 4.853 & 0.000 \\ 
  education of household head: tertiary education & 0.120 & 0.036 & 3.337 & 0.001 \\ 
  urban dummy & 0.101 & 0.013 & 7.816 & 0.000 \\ 
  number of children: 1 & 0.018 & 0.016 & 1.128 & 0.259 \\ 
  number of children: 2 & 0.041 & 0.017 & 2.411 & 0.016 \\ 
  number of children: 3 & 0.084 & 0.025 & 3.391 & 0.001 \\ 
  number of children: 4-7 & 0.063 & 0.036 & 1.746 & 0.081 \\ 
  number of elderly: 1 & 0.075 & 0.017 & 4.402 & 0.000 \\ 
  number of elderly: 2 or 3 & 0.113 & 0.030 & 3.748 & 0.000 \\ 
   \hline
      \vspace{0.5pt}\\
\multicolumn{5}{p{0.98\linewidth}}{\footnotesize{\textit{Note:} Base categories for marital status is ``not yet married'', for education ``no schooling'', for number of children ``no children'', and for number of elderly  ``no elderly''.}}
\end{tabular}
\end{table}


\begin{figure}
\centering
\includegraphics[scale=0.6]{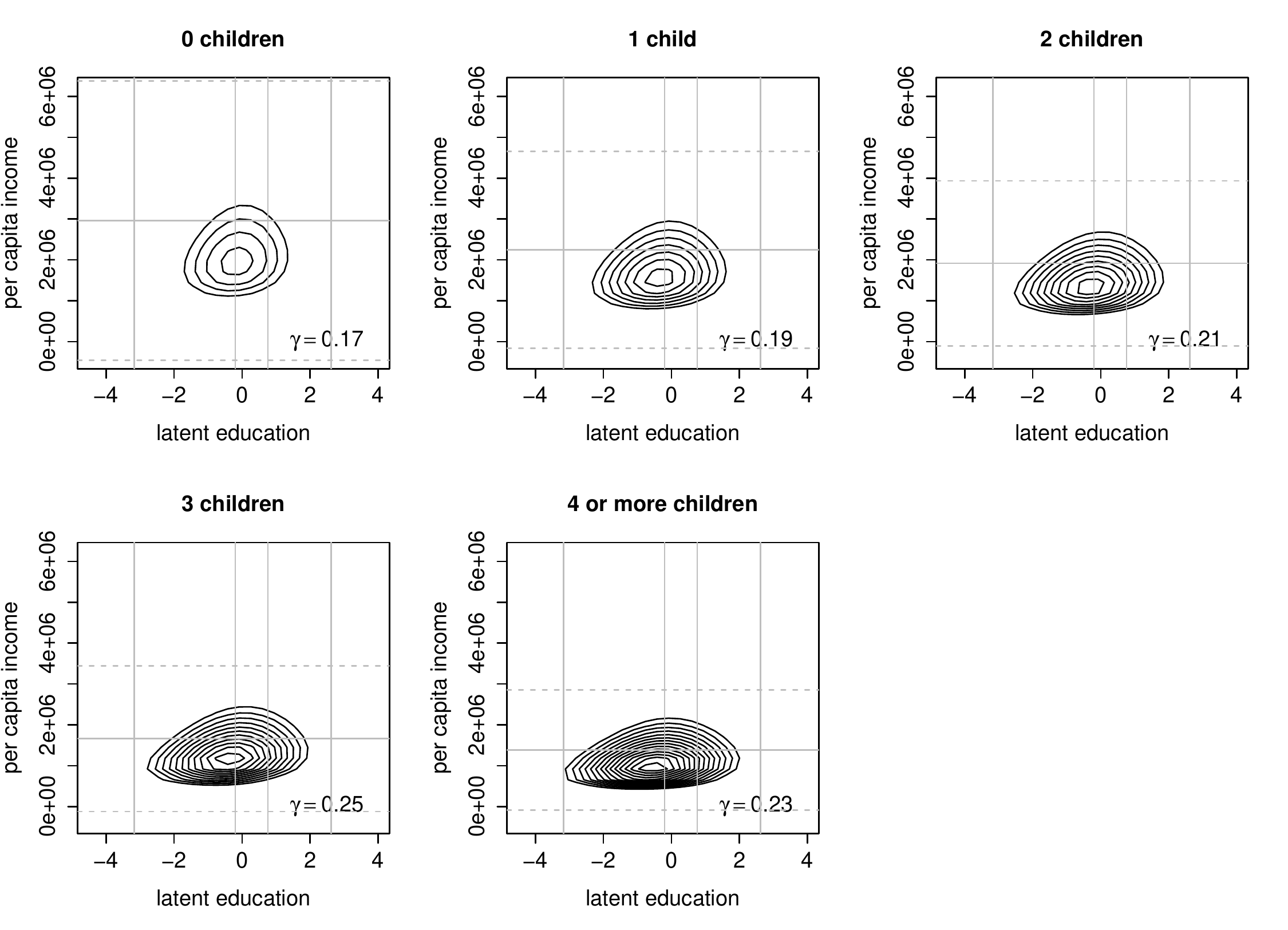} \\
\includegraphics[scale=0.6]{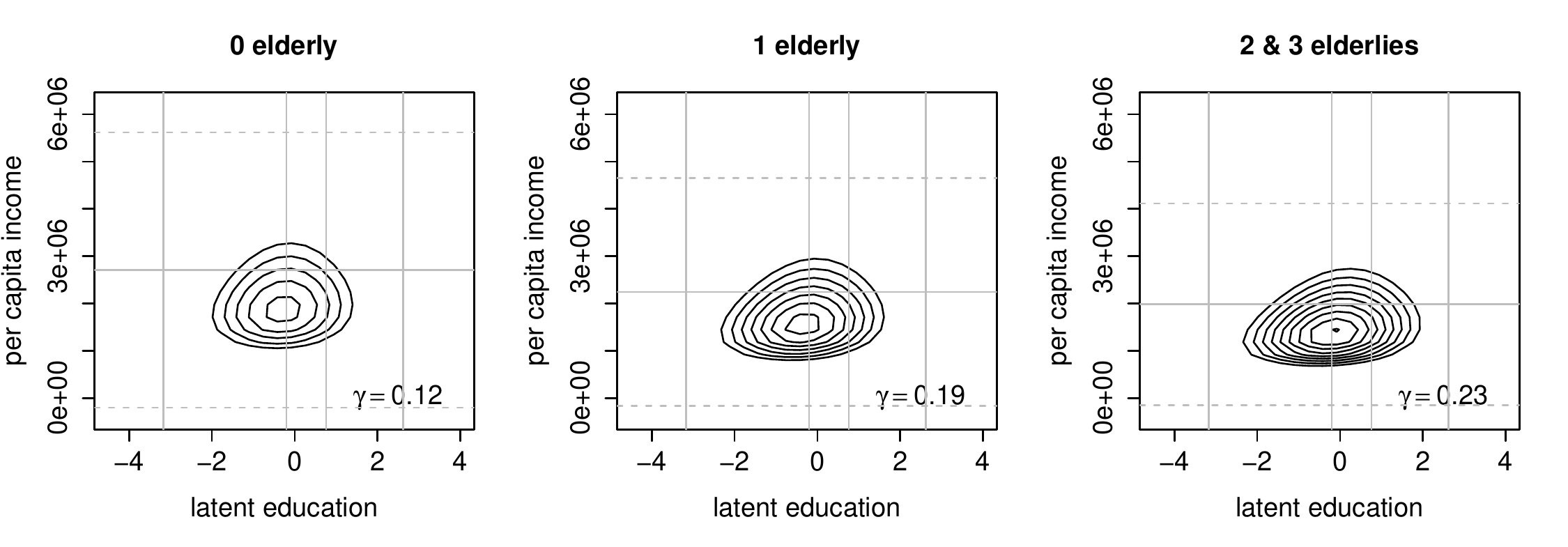} \\
\vskip 12pt
\begin{minipage}{\textwidth}
    \captionof{figure}{Contour plots for (education, income)' and a Gaussian copula by different numbers of children and elderly living in the same household. Contour lines of densities are at
    levels from 0.00000005 to 0.00000025 in 0.00000001 steps. The vertical straight lines represents the cut off values for the education categories, horizontal straight lines are the consumption average, and dashed horizontal line are at two standard deviations around this average. }
\end{minipage}
\end{figure}

\clearpage
\section{Software}

We incorporated the models proposed this paper into the \texttt{GJRM} package \citep{GJRM} in \texttt{R} \citep{R.2019}. A mixed ordered-continuous model is called by setting the option \texttt{ordinal = TRUE}. The main fitting function, \texttt{gjrm()}, is very easy to use in that its syntax follows those of linear models, generalized linear models, or generalized additive models. The function \texttt{CopulaCLM()} is called internally to fit this specific model. An example of model specification is given below.   

\begin{verbatim}
eq.educ <- educ_att ~ s(age) + as.factor(hhmarstat) + as.factor(hhmale) + 
                      as.factor(urban) + as.factor(num_child) + 
                      as.factor(elderly) + as.factor(relig)  

eq.mu   <- pce.defl ~ s(age) + as.factor(hhmarstat) + as.factor(hhmale) + 
                      as.factor(urban) + as.factor(num_child) + 
                      as.factor(elderly) + as.factor(relig) + 
                      as.factor(hheduc) + s(prov, bs = "mrf", xt = xt1, k = 15)

eq.si   <-          ~ s(age) + as.factor(hhmarstat) + 
                      as.factor(num_child) + as.factor(elderly) + 
                      as.factor(relig) + s(prov, bs = "mrf", xt = xt1, k = 15)
 
eq.theta <-         ~ s(age) + as.factor(hhmarstat) + 
                      as.factor(urban) + as.factor(num_child) + 
                      as.factor(elderly) + as.factor(hheduc) +
                      s(prov, bs = "mrf", xt = xt1, k = 15)

form.list <- list(eq.educ, eq.mu, eq.si, eq.theta)

mod.edu <- gjrm(form.list, data = na.omit(df), ordinal = TRUE, 
                     Model = "B", BivD = "N", margins = c("logit", "LN"), 
                     drop.unused.levels = FALSE, gamlssfit = TRUE) 
\end{verbatim}

Te user first specifies the four equations for the model's parameters of the marginal distributions and of the copula, which are stored in the list \texttt{form.list}. Continuous variables enter the model specifications via smooth effects \texttt{s()} represented (by default) via thin-plate regression splines (argument \texttt{bs = "tp"}) with ten basis function and second order derivative penalties. Spatial effects of the provinces are modeled using Markov random fields with neighbourhood structure \texttt{xt1} and 15 knots (argument \texttt{bs = "mrf"}). Argument \texttt{ Model = "B"} specifies that a bivariate model will be estimated, \texttt{ margins = c("logit", "LN")} gives the marginal distributions and \texttt{BivD = "N"} specifies the Gaussian copula. The argument \texttt{ordinal} must be set to \texttt{TRUE} in order to fit a mixed ordered-continuous model and the ordinal outcome \texttt{educ\_att} must be numeric. The optional argument \texttt{gamlssfit = TRUE} uses starting values obtained from a univariate gamlss and \texttt{drop.unused.levels = FALSE} is needed because not all of the provinces specified via the Markov random fields have observations in the data frame \texttt{df}. Functions \texttt{summary()}, \texttt{plot()}, \texttt{AIC()} and \texttt{BIC()} can employed in the usual manner. It is advisable to use \texttt{post.check()} after fitting the model to produce plots of normalized quantile residuals. More details, options, and the available choices for the marginal distributions and copula functions can be found in the documentation of the \texttt{GJRM} package.

\end{appendices}

\end{document}